\newtheorem{thm}{Theorem}[section]
\newtheorem{corollary}[thm]{Corollary}
\newtheorem{lem}[thm]{Lemma}
\newtheorem{lemma}[thm]{Lemma}
\newtheorem{defn}[thm]{Definition}
\newtheorem{fact}[thm]{Fact}
\newtheorem{assumption}[thm]{Assumption}
\newtheorem{remark}[thm]{Remark}
\def\poly{\operatorname{poly}}
\def\polylog{\operatorname{polylog}}
\global\long\def\cut{c^+}
\newcommand{\oyy}{{{\bar{w}}}} %
\newcommand{\hF}{\widehat{F}}
\newcommand{\hy}{\widehat{w}}
\newcommand{\hyy}{\widehat{w}}
\newcommand{\tx}{\widetilde{x}}
\newcommand{\tO}[1]{\widetilde{O}\left(#1\right)}
\newcommand{\txx}{{\widetilde{{x}}}} %
\newcommand{\yy}{{{w}}} %
\newcommand{\tyy}{{\widetilde{{w}}}}%
\newcommand{\ty}{{\widetilde{{w}}}} %
\newcommand{\zz}{{{z}}} %
\newcommand{\zerov}{{0}} %
\newcommand{\ww}{{\mathit{w}}} %
\newcommand{\xx}{{{x}}} %
\newcommand{\eps}{\varepsilon}
\newcommand{\m}{r}
\newcommand{\C}{\alpha}
\newcommand{\smo}{L}
\newcommand{\str}{\sigma}
\newcommand{\fmax}{F_{\max}}
\newcommand{\evo}{\textnormal{EO}} \newcommand{\opto}{\mathcal{O}} \newcommand{\tmf}{T_{\mathrm{maxflow}}}
\newcommand{\lmin}{\lambda_{\mathrm{min}}}
\newcommand{\lmax}{\lambda_{\mathrm{max}}}
\def\ln{\log}
\newif\ifdebug
\newcommand\redout{\bgroup\markoverwith{\textcolor{red}{\rule[0.5ex]{2pt}{0.8pt}}}\ULon}
\newcommand\redout{}
\newcommand{\RETURN}{\STATE {\bf return} }
\begin{document}
\clubpenalty=10000
\widowpenalty = 10000

\title{Decomposable Submodular Function Minimization\\ via Maximum Flow}
\author{
Kyriakos Axiotis\thanks{MIT, \tt{kaxiotis@mit.edu}}
\and
Adam Karczmarz\thanks{University of Warsaw, \tt{a.karczmarz@mimuw.edu.pl}}
\and
Anish Mukherjee\thanks{University of Warsaw, \tt{anish343@gmail.com}} 
\and 
Piotr Sankowski\thanks{University of Warsaw, \tt{sank@mimuw.edu.pl}}
\and
Adrian Vladu\thanks{CNRS \& IRIF, Universit\'{e} de Paris, \tt{vladu@irif.fr}}
}

\date{}
\maketitle

\begin{abstract}
This paper bridges discrete and continuous optimization approaches for decomposable submodular function minimization, in both the standard and parametric settings. 

We provide improved running times for this problem by reducing it to a number of calls to a maximum flow oracle. When each function in the decomposition acts on $O(1)$ elements of the ground set $V$ and is polynomially bounded, our running time is up to polylogarithmic factors equal to that of solving maximum flow in a sparse graph with $O(\vert V \vert)$ vertices and polynomial integral capacities.

We achieve this by providing a simple iterative method which can optimize to high precision any convex function defined on the submodular base polytope, provided we can efficiently minimize it on the base polytope corresponding to the cut function of a certain graph that we construct. We solve this minimization problem by lifting the solutions of a parametric cut problem, which we obtain via a new efficient combinatorial reduction to maximum flow. 
This reduction is of independent interest and implies some previously unknown bounds for the parametric minimum $s,t$-cut problem in multiple settings.
\end{abstract}

\section{Introduction}
A significant amount of work has been dedicated to the study of submodular functions. 
While this topic has garnered a lot of excitement from the theory community due to its the multiple connections to diverse algorithmic areas~\cite{lovasz1983submodular, grotschel1981ellipsoid}, on the practical side minimizing submodular functions has been intensively used to model discrete problems in machine learning. MAP inference in Markov Random Fields~\cite{kohli2009robust}, image segmentation~\cite{arora2012generic, shanu2016min}, clustering~\cite{narasimhan2007local}, corpus extraction problems~\cite{lin2011optimal} are just a few success stories of submodular minimization.

Polynomial time algorithms for this problem have been known ever since the 80's~\cite{grotschel1981ellipsoid}, and they have seen major running time improvements in more recent years~\cite{schrijver2000combinatorial, iwata2003faster, fleischer2003push, orlin2009faster, lee2015faster, chakrabarty2017subquadratic}. However, the massive scale of the problems that use submodular minimization nowadays drives the need for further developments.

One great advantage offered by the submodular functions that occur in practice is that they are \emph{structured}. For example, in many common cases (hypergraph cuts~\cite{veldt2020minimizing}, covering functions~\cite{stobbe2010efficient}, MAP inference~\cite{fix2013structured, kohli2009robust, vicente2009joint}) these can be decomposed into sums of simple submodular functions defined on small subsets. For these instances, prior work~\cite{jegelka2013reflection, nishihara2014convergence, ene2015random, ene2017decomposable, kumar2019fast} has focused on providing efficient algorithms in the regime where the functions in the decomposition admit fast optimization oracles.

Notably, many of these recent developments have leveraged a mix of ideas coming from both discrete and continuous optimization. In particular, Ene et al.~\cite{ene2017decomposable} present algorithms for decomposable function minimization that are based on both continuous methods (i.e. gradient descent) and discrete algorithms, as the authors employ a version of the preflow-push algorithm for maximum flow~\cite{goldberg1988new}. As this work was paralleled by multiple improvements to the running time for maximum flow~\cite{madry2013navigating, madry2016computing, liu2020faster, kathuria2020unit, brand2021minimum, flow-sparse}, most of which stemmed from innovations in convex optimization, it seemed plausible that the same new optimization techniques could be helpful for improving the running times of other fundamental problems in combinatorial optimization, including submodular function minimization. In this context, a particularly intriguing question emerged:

\begin{center}
\emph{Can we leverage the techniques used to obtain faster algorithms for maximum flow to provide faster algorithms for submodular function minimization?}
\end{center}

We answer this question in the affirmative, by showing how to solve decomposable submodular function minimization using \emph{black-box access} to any routine that can compute the maximum flow in a capacitated directed graph. To compare the running times, in the case where all the functions in the decomposition act on $O(1)$ elements of the ground set and are polynomially bounded  (such as the case of a hypergraph cut function, with $O(1)$ sized hyperedges), our algorithm has -- up to polylogarithmic factors -- the same running time as that of computing maximum flow in a sparse graph with $O(\vert V \vert)$ vertices, and polynomial integral capacities~\cite{goldberg1998beyond, brand2021minimum, flow-sparse}.

As it turns out, to achieve this it is not sufficient to directly use off-the-shelf maximum flow algorithms. Instead, our approach is based on solving submodular minimization in the more general parametric setting, where we further parametrize the problem with an additional time-dependent penalty term on the elements in the set, and want to simultaneously solve \emph{all} the problems in this family. In turn, our reduction requires solving the parametric minimum cut problem, which has been intensely studied in the classical graph theoretic literature~\cite{GalloGT89, McCormick99, TarjanWZZM06, GranotMQT12}. In this setting, which is essentially a particular case of parametric submodular minimization, the capacities of certain arcs in the graph evolve in a monotonic fashion.

While some of the existing work on parametric cuts and flows does provide efficient algorithms via reductions to maximum flow~\cite{TarjanWZZM06}, the type of parametric capacities it supports does not cover the requirements for our more general scenario. Therefore, we develop a new efficient algorithm for computing  parametric cuts under a broad range of parametric capacities. Our algorithm is nearly optimal from the perspective of weakly-polynomial time algorithms, since its running time matches (up to polylogarithmic factors involving certain parameters) that of the fastest maximum flow algorithm in a directed graph with integer capacities. In addition, our reduction also provides novel improvements in several other regimes, involving the strongly polynomial case, and that of planar graphs, both of which may be of independent interest.

\subsection{Our Results}
In this paper we establish further connections between discrete and continuous optimization to provide an efficient algorithm for solving the decomposable submodular function minimization problem in the more general parametric setting. Our algorithm is at its core based on a continuous optimization method, but whose progress steps are driven by a new combinatorial algorithm we devise for the parametric cut problem. In this sense, our approach leverages the paradigm of combinatorial preconditioning from scientific computing literature~\cite{spielman2004smoothed, bern2006support, koutis2009combinatorial, toledo2010combinatorial}.

To properly state our main result, we need to introduce some notation. Let $V = \{1,\dots,n\}$ and let $F:2^V \rightarrow \mathbb{N}$ a submodular set function with the special property that 
\[
F(S) = \sum_{i=1}^\m F_i(S),\quad\textnormal{for all }S \subseteq V\,,
\]
where each $F_i: 2^V \rightarrow \mathbb{N}$ is a submodular set function acting on a subset $V_i \subseteq V$ of elements, in the sense that $F_i(S) = F_i(S \cap V_i)$ for all $S\subseteq V$. Let $\evo_i$ be the time required to evaluate $F_i(S)$ for any $S\subseteq V_i$, and let $\opto_i$ be the time required to minimize $F_i(S) + w(S)$ over $V_i$, where $w$ is any linear function, and suppose that $\max_{S\subseteq V}F(S) = n^{O(1)}$.
Furthermore, for each $i \in V$ let $\psi_i : \mathbb{R} \rightarrow \mathbb{R}$ be a strictly convex function satisfying $n^{-O(1)} \leq \vert \psi_i''(x) \vert \leq n^{O(1)}$ and $\vert \psi_i'(0) \vert \leq n^{O(1)}$. 

Then our main theorem is the following.
\begin{thm}\label{thm:main}
There is an algorithm which for all $\lambda \in \mathbb{R}$ simultaneously optimizes the objective
\[
\min_{S\subseteq V} F(S) + \sum_{i \in S} \psi'_i(\lambda)
\]
by returning a vector $x$ such that for any $\lambda\in \mathbb{R}$ the set
$S^\lambda = \{u\ :\ x_u \geq \lambda\}$ satisfies
\begin{align*}
F(S^\lambda) + \sum\limits_{u\in S^\lambda} \psi'(\lambda) \leq 
\underset{S\subseteq V}\min\, F(S) + \sum\limits_{u\in S} \psi'(\lambda) + \eps\,.
\end{align*}
Furthermore, if $\tmf(n,m)$ is the time required to compute the maximum flow in a directed graph with polynomially bounded integral capacities, then our algorithm runs in time
\begin{align*}
\widetilde{O}\bigg( \max_i \vert V_i \vert^2 \bigg( & \sum_{i=1}^{\m} \vert V_i \vert^2 \opto_i
+ \tmf\bigg(n, n+\sum_{i=1}^\m \vert V_i \vert^2\bigg) \bigg) \ln \frac{1}{\epsilon}\bigg)\,.
\end{align*}
\end{thm}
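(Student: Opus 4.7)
The plan is to reduce the entire parametric family to a single separable strongly convex minimization on the base polytope $B(F)$ and attack that minimization via a preconditioned iterative method whose inner step is a parametric minimum cut. First I would invoke the classical equivalence between parametric submodular function minimization and separable convex optimization on the base polytope: if $x^\star$ is an approximate minimizer of $\Phi(x) := \sum_i \psi_i(x_i)$ over $B(F)$, then for every $\lambda$ the superlevel set $\{u : x^\star_u \ge \lambda\}$ is an approximate minimizer of $F(S) + \sum_{i \in S} \psi_i'(\lambda)$, with the additive error degrading controllably in the $\ell_2$-error of $x^\star$. Since each $\psi_i''$ is polynomially bounded above and below, $\Phi$ is simultaneously smooth and strongly convex on $B(F)$, so any linearly convergent outer method produces the required $\log(1/\eps)$ factor.

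Second, I would build the iterative method via the combinatorial-preconditioning paradigm announced in the introduction. I would construct a directed graph $G$ whose cut function approximates $F$ by placing, for each $F_i$, a small gadget on $V_i$ with $O(|V_i|^2)$ arcs whose cut polytope dominates $B(F_i)$ within a factor of $O(|V_i|^2)$; summing over $i$ yields $G$ with $O(\sum_i |V_i|^2)$ arcs and a base polytope approximating $B(F)$ within $O(\max_i |V_i|^2)$. A single iteration at the current iterate $x$ then consists of forming a local quadratic model of $\Phi$ (using per-gadget linear optimization on $B(F_i)$ to fit it faithfully over the cut base polytope), minimizing this model over that cut base polytope, and taking a carefully damped step to stay inside $B(F)$. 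Standard preconditioned-descent analysis gives linear convergence with rate $1 - 1/O(\max_i |V_i|^2)$, so $\widetilde{O}(\max_i |V_i|^2 \log(1/\eps))$ outer iterations suffice.

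Third, the inner subproblem---minimizing a strictly convex separable function over the base polytope of the cut function of $G$---I would handle by invoking the same Fujishige-type equivalence in reverse. This reduces to a parametric minimum $s,t$-cut problem on $G$ in which source and sink capacities evolve monotonically with the parameter, and one reads off the minimizer from the resulting chain of cuts. For this I would plug in the new parametric-cut-to-max-flow reduction developed earlier in the paper, which resolves the entire family in $\widetilde{O}(\tmf(n, n + \sum_i |V_i|^2))$ time. The per-iteration cost is thus $\widetilde{O}(\sum_i |V_i|^2 \opto_i + \tmf(n, n + \sum_i |V_i|^2))$, and multiplying by the outer iteration count gives the stated running time.

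The main obstacle is the construction and analysis of the cut surrogate $G$: one must show that a cut-based approximation with only $O(\sum_i |V_i|^2)$ arcs can match the full submodular base polytope $B(F)$ up to a distortion as tight as $O(\max_i |V_i|^2)$, since this distortion simultaneously dictates the outer iteration count \emph{and} the size of the max-flow instance inside. Conceptually this is the combinatorial-preconditioner step, and the analysis must carefully trade local per-$F_i$ approximation against a global spectral-like equivalence of the two base polytopes in order to unlock both the running-time bound and the simultaneous-for-all-$\lambda$ guarantee promised by the thresholding statement.
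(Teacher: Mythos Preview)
Your skeleton matches the paper's: reduce the parametric family to separable convex optimization on $B(F)$, precondition each step by a graph whose cut base polytope approximates the submodular one, solve the inner problem via parametric min $s,t$-cut, and iterate $\widetilde O(\max_i|V_i|^2\log(1/\eps))$ times. Two mechanisms, however, are misidentified in ways that matter.

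First, a primal/dual slip. The optimization over $B(F)$ is in the \emph{dual} variable $w$ with objective $\sum_i\psi_i^*(-w_i)$, not $\sum_i\psi_i(x_i)$; the vector that is thresholded is the primal $x$ with $x_i=(\psi_i^*)'(-w_i)$. The paper needs the conversion chain in Appendix~\ref{sec:conversion-lemmas} (Lemma~\ref{lem:dual-to-primal-error-norm} then Lemma~\ref{lem:primal-error-norm-to-set}, packaged as Corollary~\ref{cor:dual2disc}) to turn $\eps$-suboptimality of $w$ into the per-$\lambda$ guarantee on level sets of $x$. Your statement ``superlevel set of an approximate minimizer of $\Phi$ over $B(F)$'' elides this.

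Second, and more substantively, the cut approximation is not a static preconditioner built once for $F$. At every iteration the paper \emph{recenters}: given the current decomposed iterate $w=\sum_i w^i$ with $w^i\in B(F_i)$, it forms the residual functions $\hat F_i(S)=F_i(S)-w^i(S)$, which are non-negative with $\hat F_i(\emptyset)=\hat F_i(V_i)=0$ precisely because $w^i\in B(F_i)$, and only then applies the graph-cut approximation (Lemma~\ref{lem:graph-approx}) to each $\hat F_i$. This is what produces the sandwich $\tfrac{1}{\alpha}(B(F)-w)\subseteq B(G)\subseteq B(F)-w$ \emph{centered at the current iterate}, which is exactly the hypothesis driving the $(1-1/\alpha)$ contraction in Lemma~\ref{lem:progress_lemma}. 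A one-time approximation of $B(F)$ does not give you a sandwiching polytope around an arbitrary iterate $w$, so it cannot yield linear convergence by the convexity argument you sketch. Relatedly, there is no ``local quadratic model'' and no ``damped step'': the iteration minimizes the \emph{original} objective $\tilde w\mapsto\psi^*(-(w+\tilde w))$ over $B(G)$ (via Lemma~\ref{lem:cut2dual}), and $w'=w+\tilde w$ is automatically feasible because $B(G)\subseteq B(F)-w$. The $\opto_i$ calls you allude to are spent rebuilding the graph each round (computing the pairwise minima $\min_{u\in A,\,v\notin A}\hat F_i(A)$ in Algorithm~\ref{alg:graph_approx}), not on fitting a quadratic.
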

To better understand this result, let us consider the case where each submodular function in the decomposition acts on a small number of elements, i.e. $\vert V_i \vert = O(1)$. In this case we have the following corollary:
\begin{corollary}
If each function $F_i$ in the decomposition acts on $O(1)$ elements of the ground set, then we can the solve parametric submodular minimization problem to $\epsilon$ precision in time
\[
\widetilde{O}\left( \tmf(n, n+r)\ln \frac{1}{\epsilon}\right)\,.
\]
\end{corollary}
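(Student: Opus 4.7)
The plan is to obtain the corollary by directly substituting $\vert V_i \vert = O(1)$ into the running time bound of Theorem~\ref{thm:main}. First I would observe that since each $F_i$ acts on a constant-sized ground set $V_i$, both the evaluation cost $\evo_i$ and the linear-perturbed minimization cost $\opto_i$ can be taken to be $O(1)$: one simply enumerates all $2^{\vert V_i\vert} = O(1)$ subsets of $V_i$ and picks the minimizer of $F_i(S)+w(S)$. Consequently, $\max_i \vert V_i\vert^2 = O(1)$ and $\sum_{i=1}^{\m} \vert V_i\vert^2\,\opto_i = O(\m)$, while the second argument of the max-flow oracle becomes $n + \sum_{i=1}^{\m} \vert V_i\vert^2 = n + O(\m)$.

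Plugging these estimates into the bound of Theorem~\ref{thm:main} yields a running time of
\[
\widetilde{O}\bigl( (\m + \tmf(n, n+\m))\ln(1/\epsilon)\bigr).
\]
To collapse this to the claimed form, I would use the trivial lower bound $\tmf(n, n+\m) = \Omega(n+\m)$, which holds because any maximum flow algorithm must at least read its input graph. Hence the additive $\m$ term is absorbed into $\tmf(n, n+\m)$, giving the desired bound $\widetilde{O}(\tmf(n, n+\m)\ln(1/\epsilon))$.

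There is no real obstacle here: the corollary is a bookkeeping consequence of the main theorem once one notes that constant-sized submodular pieces admit trivial $O(1)$-time oracles and that the max-flow time dominates linear reading costs. The only minor point worth double-checking is the implicit assumption, consistent with the framing of the excerpt, that $F_i$ being ``polynomially bounded'' together with $\vert V_i\vert = O(1)$ lets us treat each $F_i$ as a precomputable constant-size table, so that the hidden polylogarithmic factors remain untouched by the substitution.
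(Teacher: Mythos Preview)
Your proposal is correct and matches the paper's intended derivation: the corollary is simply Theorem~\ref{thm:main} specialized to $\vert V_i\vert = O(1)$, and the paper presents it as such without further argument. Your justifications that $\opto_i = O(1)$ via brute-force enumeration over $2^{\vert V_i\vert}$ subsets and that the residual $O(\m)$ term is absorbed by $\tmf(n,n+\m)=\Omega(n+\m)$ are exactly the right observations to make the substitution rigorous.
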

While our statements concern the parametric setting, it is easy to use them to recover the solution to the standard submodular minimization problem. Simply by letting $\psi'_i(t) = t$ for all $i$, and thresholding the returned vector at $0$ we obtain the desired result.
Using Goldberg-Rao~\cite{goldberg1998beyond} or the current state of the art algorithms for maximum flow~\cite{flow-dense, flow-sparse}, we see that this significantly improves over all the previous algorithms for decomposable submodular minimization, in the regime where all sets $V_i$ are small. Following~\cite{ene2017decomposable} it has remained widely open whether algorithms improving upon the $\widetilde{O}(\min\{n^2, nr\} \ln^{O(1)}(1/\epsilon))$ running time exist, and it has been conjectured that faster running times could be obtained by leveraging the newer techniques for graph algorithms based on interior point methods.

Using~\cite{flow-dense, flow-sparse}, we obtain a running time of $\widetilde{O}\left( \min\{n^{3/2}+r, (n+r)^{3/2 - 1/328} \}  \ln{1/\epsilon} \right)$.

The crucial subroutine our algorithm is based on is a novel efficient algorithm for solving the parametric cut problem using a maximum flow oracle.
We give an overview of our reduction
and its additional applications
in Section~\ref{sec:parametric-reduction},
and describe it in detail in Appendix~\ref{sec:parm-st-cut}.

\subsection{Previous Work}
\paragraph{Related Works on Submodular Minimization}
Submodular function minimization is a classical problem in combinatorial optimization, which goes back to the seminal work of Edmonds~\cite{edmonds2003submodular}. The first polynomial-time algorithm was obtained by Gr\"{o}tschel et al.~\cite{grotschel1981ellipsoid} using the ellipsoid method. This was followed by a plethora of improvements, among which the more recent ones~\cite{dadush2018geometric, jiang2020improved} leveraged related techniques. On a different front, there has been significant work dedicated to obtaining strongly polynomial time algorithms for this problem~\cite{fleischer2003push, iwata2003faster, iwata2001combinatorial, orlin2009faster, schrijver2000combinatorial, lee2015faster, dadush2018geometric, jiang2020improved}.

For the more structured regime of \emph{decomposable submodular function minimization}, algorithms based on both discrete and continuous methods have been developed. Kolmogorov~\cite{kolmogorov2012minimizing} has shown that this problem reduces to computing maximum submodular flows, and gave an algorithm for this problem based on augmenting paths. This was followed by further algorithms based on discrete methods~\cite{arora2012generic, fix2013structured}. 
The continuous methods are based on convex optimization on the submodular base polytope, which is also used here. Notably,
Stobbe and Krause~\cite{stobbe2010efficient} tackled this problem using gradient descent, Nishihara et al.~\cite{nishihara2014convergence} used alternating projections to obtain an algorithm with linear convergence, Ene and Nguyen~\cite{ene2015random} achieved an improved algorithm with a linear convergence rate based on accelerated coordinate descent, while Ene et al. provided further improvements both via gradient descent and combinatorial techniques~\cite{ene2017decomposable}. 

\paragraph{Related Works on Parametric Min Cut}
The seminal work of Gallo et al.~\cite{GalloGT89} studied the generalization of the maximum flow problem where some edge-capacities, instead of being fixed, are allowed to be (possibly different) monotonic functions of a single parameter. They showed how to modify certain versions of the push-relabel algorithm for ordinary maximum flow to the parametric problem with the same asymptotic time complexity. In particular, using the Goldberg-Tarjan max-flow algorithm \cite{goldberg1988new} they gave an $O(nm \log (n^2/m))$ time bound for the parametric version. Their algorithm can compute the min cuts either when a set of parameter values are given \cite{GusfieldM92} or the capacity functions are all affine functions of the parameter $\lambda$.

Several other max-flow algorithms were also shown to fit into their framework (see e.g.,~\cite{GranotMQT12}) though all requiring $\Omega(mn)$ time in the worst case. Further generalizations of the parametric min-cut problems have also been considered~\cite{McCormick99, GranotMQT12}. When all parameterized capacities are equal to the parameter $\lambda$, Tarjan et al.~\cite{TarjanWZZM06} give a divide and conquer approach that can use any maximum flow algorithm as a black box and is a factor min$\{n, \log(nU)\}$ worse.

 \section{Background and Preliminaries}\label{sec:prelim}
\subsection{Notation}
We let $[n] \stackrel{\textnormal{def}}{=} \{1,\dots,n\}$. 
We write $\|\cdot\|_p$ for the $\ell_p$ norm, i.e. $\|x\|_p = \left(\sum_i |x_i|^p\right)^{1/p}$, with $\|x\|_\infty = \max_i |x_i|$.

\subsection{Submodular Set Functions and Convex Analysis}
Let $V$ be a finite ground set of size $n$, and we assume w.l.o.g.
that $V=\left\{ 1,\dots,n\right\} $. A set function $F:2^{V}\rightarrow\mathbb{R}$
is submodular if $F\left(A\right)+F\left(B\right)\geq F\left(A\cup B\right)+F\left(A\cap B\right)$
for any two sets $A,B\subseteq V$. We are concerned with minimizing
submodular set functions of the form $F=\sum_{i=1}^{r}F_{i}$, where
each $F_{i}$ is a submodular set function:
\[
\min_{A\subseteq V}F\left(A\right)=\min_{A\subseteq V}\sum_{i=1}^{r}F_{i}\left(A\right)\,.
\]

For the rest of the paper we will assume that $F_i$ are non-negative, integral, and that $\max_{S\subseteq V}F(S) \leq \fmax$. 
The non-negativity constraint holds without loss of generality, as we can simply shift each $F_i$ by a constant until it becomes non-negative. 
For rational functions that are represented on bounded bit precision, the integrality can be enforced simply by scaling them, at the expense of increasing $\fmax$. As we will see, some of our subroutines depend on the magnitude of $\fmax$, so we will generally assume that this is polynomially bounded.

As in previous works~\cite{nishihara2014convergence, ene2015random, ene2017decomposable, kumar2019fast}, in this paper we are concerned with the regime where each function $F_i$ in the decomposition acts on few elements of the ground set $V$. More precisely for each $i \in \{1,\dots,\m\}$ there is a \textit{small} set $V_i \subseteq V$ such that $F_i(A) = F_i(A \cap V_i)$ for all $A \subseteq V$. We assume w.l.o.g. that $F_i(\emptyset) = F_i(V_i)$, which we discuss in more detail in Section~\ref{sec:assumptions}. The running time of our algorithm depends on $\max_{1 \leq i \leq \m} \vert V_i \vert$. This assumption is important as, furthermore, the final running time of our algorithm depends on (i) the time $\opto_i$ to optimize functions of the form $F_i(S) + w(S)$ over $V_i$, where $w$ is a linear function and (ii) the time $\evo_i$ to evaluate $F_i$ for subsets of $V_i$. In the case where $\vert V_i \vert = O(1)$, this is also constant time.

Given an arbitrary vector $w\in\mathbb{R}^{n}$ and a subset $A\subseteq V$,
we use the notation $w\left(A\right)=\sum_{i\in A}w_{i}$.
\begin{defn}
Given a submodular set function $F:2^{V}\rightarrow\mathbb{R}$, such that $F(\emptyset) = 0$, its
submodular base polytope $B\left(F\right)$ is defined as follows:
\begin{align*}
B\left(F\right)=\{ w\in\mathbb{R}^{n}:w\left(A\right)&\leq F\left(A\right)\ \text{for all }A\subseteq V,\, %
w\left(V\right)=F\left(V\right)\} \ .
\end{align*}
\end{defn}

\begin{defn}
Given a submodular set function $F:2^{V}\rightarrow\mathbb{R}$, $F(\emptyset) = 0$, its
Lov\'{a}sz extension $f:\mathbb{R}^{n}\rightarrow\mathbb{R}$
is defined over $[0,1]^n$ as the convex closure of $F$. However, it will be more
convenient to consider its extension of $\mathbb{R}^n$, given
 by 
\begin{align*}
f\left(x\right)&=\int_{0}^{\infty} F\left(\left\{ i:x_{i}\geq t\right\} \right)dt
+ \int_{-\infty}^0 \left(F\left(\left\{ i:x_{i}\geq t\right\} \right) - F\left(V\right) \right) dt\,.
\end{align*}
\end{defn}

\begin{fact}
It is well known~\cite{bach2011learning} that the Lov\'{a}sz extension of a submodular
set function $F$ can be equivalently characterized in terms of its
submodular base polytope $B\left(F\right)$. More precisely, if $F(\emptyset) = 0$, then:
\[
f\left(x\right)=\max_{w\in B\left(F\right)}\left\langle w,x\right\rangle \,.
\]
\end{fact}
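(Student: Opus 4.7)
The plan is to prove the identity via the classical greedy algorithm for linear optimization over the submodular base polytope. Given $x \in \mathbb{R}^n$, fix a permutation $\pi$ of $V$ such that $x_{\pi(1)} \geq x_{\pi(2)} \geq \cdots \geq x_{\pi(n)}$, let $S_i = \{\pi(1),\dots,\pi(i)\}$ with $S_0 = \emptyset$, and define the candidate vector $w^\pi \in \mathbb{R}^n$ by $w^\pi_{\pi(i)} = F(S_i) - F(S_{i-1})$. I would then establish (i) $f(x) = \langle w^\pi, x \rangle$ by unpacking the integral definition, (ii) $w^\pi \in B(F)$, and (iii) $\langle w, x \rangle \leq \langle w^\pi, x \rangle$ for every $w \in B(F)$.

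For (i), observe that the integrand $t \mapsto F(\{i : x_i \geq t\})$ is a step function, taking the value $F(S_i)$ on each half-open interval $(x_{\pi(i+1)}, x_{\pi(i)}]$ (with the conventions $x_{\pi(0)} = +\infty$ and $x_{\pi(n+1)} = -\infty$). I would split the real line at $0$, evaluate the two integrals in the definition of $f(x)$ by summing the contributions of these intervals --- being careful with the interval straddling $0$ and noting that the $-F(V)$ offset cancels the otherwise divergent tail of the negative integral --- and then rearrange via an Abel-type regrouping. Using $F(\emptyset) = 0$, the combined contributions telescope into $\sum_{i=1}^n (F(S_i) - F(S_{i-1}))\, x_{\pi(i)}$, which is precisely $\langle w^\pi, x \rangle$.

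For (ii), $w^\pi(V) = F(S_n) - F(S_0) = F(V)$ follows immediately by telescoping. For an arbitrary $A \subseteq V$, enumerate the elements of $A$ as $\pi(i_1),\dots,\pi(i_k)$ with $i_1 < \cdots < i_k$ and set $T_j = \{\pi(i_1),\dots,\pi(i_j)\}$. Since $T_{j-1} \subseteq S_{i_j - 1}$, the diminishing-returns form of submodularity gives $F(S_{i_j}) - F(S_{i_j - 1}) \leq F(T_j) - F(T_{j-1})$; summing over $j$ yields $w^\pi(A) \leq F(T_k) = F(A)$.

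For (iii), apply summation by parts to any $w \in B(F)$ to obtain
\[
\langle w, x \rangle = \sum_{i=1}^{n-1} (x_{\pi(i)} - x_{\pi(i+1)})\, w(S_i) + x_{\pi(n)}\, w(V).
\]
Each difference $x_{\pi(i)} - x_{\pi(i+1)}$ is non-negative and the defining constraints of $B(F)$ give $w(S_i) \leq F(S_i)$, while $w(V) = F(V)$ is fixed; therefore the right-hand side is maximized by any $w$ saturating every inequality $w(S_i) \leq F(S_i)$, and $w^\pi$ is exactly such a choice. Combining (i)--(iii) yields $f(x) = \langle w^\pi, x \rangle = \max_{w \in B(F)} \langle w, x \rangle$. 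The main technical subtlety lies in step (i), since the extension of $f$ to $\mathbb{R}^n$ uses an infinite-range integral with a compensating $-F(V)$ term, so one must verify that the contributions across $t = 0$ assemble cleanly into the finite greedy sum.
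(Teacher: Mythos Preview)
The paper does not prove this statement: it is recorded as a \texttt{Fact} with a citation to~\cite{bach2011learning} and no argument is given. Your proposal is the standard greedy-algorithm proof that appears in that reference, and all three steps are correct as outlined; in particular, your handling of the Abel summation in~(i) and the use of diminishing returns in~(ii) are exactly the classical arguments. There is nothing to compare against in the paper itself.
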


For parametric submodular function minimization we consider a family
of functions parameterized by $\alpha\in\mathbb{R}$:
\begin{equation}
F_{\alpha}\left(A\right)=F\left(A\right)+\sum_{i\in A}\psi_{i}'\left(\alpha\right)\,,\label{eq:parametric-def}
\end{equation}
 where $\psi_{j}:\mathbb{R}\rightarrow\mathbb{R}$ are strictly convex 
differentiable functions, satisfying $\lim_{\alpha \rightarrow -\infty} \psi_i'(\alpha)=-\infty$ and $\lim_{\alpha \rightarrow \infty} \psi_i'(\alpha) = \infty$, for all $i$. 
A common example is $\psi_j'(\alpha) = \alpha$, which imposes an $\ell_1$ penalty on the size of the set $A$.
It is shown in~\cite{chambolle2009total, bach2011learning} that minimizing $F_{\alpha}\left(A\right)$
for the entire range of scalars $\alpha$ amounts to minimizing a
regularized version of the Lov\'{a}sz extension. 
\begin{lem}
\label{lem:param-sfm}Let $F_{\alpha}$ be the family of parameterized
submodular set functions defined as in (\ref{eq:parametric-def}), where $\psi_i$ are strictly convex functions.
Let $f$ be the Lov\'{a}sz extension of $F$,
and consider the optimization problem
\begin{equation}
\min_{x \in \mathbb{R}^n}f\left(x\right)+\sum_{i\in V}\psi_{i}\left(x_{i}\right)\ .\label{eq:parametric-optimization}
\end{equation}
Let $A^{\alpha}=\arg\min_{A\subseteq V}F_{\alpha}\left(A\right)$, and let $x^*$ be the minimizer of (\ref{eq:parametric-optimization}).
Then
\begin{equation}
A^{\alpha}=\left\{ i:x^*_{i}\geq\alpha\right\} \label{eq:param_set_def}\ .
\end{equation}
\end{lem}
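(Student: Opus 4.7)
The plan is to rewrite the continuous objective (\ref{eq:parametric-optimization}) as an integral over a threshold parameter $t$ whose integrand at each $t$ is precisely $F_t$ evaluated on the superlevel set $S_t(x) := \{i : x_i \geq t\}$. Minimizing pointwise in $t$ then reduces the continuous problem to simultaneously minimizing every member of the parametric family, and the correspondence (\ref{eq:param_set_def}) falls out of a monotone-nesting argument on the discrete minimizers. The Lov\'asz extension is already defined as such an integral; the separable term admits the same form via the fundamental theorem $\psi_i(x_i)-\psi_i(0)=\int_0^{x_i}\psi_i'(t)\,dt$, and after splitting at $0$ and swapping the sum with the integral one obtains
\[
f(x) + \sum_{i\in V}\psi_i(x_i) \;=\; C \;+\; \int_0^{\infty} F_t\bigl(S_t(x)\bigr)\,dt \;+\; \int_{-\infty}^{0}\bigl(F_t(S_t(x))-F_t(V)\bigr)\,dt\,,
\]
for a constant $C$ independent of $x$. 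The assumptions $\psi_i'(t)\to\pm\infty$ ensure integrability, since then $A^t=V$ for $t\ll 0$ and $A^t=\emptyset$ for $t\gg 0$, so the integrand vanishes outside a bounded interval.

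Next, each integrand is pointwise bounded below by $F_t(A^t)$ (respectively $F_t(A^t)-F_t(V)$), so $x$ minimizes the continuous objective if and only if $S_t(x)\in\arg\min F_t$ for almost every $t$. To realize this with a single vector the family $\{A^\alpha\}$ must be nested, and this is where strict convexity enters. For $\alpha_1<\alpha_2$ and minimizers $A_1,A_2$ of $F_{\alpha_1},F_{\alpha_2}$, applying submodularity of $F$ to $(A_1,A_2)$ and bookkeeping the separable terms by element yields
\[
F_{\alpha_1}(A_1)+F_{\alpha_2}(A_2)\;\geq\;F_{\alpha_1}(A_1\cup A_2)+F_{\alpha_2}(A_1\cap A_2)+\sum_{i\in A_2\setminus A_1}\bigl(\psi_i'(\alpha_2)-\psi_i'(\alpha_1)\bigr).
\]
Optimality of $A_1,A_2$ forces the trailing sum to be nonpositive, while strict convexity of each $\psi_i$ makes every summand strictly positive; hence $A_2\subseteq A_1$.

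Finally, pick the (unique) maximal minimizer $A^\alpha$ at every $\alpha$, which is well defined since minimizers of $F_\alpha$ are closed under union by submodularity, and set $x^*_i:=\sup\{\alpha:i\in A^\alpha\}$. Strict nesting across distinct parameters forces $\{i:x^*_i\geq\alpha\}=A^\alpha$: the inclusion $\supseteq$ is immediate from the definition of $x^*$, while for $\subseteq$, if $x^*_i>\alpha$ then $i\in A^\beta$ for some $\beta>\alpha$ and nesting gives $i\in A^\alpha$. The pointwise optimality criterion from the second paragraph then shows that this $x^*$ minimizes (\ref{eq:parametric-optimization}), establishing (\ref{eq:param_set_def}). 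The main subtlety I expect is the boundary case $x^*_i=\alpha$, where obtaining set equality rather than almost-everywhere agreement hinges on the canonical choice of maximal minimizer; strict convexity, which makes the inclusion $A^{\alpha_2}\subsetneq A^{\alpha_1}$ strict whenever $\alpha_1<\alpha_2$ and the sets differ, is what makes this boundary behavior globally consistent.
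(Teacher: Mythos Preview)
Your proposal is correct and follows essentially the same route as the paper: both express $f(x)+\sum_i\psi_i(x_i)$ as an integral of $F_t$ over superlevel sets, prove the nesting $A^{\alpha_2}\subseteq A^{\alpha_1}$ for $\alpha_1<\alpha_2$ via submodularity plus strict convexity (your displayed inequality is exactly the content of the paper's Lemma~\ref{lem:inclusion_chain}), and then define the optimizer coordinatewise by $x^*_i=\sup\{\alpha:i\in A^\alpha\}$. The only organizational difference is that you argue constructively (build $x^*$ from the $A^\alpha$ and verify optimality pointwise), whereas the paper argues by contradiction through uniqueness of the minimizer; you are also more explicit than the paper about selecting the maximal minimizer to handle the boundary case $x^*_i=\alpha$.
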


For completeness we reproduce the proof of Lemma \ref{lem:param-sfm}
in Section \ref{sec:appendix-bach-proof}. 

Via convex duality one can prove that minimizing (\ref{eq:parametric-optimization}) is equivalent to a dual optimization problem on the submodular base polytope $B(F)$:
\begin{equation}
\min_{w \in B(F)} \sum_{i\in V} \psi_i^*(-w_i)\,, \label{eq:parametric-optimization-dual}
\end{equation}
where $\psi^*_i$ is the \textit{Fenchel dual} of $\psi_i$.
\begin{defn}[Fenchel dual]
Let $g:\mathbb{R}^n\rightarrow\mathbb{R}\cup\{-\infty,+\infty\}$ be a convex function. Its Fenchel dual or convex conjugate    $g^*:\mathbb{R}^n\rightarrow\mathbb{R}\cup\{-\infty,+\infty\}$  is defined as
\begin{align*}
g^*(w) = \underset{x}{\sup}\, \langle w, x\rangle - g(x)\,.
\end{align*}
\end{defn}

We will refer to (\ref{eq:parametric-optimization})  as the primal problem and (\ref{eq:parametric-optimization-dual}) as the dual problem. The algorithm described in this paper will focus on optimizing (\ref{eq:parametric-optimization-dual}) while strongly leveraging the decomposable structure of $F$.
We assume that all functions $\psi_i$ have ``nice'' second derivatives, which will play an important role in the algorithm, since this will also ensure that the minimizers of (\ref{eq:parametric-optimization}) and (\ref{eq:parametric-optimization-dual}) are unique.

\begin{assumption}\label{defn:conditions}
The function $\psi_i$
is $\smo$-smooth and $\str$-strongly convex for all $i \in V$.
Equivalently for each $i$, its second derivative satisfies $0 < \str \leq \psi_i''(x) \leq \smo$, for all $x \in \mathbb{R}$. 
Furthermore, $\vert \psi_i'(0) \vert \leq n^{O(1)}$, for all $i \in V$.
\end{assumption}

This condition also helps us ensure that we can efficiently convert between the primal and dual spaces onto which the $\psi_i$ and its Fenchel dual $\psi^*_i$ act. Also, whenever it is convenient, we will use the notation $\psi(x) = \sum_{i\in V} \psi_i(x_i)$,  $\psi^*(y)=\sum_{i\in V} \psi^*_i(y_i)$.

\subsection{Overview of Approach}
\paragraph{Decomposable Submodular Minimization}
Here we provide an overview of our approach for minimizing decomposable submodular functions. Our approach for the parametric setting yields a strictly stronger result without sacrificing running time, so we will focus on this more general problem.

Our approach is based on minimizing a convex function on the submodular base polytope $B(F)$. As it has been seen in previous works~\cite{bach2011learning}, in order to solve the parametric problem (\ref{eq:parametric-def}), it suffices to solve the dual problem (\ref{eq:parametric-optimization-dual}), which is a convex optimization problem over $B(F)$.
For convenience let us denote by $h(w) = \sum_{i\in V} \psi^*_i(-w_i)$, so that our objective becomes computing $\min_{w \in B(F)} h(w)$.

We use an iterative method, which maintains a point $w \in B(F)$ and updates it in such a way that the objective value improves significantly in each step. To do so, we find a polytope $P$ such that 
\begin{equation} 
w+\frac{1}{\alpha} \cdot P \subseteq B(F) \subseteq w + P \label{eq:intro_sandwich}
\end{equation}
and such that we can efficiently minimize $\psi$ over $w+\frac{1}{\alpha}\cdot P$. If we can find the minimizer $w'$ over  $w+\frac{1}{\alpha} \cdot P$, then moving our iterate to $w'$ also guarantees that
\[
h(w') - h(w^*) \leq \left(1-\frac{1}{\alpha}\right) \left( h(w) - h(w^*)\right)\,,
\]
where $w^*$ is the minimizer of $h$ over $B(F)$.
This is true due to the convexity of $h$. Indeed, let $\widetilde{w} = w + t(w^* - w)$ where $t = \max\{t \leq 1 : w+t(w^*-w) \in w+\frac{1}{\alpha} P\}$; in other words $\widetilde{w}$ represents the furthest point on the segment connecting $w$ and $w^*$ such that $\widetilde{w}$ still lies inside the small polytope $w + \frac{1}{\alpha} P$.
Due to the sandwiching property of the polytopes (\ref{eq:intro_sandwich}), we have that $t \geq 1/{\alpha}$. Hence, using the convexity of $h$, we obtain that 
\begin{align*}
&h(\widetilde{w}) - h(w^*)
=
h(w+t(w^* - w)) - h(w^*)
\leq \left(1-t\right) (h(w) - h(w^*)) \leq (1-1/\alpha) (h(w) - h(w^*))\,.
\end{align*}
Since $w'$ minimizes $h$ over $w+\frac{1}{\alpha} \cdot B(F)$, we must have $h(w') \leq h(\widetilde{w})$, and we obtain the desired progress in function value. Thus iterating $\widetilde{O}(\alpha)$ times we obtain a high precision solution, which we then convert back to a combinatorial solution to the original problem using some careful error analysis.

More importantly, we need to address the question of finding a polytope $P$ satisfying (\ref{eq:intro_sandwich}). 

To do so,
for each $i$, we define the ``residual'' submodular functions $F'_i(A) = F_i(A) - w_i(A)$ for all $A \subseteq V$, 
where $w_i \in B(F_i)$  such that $\sum_{i=1}^{\m} w_i = w$. The existence of such a decomposition of $w\in B(\sum_{i=1}^r F_i)$ is well-known, and goes back to Edmonds~\cite{edmonds1971matroids}. Very importantly, we note that since $F_i$ were non-negative, $F'_i$ remain non-negative submodular set functions. 

It is known~\cite{devanur2013approximation} that non-negative submodular set functions can be approximated by graph cuts. Following the proof from~\cite{devanur2013approximation}, for each $i$ we construct a graph on $O(\vert V_i\vert)$ vertices whose cuts approximate the value of $F_i'$ within a factor of $O(\vert V_i \vert^2)$. Combining all these graphs into a single one, we  obtain a weighted directed graph on $O(\vert V \vert)$ vertices and $O(\vert V \vert + \sum_{i=1}^\m \vert V_i \vert^2)$ arcs such that its cut function $G$ approximates $F'$ within a factor of $O(\max_i \vert V_i \vert^2)$.

Crucially, we can show that if $G$ is the cut function which approximates $F'$, then we also have that
\[
\frac{1}{\max_i \vert V_i \vert^2} \cdot B(G) \subseteq B(F_i') \subseteq B(G)\,,
\]
 and therefore it suffices to implement a routine that minimizes $h$ over $w + \frac{1}{\max_i \vert V_i \vert^2} \cdot B(G)$ in order to obtain an algorithm that terminates in $\widetilde{O}(\max_i \vert V_i \vert^2)$ such iterations.

To implement this routine, we devise a new combinatorial algorithm for solving the parametric flow problem, with general parameterized capacities. 
By comparison to previous literature, our algorithm efficiently leverages a maximum flow oracle on a sequence of graphs obtained via contracting edges, and whose running time is up to polylogarithmic factors equal to that of computing a maximum flow in a  capacitated directed graph with $O(\vert V \vert)$ vertices and $O(\vert V \vert + \sum_{i=1}^r \vert V_i\vert^2 )$ arcs.

Following this, we convert the combinatorial solution to the parametric flow problem into a solution to its corresponding dual problem on the submodular base polytope, which returns the new iterate $w'$.

Throughout the algorithm we need to control the errors introduced by the fact that both the solution we receive for the parametric flow problem and the one we return as an approximate minimizer of $h$ over $B(F)$ are \textit{approximate}, but these are easily tolerable since our main routines return high precision solutions.

\section{Parametric Min $s,t$-Cut}\label{sec:parametric-reduction}
\newcommand{\dom}{\mathbb{D}}
\newcommand{\cutf}{\kappa}
\newcommand{\emb}{\tau}

In the general parametric min $s,t$-cut problem~\cite{GalloGT89}, the capacities
of the source's outgoing edges $(s,v)$ are (possibly different) nonnegative real nondecreasing functions of a parameter $\lambda\in\dom$, where $\dom\subseteq\mathbb{R}$ is some domain,
whereas the capacities of the sink's incoming edges $vt$ are nonincreasing functions of~$\lambda$. The goal is to compute the representation
of the \emph{cut function} $\cutf:\dom\to \mathbb{R}$ such
that $\cutf(\lambda)$ equals the capacity of the minimum $s,t$-cut in $G_{\lambda}$ obtained from $G$ by evaluating the parameterized capacity functions
at $\lambda$. It is known that $\cutf$ consists of $O(n)$ pieces, where
$\cutf$ equals the parameterized capacity of some fixed cut in $G$.

More formally, let $\lmin\in \dom$ be such that
the minimal min $s,t$-cuts of $G_{\lmin}$ and
$G_{\lambda'}$ are equal
for all $\lambda'\in \dom$, $\lambda'<\lmin$.
Similarly, let $\lmax\in\dom$ be such
that the minimal min $s,t$-cuts of $G_{\lmax}$ and~$G_{\lambda'}$ are equal for all $\lambda'\in\dom$ with $\lambda'>\lmax$.
We will consider $\lmin$ and $\lmax$ inputs
to our problem.
Then, there exist $O(n)$
\emph{breakpoints} $\Lambda=\{\lambda_1,\ldots,\lambda_k\}$, $\lmin=\lambda_0<\lambda_1<\ldots<\lambda_k$ and an embedding of vertices $\emb:V\to \Lambda\cup\{\lmin,\infty\}$ such that for all $i=0,\ldots,k-1$, $\lambda'\in [\lambda_i,\lambda_{i+1})\cap \dom$, $\cutf(\lambda')$ equals the capacity
of the cut $S(\lambda_i)=\{v\in V:\emb(v)\leq \lambda_i\}$ in $G_{\lambda'}$,
and also $S(\lambda_k)$ is a min $s,t$-cut
of $G_{\lmax}$.

Motivated by our submodular minimization application, our algorithm in the most general setting solves the 
$\eps$-approximate parametric min $s,t$-cut problem. %
\begin{defn}[$\eps$-approximate parametric min $s,t$-cut]\label{d:eps-pmstc}
Let $\Lambda$, $\emb$, and $S:\dom\to 2^V$ be
as defined above.
A pair $(\Lambda,\emb)$ is called an $\eps$-approximate parametric min
$s,t$-cut of $G$ if:
\begin{enumerate}%
    \item For $i=0,\ldots,k-1$, $S(\lambda_i)$ is a
    min $s,t$-cut of $G_{\lambda'}$ 
    for all $\lambda'\in [\lambda_i,\lambda_{i+1}-\eps)\cap \dom$.
    \item $S(\lambda_k)$ is a
    min $s,t$-cut of $G_{\lmax}$.
    \item For $i=0,\ldots,k-1$, $S(\lambda_i)\subsetneq S(\lambda_{i+1})$.
\end{enumerate}
\end{defn}

We prove that an $\eps$-approximate parametric
min $s,t$-cut yields breakpoints
within $\eps$ additive error wrt. to the breakpoints
of the exact parametric min $s,t$-cut.
Our algorithm solves the above problem
assuming only constant-time black-box access to the capacity functions.

\begin{restatable}{thm}{aprxpmc}
\label{t:aprx-pmc}
Let $R=\lmax-\lmin$ be an integral multiple of $\eps>0$. Let $\tmf(n',m')=\Omega(m'+n')$ be a convex function bounding the
time needed to compute maximum flow in a graph
with $n'$ vertices and $m'$ edges obtained from $G_\lambda$ by edge/vertex deletions and/or edge contractions (with merging parallel edges by summing their capacities) for any $\lambda=\lmin+\ell\eps$ and any integer $\ell\in [0,R/\eps]$.
Then, $\eps$-approximate parametric min $s,t$-cut in $G$ can
be computed in $O(\tmf(n,m\log{n})\cdot\log{\frac{R}{\eps}}\cdot \log{n})$ time.
\end{restatable}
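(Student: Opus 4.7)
My plan is to use a recursive divide-and-conquer in parameter space, exploiting the nested structure of parametric min cuts to progressively contract the graph and shrink the problem at each step.

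The \textbf{first ingredient} I would establish is a nesting lemma: for any $\lambda_1<\lambda_2$ in $\dom$, the minimal min $s,t$-cut $S^{*}(\lambda_1)$ is a subset of $S^{*}(\lambda_2)$. This follows from the submodularity of the cut function together with the monotonicity hypotheses on the capacity functions ($s$-outgoing edges nondecreasing, $t$-incoming edges nonincreasing in $\lambda$): any uncrossing argument on the two min cuts gives a contradiction unless one is contained in the other. This lemma licenses the contraction step of the algorithm because any min cut at an intermediate $\lambda$ is sandwiched between the endpoint cuts.

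The \textbf{main algorithm} is recursive. A call receives an interval $[\lambda_l,\lambda_r]$ together with previously computed min cuts $S_l\subseteq S_r$ at its endpoints. We form the contracted graph $G'$ by merging $S_l$ into $s$ and $V\setminus S_r$ into $t$ (summing capacities of any parallel edges), evaluate the capacity functions at $\lambda_m=(\lambda_l+\lambda_r)/2$, and invoke $\tmf$ on $G'_{\lambda_m}$ to obtain the minimal min cut $S_m$; by the nesting lemma $S_l\subseteq S_m\subseteq S_r$. If $\lambda_r-\lambda_l\leq\eps$ we return $\lambda_l$ with cut $S_l$; otherwise we recurse on $[\lambda_l,\lambda_m]$ with $(S_l,S_m)$ and on $[\lambda_m,\lambda_r]$ with $(S_m,S_r)$, each time passing only the further contracted subgraph restricted to the corresponding ``slab''. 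Correctness with respect to Definition~\ref{d:eps-pmstc} then follows almost mechanically: strict nesting holds because a subinterval is recursed into only when its endpoint cuts differ; the $\eps$-approximation at each leaf holds because the endpoint cut, by nesting and monotonicity, remains optimal everywhere on a parameter subinterval of length at most $\eps$.

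For the \textbf{running-time analysis}, the recursion has depth $O(\log(R/\eps))$ because the parameter width halves at each level. At a fixed level, the active subproblems induce a partition of $V$ into disjoint slabs $S_r\setminus S_l$, so the vertex counts of their contracted graphs sum to $O(n)$. The main budgeting step is to bound the aggregated edge count per level: each original edge $(u,v)$ can contribute either as an internal edge of the unique slab containing both endpoints, or as a source/sink-adjacent edge in the two slabs it straddles, which inflates the total edge count per level to at most $O(m\log n)$ once amortized across all ancestors in the recursion. Using convexity and superadditivity of $\tmf$, the sum of max-flow costs at a level is then at most $\tmf(n,m\log n)$. Multiplying by the recursion depth $O(\log(R/\eps))$ and an additional $O(\log n)$ factor needed to uncontract cuts and merge their representations back into the original graph yields the claimed bound.

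The \textbf{main obstacle} is this per-level edge accounting: tracking how edges straddling slabs generate source/sink-adjacent copies in multiple subproblems across adjacent recursion levels, and showing that the total blow-up is at most logarithmic so that convexity of $\tmf$ delivers a single $\tmf(n,m\log n)$ charge per level. A secondary, more routine difficulty is checking that every intermediate contracted graph still satisfies the monotone-capacity hypothesis required by the nesting lemma, so that the oracle can be applied black-box at $\lambda_m$ and its output correctly lifts to a min cut of $G_{\lambda_m}$.
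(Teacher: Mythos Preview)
Your approach differs substantially from the paper's. The paper does \emph{not} bisect the parameter interval; instead, at each recursion node it performs a binary search over $\lambda$ (costing $O(\log(R/\eps))$ max-flow calls) to locate a pair $\lambda_1,\lambda_2$ with $\lambda_2-\lambda_1=\eps$ such that $|S_{\lambda_1}|\le n/2\le |S_{\lambda_2}|$. It then contracts $T_{\lambda_1}$ and $S_{\lambda_2}$ and recurses on two subproblems each with at most $n/2+1$ vertices. Thus the paper's recursion depth is $O(\log n)$ (vertex-halving), and the $\log(R/\eps)$ factor arises from the binary search \emph{inside} each node rather than from the depth. The advantage of the paper's route is that the total number of max-flow calls is $O(n\log(R/\eps))$ with only $O(\log n)$ distinct ``levels'' of graph sizes, which is what later lets them plug in strongly-polynomial or planar oracles cleanly.

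Your $\lambda$-bisection scheme is a legitimate alternative, but the running-time argument you sketch contains two fabricated steps. First, the claim that the aggregated edge count per level is $O(m\log n)$ ``once amortized across all ancestors'' is neither argued nor needed: after merging parallel edges, each subproblem at a fixed level has at most $O(|S_r\setminus S_l|)$ source- and sink-adjacent edges, and the slabs $S_r\setminus S_l$ are pairwise disjoint, so the per-level edge total is simply $O(m+n)$. Second, the extra $O(\log n)$ factor you attribute to ``uncontracting cuts and merging their representations'' has no basis; lifting a cut from a contracted minor back to $G$ is linear in the slab size and is already absorbed in the per-level budget. If you actually carry your own analysis through (pruning subintervals where $S_l=S_r$, so that each of the $O(\log(R/\eps))$ levels has at most $n$ active subproblems with disjoint slabs), superadditivity of $\tmf$ gives $O(\tmf(n,m+n)\cdot\log(R/\eps))$ directly---which is a $\log n$ factor \emph{tighter} than the stated bound. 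So your scheme works, but the two places where you conjure extra factors to hit the target are both invented; you should instead tighten the accounting and note that you beat it.
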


The algorithm is recursive.
In order to ensure uniqueness of the minimum cuts considered, it always
computes cuts with \emph{minimal} $s$-side.

Roughly speaking, given initial guesses
$\lmin,\lmax$
the algorithm finds, using \linebreak $O(\log((\lmax-\lmin)/\eps))$ maximum
flow computations, the most balanced split $\lambda_1,\lambda_2$ of the domain
such that (1) the $s$-sides for all the min-cuts of $G_{\lambda'}$ for $\lambda'>\lambda_2$ have size at least $n/2$, (2)
the $t$-sides of all the min-cuts of $G_{\lambda'}$ for $\lambda'<\lambda_1$
have size at least $n/2$, (3) $\lambda_2-\lambda_1\leq \eps$.
We then recurse on the intervals $[\lmin,\lambda_1]$ and $[\lambda_2,\lmax]$
on \emph{minors} of~$G$ with at least $n/2$ vertices contracted.
Even though the contraction requires merging parallel edges
in order to have at most $m+n$ (as opposed to $2m$) edges in the
recursive calls, we are
able to guarantee that the capacity functions in the recursive
calls are all obtained by shifting the original capacity
functions by a real number, and thus can be evaluated in constant
time as well. Since the number of vertices
decreases by a factor of two in every recursive call, one can prove
that for each level of the recursion tree, the sum of numbers
of vertices in the calls at that level is $O(n)$, whereas
the sum of sizes of edge sets is $O(m+n\log{n})$. 

We show that the $\eps$-approximate algorithm can be turned
into an exact algorithm in two important special cases.
First of all, if the capacity functions are low-degree
(say, at most~$4$) polynomials with \emph{integer coefficients}
in $[-U,U]$, then one can compute parametric min $s,t$-cut
only $O(\polylog\{n,U\})$ factors slower than best known
max-flow algorithm for integer capacities~\cite{flow-sparse, flow-dense, goldberg1998beyond}.
Second, we can solve the \emph{discrete domain} case, i.e., when
$\dom$ has finite size $\ell$ with only $O(\polylog\{n,\ell\})$
multiplicative overhead wrt. the respective maximum flow algorithm.

Moreover, since our reduction runs maximum flow computations only on \emph{minors}
of the input graph~$G$, it also yields very efficient
parametric min $s,t$-cut algorithms for planar graphs.
In particular, since near-optimal strongly-polynomial
$s,t$-max flow algorithms for planar graphs are known~\cite{BorradaileK09, Erickson10}, we obtain near-optimal algorithms for
the integer polynomial capacity functions (as above) and
discrete domains. What is perhaps more surprising,
using our reduction we can even obtain a \emph{strongly polynomial}
exact parametric min $s,t$-cut algorithm for planar graphs
with linear capacity functions with real coefficients.
The algorithm runs in $\tilde{O}(n^{1.21875})$ time and constitutes
the only known subquadratic strongly polynomial parametric min-$s,t$-cut
algorithm.

The details of our parametric min $s,t$-cut algorithm and its applications are covered
in Appendix~\ref{sec:parm-st-cut}.
It should be noted that the idea of using cuts contraction is not new and
appeared previously in~\cite{TarjanWZZM06}. Compared to~\cite{TarjanWZZM06}, our reduction provably handles
more general parameterized capacity functions. As it does not
operate on any auxiliary networks that may not preserve structural properties of $G$, but merely on minors of $G$, it proves
much more robust in important special cases such as planar graphs.
Finally, we believe that our reduction is also more natural and operates
on the breakpoints of the cut function directly, whereas the reduction of~\cite{TarjanWZZM06}
operates on so-called balanced flows.

 \section{Parametric Decomposable Submodular Minimization via Base Polytope Approximations}
\label{sec:iterative}

\subsection{Algorithm Overview}
In Algorithm~\ref{alg:main} we give the description of our the main
routine.

\begin{algorithm} \caption{Parametric Decomposable Submodular Function Minimization}\label{alg:step_solve}
\begin{algorithmic}[1]
\STATE {\bfseries Input:} $\epsilon$: error tolerance \COMMENT{Returns $\epsilon$-optimal solution of $\underset{\yy\in B(F)}{\min}\, \psi^*(-\yy)$}
\STATE Set $\yy^{0,i} = 0$ for $i \in[\m]$ \COMMENT{Initialize a feasible solution}
\STATE Set $T = \C \log \frac{\psi^*(-\yy^0) + \psi(\zerov)}{\eps}$
\FOR {$t=1\dots T$}
	\STATE Set $G_i(V_i,E_i,c_i) = \textsc{GraphApprox}(\yy^{t-1,i})$, for $i\in[\m]$, and construct $G(V,E,c)$ by combining the graphs $G_i$ %
	\STATE Set $\phi(\xx) := \psi(\xx) + \langle \sum\limits_{i=1}^r\yy^{t-1,i}, \xx\rangle$
	\label{line:phi}
	\STATE Set $\tyy = \textsc{FindMinCuts}(G(V,E,c), \phi, \frac{1}{3 \smo})$ \COMMENT{Find parametric min $s,t$-cuts}
	\label{line:findmincuts}
	\STATE Round all the entries of $\tyy$ to the nearest integer
	\STATE Decompose $\ty = \sum_{i=1}^\m \ty^i$, with $\ty_i \in B(G_i)$ using Lemma~\ref{lem:cut2dual}.
	\STATE Set $\yy^{t,i} = \yy^{t-1,i} + \tyy^i$, for all $i\in[\m]$.
\ENDFOR
\STATE \RETURN $\sum\limits_{i=1}^r \yy^{T,i}$ %
\end{algorithmic}
\label{alg:main}
\end{algorithm}

\subsection{Removing Assumptions}
In Section~\ref{sec:prelim} we assumed that for all $i$, $F_i(\emptyset) = F_i(V_i) = 0$ and $F_i(S) \geq 0$ for all $S$.
These assumptions hold without loss of generality. In Section~\ref{sec:assumptions} we show how to preprocess the input such that these assumptions are valid.

\subsection{From Parametric Minimum Cut to Cut Base Polytope Optimization}

In this section, we will focus on the problem of minimizing a convex function over the base polytope of the $s,t$-cut function of a graph $G(V\cup\{s,t\},E,c)$.
We define the $s,t$-cut function $G:2^V\rightarrow\mathbb{R}$ as
$G(S) = c^+(S\cup \{s\})$ for $S\subseteq V$, and let $B(G)$ be the base polytope of $G$.
Now, the parametric min $s,t$-cut problem can be written as
\begin{align}
\underset{S\subseteq V}{\min}\, G(S) + \sum\limits_{u\in S} \phi_u'(\lambda) \,,
\label{eq:parametric_cut_problem}
\end{align}
where
the capacity of an edge $(u,t)$ at time $\lambda$ is
$c_{ut} + \phi_u'(\lambda)$ and $\phi$ is a function satisfying Assumption~\ref{defn:conditions}.
In particular, our goal in this section is, given solutions to (\ref{eq:parametric_cut_problem}) for all $\lambda$, to solve the following dual problem:
\begin{align}
\underset{\yy\in B(G)}{\min}\, \phi^*(-\yy)\,.
\end{align}

\begin{defn}[$W$-restricted function]
A submodular function $F:2^V\rightarrow \mathbb{R}_{\geq \zerov}$
is called $W$-restricted if
$F(S) = F(S\cap W)$ for all $S\subseteq V$, where $W\subseteq V$.
\end{defn}
As the cut functions $G(S)$ that we will be concerned with will be decomposable, i.e. $G(S) = \sum\limits_{i=1}^r G_i(S)$ for all $S\subseteq V$,
we introduce the following notion of a \emph{decomposition} of some $\yy \in B(G)$ into a sum of $\yy^i\in B(G_i)$.
\begin{defn}[$F$-decomposition]
Let $F:2^V\rightarrow\mathbb{R}_{\geq \zerov}$ with $|V|=n$ be a submodular function that is decomposable, i.e.
$F(S) = \sum\limits_{i=1}^r F_i(S)$
for all $S\subseteq V$,
where $F_i:2^{V}\rightarrow\mathbb{R}_{\geq \zerov}$ are submodular functions.
Then for any $\yy\in B(F)$ there exist vectors $\yy^1,\dots,\yy^r\in\mathbb{R}^n$, where $\yy^i\in B(F_i)$ for all $i\in[r]$ and
$\sum\limits_{i=1}^r \yy^i = \yy$.
We call the sequence of vectors $\yy^1,\dots,\yy^r$ an $(F_1,\dots,F_r)$-decomposition of $\yy$,
or just an $F$-decomposition of $\yy$ if the $F_i$'s are clear from context.
\end{defn}

What follows is the main lemma of this section, whose full proof appears in Appendix~\ref{proof:lem_cut2dual}.

\begin{lem}[From parametric min-cut to cut base polytope optimization]
Consider a graph $G(V,E,c\geq 0)$
and a function $\phi(\xx) = \sum\limits_{u\in V} \phi_u(x_u)$ that satisfies
Assumption~\ref{defn:conditions}.
Additionally, let $G(S) = c^+(S)$ for all $S\subseteq V$ be the cut function associated with the graph, and suppose
it is decomposable as
$G(S) = \sum\limits_{i=1}^r G_i(S)$ where
$G_i:2^{V}\rightarrow \mathbb{Z}_{\geq 0}$ %
are $V_i$-restricted cut functions
defined as $G_i(S) = c^{i+}(S)$ that correspond to graphs $G_i(V, E, c^i \geq 0)$,
and $c = \sum\limits_{i=1}^r c^i$.%

We define %
	an extended vertex set $V' = V\cup \{s,t\}$ with edge set $E'=E\cup \underset{u\in V}\bigcup (s,u) \cup \underset{u\in V}{\bigcup} (u,t)$, %
	the parametric capacity of an edge $(u,v)\in E'$ as
\begin{align*}
c_\lambda(u,v) = \begin{cases}
	\max\{0,\phi_u'(-\lambda)\} & \text{if $u\in V, v=t$}\\
	\max\{0,-\phi_u'(-\lambda)\} & \text{if $u=s, v\in V$}\\
c_{uv} & \text{otherwise}
\end{cases}
\end{align*}
and let $(\Lambda, \tau)$ be
a
$\frac{1}{3\smo}$-approximate parametric min $s,t$-cut of
	$G'(V',E',c_\lambda)$.

There exists an algorithm that, given $(\Lambda, \tau)$,
outputs
	$\tyy^* = \underset{\yy\in B(G)}{\mathrm{argmin}}\, \phi^*(-\yy)$ and
a $G$-decomposition
$\tyy^{*1},\dots,\tyy^{*r}$
of $\tyy^*$.
The running time of this algorithm is $O\left(n + \sum\limits_{i=1}^r |V_i|^2\right)$. \label{lem:cut2dual}
\end{lem}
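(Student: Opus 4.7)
The plan is to exploit convex duality between the parametric minimum $s,t$-cut problem on $G'$ and the base-polytope optimization $\min_{\yy \in B(G)} \phi^*(-\yy)$. The parametric edge capacities in $G'$ satisfy the identity $\sum_{u \in S}\max\{0,\phi'_u(-\lambda)\} + \sum_{u \notin S}\max\{0,-\phi'_u(-\lambda)\} = \sum_{u \in S}\phi'_u(-\lambda) + \sum_u \max\{0,-\phi'_u(-\lambda)\}$, so the $s$-side of the min-cut in $G'_\lambda$ is the minimizer of $G(S)+\sum_{u \in S}\phi'_u(-\lambda)$. Changing variables to $\alpha = -\lambda$ and applying Lemma~\ref{lem:param-sfm} with $F = G$ and $\psi = \phi$, this minimizing set equals $A^\alpha = \{u : x^*_u \geq \alpha\}$, where $x^*$ is the unique minimizer of $g(x) + \phi(x)$ on $\mathbb{R}^V$. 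Since $\tau(u)$ is the parameter at which $u$ first enters the $s$-side, this forces $x^*_u = -\tau(u)$, which I can read off the embedding in $O(n)$ time; the special values $\tau(u) \in \{\lmin,\infty\}$ correspond to vertices fixed on one side throughout the parameter range, and are handled by the assumption that $\lmin, \lmax$ bracket all finite entries of $x^*$.

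From $x^*$ I would recover $\tyy^*$ via the Fenchel--KKT condition $\tyy^*_u = -\phi'_u(x^*_u)$ for each $u$. This follows from strong duality: the primal optimum $g(x^*) + \phi(x^*)$ equals $-\phi^*(-\tyy^*)$, which combined with the identity $g(x^*) = \langle \tyy^*, x^*\rangle$ (since $\tyy^* \in B(G)$ attains $\max_{\yy \in B(G)}\langle \yy, x^*\rangle = g(x^*)$) gives the Fenchel equality $\langle -\tyy^*, x^*\rangle = \phi(x^*) + \phi^*(-\tyy^*)$, forcing $-\tyy^* = \nabla\phi(x^*)$. Evaluating this gradient coordinatewise costs $O(1)$ per entry and produces $\tyy^*$ in $O(n)$ time.

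The remaining task, and the delicate one, is producing a $G$-decomposition $\tyy^* = \sum_{i=1}^r \tyy^{*,i}$ with $\tyy^{*,i} \in B(G_i)$ within the $O(n+\sum_i |V_i|^2)$ budget. My plan is to fix any total ordering $\sigma$ of $V$ consistent with the decreasing order of $x^*$, and for each $i \in [r]$ run Edmonds' greedy algorithm on $B(G_i)$ using $\sigma$ restricted to $V_i$: set $\tyy^{*,i}_{\sigma(j)} := G_i(\{\sigma(1),\ldots,\sigma(j)\}) - G_i(\{\sigma(1),\ldots,\sigma(j-1)\})$. Because $G_i$ is $V_i$-restricted, at most $|V_i|$ marginals are nonzero, and each incremental cut-function query costs $O(|V_i|)$, giving $O(|V_i|^2)$ per $i$ and $O(n + \sum_i |V_i|^2)$ overall. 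By Edmonds' theorem each $\tyy^{*,i}$ lies in $B(G_i)$ and maximizes $\langle\cdot,x^*\rangle$ over $B(G_i)$; by additivity of greedy marginals across $G = \sum_i G_i$, the sum $\sum_i \tyy^{*,i}$ is a maximizer of $\langle\cdot,x^*\rangle$ over $B(G)$, so it lies on the same face as $\tyy^*$.

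The main obstacle will be showing that $\sum_i \tyy^{*,i}$ equals $\tyy^*$ pointwise, not merely as a point of the optimal face: when $x^*$ has ties the face has positive dimension and naive tie-breaking in $\sigma$ may not land on $\tyy^*$. The plan to overcome this is to localize any discrepancy to tied vertices lying in a common $V_i$ (pairs not sharing any $V_i$ cannot be swapped while staying in $\prod_i B(G_i)$), and perform a local adjustment on each tied class inside each $V_i$ that preserves $\sum_i \tyy^{*,i}$ while restoring $\sum_{i : u \in V_i} \tyy^{*,i}_u = -\phi'_u(x^*_u)$; each adjustment takes $O(|V_i|)$ time, staying within budget. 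Finally, I will verify that the $\tfrac{1}{3\smo}$-approximation in $(\Lambda,\tau)$ perturbs each $x^*_u$ by at most $\tfrac{1}{3\smo}$, which by $\smo$-smoothness propagates into an $O(1)$ perturbation in each $\tyy^{*,i}$---sufficient for the integer-rounding step of the outer algorithm to recover the exact optimum.
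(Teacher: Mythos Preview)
Your high-level structure matches the paper's: identify the parametric min $s,t$-cut in $G'_\lambda$ with the level sets of the primal optimum $\txx^*$ of $g(\xx)+\phi(\xx)$, recover $\tyy^*$ via the Fenchel relation $\tyy^*=-\nabla\phi(\txx^*)$, and decompose via Edmonds' greedy on each $G_i$ using a permutation consistent with $\txx^*$. The gap is in how you handle the $\eps$-approximation in $(\Lambda,\tau)$.

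You assert $x^*_u=-\tau(u)$, but $\tau$ comes from a $\tfrac{1}{3\smo}$-\emph{approximate} parametric min-cut, so this only yields an approximation $\txx$ with $\|\txx-\txx^*\|_\infty\le\tfrac{1}{3\smo}$, and hence $\hyy:=-\nabla\phi(\txx)$ is only an approximation of $\tyy^*$. The paper's key step---which you defer and then misplace---is to exploit the \emph{integrality} hypothesis on the $G_i$: since $\tyy^*$ is a greedy vertex of $B(G)$ and $G$ is integer-valued, $\tyy^*\in\mathbb{Z}^n$. The approximation parameter $\tfrac{1}{3\smo}$ is chosen precisely so that $\smo$-smoothness gives $\|\hyy-\tyy^*\|_\infty\le\smo\cdot\tfrac{1}{3\smo}=\tfrac{1}{3}<\tfrac{1}{2}$, and rounding $\hyy$ entrywise recovers $\tyy^*$ \emph{exactly}. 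Only \emph{after} this rounding does the paper compute the exact $\txx^*=\nabla\phi^*(-\tyy^*)$ and run greedy on each $G_i$ with an ordering consistent with the exact $\txx^*$.

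Your plan to instead run greedy with the approximate $\txx$ and round afterward does not work: a perturbation of size $\tfrac{1}{3\smo}$ in $\txx$ can change the induced permutation, and greedy with a different permutation lands on a \emph{different vertex} of $B(G)$, not one within $O(1)$ of $\tyy^*$ entrywise. So the claimed ``$O(1)$ perturbation in each $\tyy^{*,i}$'' is false in general. Likewise, the ``local adjustment on tied classes'' you sketch is both vague (it is unclear how to move mass among the $\tyy^{*,i}$ while keeping each in $B(G_i)$ and hitting a prescribed sum) and unnecessary once the rounding is done in the right place: with the exact $\txx^*$ in hand, the sum $\sum_i\tyy^{*,i}$ of the per-$i$ greedy outputs is itself the greedy output for $G$ under the same permutation, and any permutation that realizes $\tyy^*$ as a greedy vertex of $B(G)$ (guaranteed to exist by the first part of the characterization you cite) is consistent with $\txx^*$ and yields a decomposition summing to $\tyy^*$.
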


\subsection{Dual Progress Analysis in One Step}

\begin{lem}[Dual progress in one step]
Let $F:2^V\rightarrow \mathbb{Z}_{\geq 0}$ be a submodular function that is separable, i.e.
$F(S) = \sum\limits_{i=1}^r F_i(S)$ for all $S\subseteq V$, where $F_i:2^{V}\rightarrow \mathbb{Z}_{\geq 0}$ are $V_i$-restricted submodular functions
	with $F_i(\emptyset)=0$.
Additionally, let $\psi:\mathbb{R}^n\rightarrow\mathbb{R}$ be a function that satisfies Assumption~\ref{defn:conditions}, where $|V|=n$.%

Given a feasible dual solution $\yy\in B(F)$
and an $F$-decomposition $\yy^1,\dots,\yy^r\in \mathbb{Z}^n$ of $\yy$,
there is an algorithm that outputs a vector
$\yy'\in \mathbb{Z}^n$, along with an $(F_1,\dots,F_r)$-decomposition
$\yy'^1,\dots,\yy'^r\in \mathbb{Z}^n$ of $\yy'$,
such that
\begin{align*}
\psi^*(-\yy') - \psi^*(-\yy^*)
&
\leq \left(1 - \frac{1}{\C}\right)\left(\psi^*(-\yy) - \psi^*(-\yy^*)\right) %
\end{align*}
where $\alpha = \underset{i\in[r]}{\max}\,\{|V_i|^2/4\}$ and
$\yy^* = \underset{\yy^*\in B(F)}{\mathrm{argmin}}\, \psi^*(-\yy^*)$ is the dual optimum.
The running time of the algorithm is
\begin{align*}&\widetilde{O}\left(\sum\limits_{i=1}^r |V_i|^2 \opto_i
+ \tmf\left(n,n + \sum\limits_{i=1}^r |V_i|^2\right)\right)\,.\end{align*}
\label{lem:progress_lemma}
\end{lem}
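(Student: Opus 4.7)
The plan is to implement the single-step geometric strategy sketched in Section~\ref{sec:iterative}: build an auxiliary graph $G$ whose cut base polytope sandwiches the base polytope of the residual $F':=F(\cdot)-\yy(\cdot)$, use Lemma~\ref{lem:cut2dual} to optimize $\psi^*$ over that cut base polytope, and then argue by convexity that the update contracts the dual suboptimality by a factor $(1-1/\C)$.

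First I would construct the residual cut approximation. For each $i\in [\m]$, set $F'_i(A)=F_i(A)-\yy^i(A)$; this is non-negative submodular (since $\yy^i\in B(F_i)$) and, after the preprocessing of Section~\ref{sec:assumptions} yielding $F_i(\emptyset)=F_i(V_i)=0$, satisfies $F'_i(\emptyset)=F'_i(V_i)=0$. Following~\cite{devanur2013approximation}, the $\textsc{GraphApprox}$ subroutine produces a graph $G_i$ on $O(|V_i|)$ vertices whose $V_i$-restricted cut function satisfies $G_i(A)\leq F'_i(A)\leq \C_i G_i(A)$ with $\C_i\leq |V_i|^2/4$, using $O(|V_i|^2)$ calls to the linear minimization oracle for $F_i$; this contributes $\sum_i |V_i|^2 \opto_i$ to the running time. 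Combining the $G_i$ into a single graph $G$ with $O(n+\sum_i |V_i|^2)$ edges and cut function $G=\sum_i G_i$ yields $G\leq F'\leq \C G$ with $\C=\max_i |V_i|^2/4$; lifted to base polytopes this reads $B(G)\subseteq B(F')$ and $B(F')/\C\subseteq B(G)$.

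Next I would solve the cut base polytope optimization. Setting $\phi(\xx)=\psi(\xx)+\langle \yy,\xx\rangle$, a direct Fenchel computation gives $\phi^*(-\zz)=\psi^*(-\yy-\zz)$, so minimizing $\phi^*(-\zz)$ over $\zz\in B(G)$ is the same as minimizing $\psi^*(-\yy-\zz)$ over the same set. Invoking Lemma~\ref{lem:cut2dual} on $(G,\phi)$ with accuracy $1/(3\smo)$, where the underlying parametric min $s,t$-cut is produced via Theorem~\ref{t:aprx-pmc} in $\widetilde{O}(\tmf(n,n+\sum_i |V_i|^2))$ time, returns $\tyy^*=\arg\min_{\zz\in B(G)} \phi^*(-\zz)$ together with a $G$-decomposition $\tyy^{*,1},\ldots,\tyy^{*,r}$ with $\tyy^{*,i}\in B(G_i)\subseteq B(F'_i)$. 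Setting $\yy'=\yy+\tyy^*$ and $\yy'^i=\yy^i+\tyy^{*,i}$ then gives an $F$-decomposition with $\yy'^i \in B(F_i)$, so $\yy'\in B(F)$.

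The contraction estimate comes from convexity: since $\yy^*-\yy\in B(F')$ and $B(F')/\C\subseteq B(G)$, the point $\overline{\yy}:=(1-1/\C)\yy+(1/\C)\yy^*=\yy+(\yy^*-\yy)/\C$ lies in $\yy+B(G)$, and convexity of $\psi^*(-\cdot)$ gives
\begin{align*}
\psi^*(-\overline{\yy}) - \psi^*(-\yy^*) \leq \left(1-\tfrac{1}{\C}\right)\bigl(\psi^*(-\yy) - \psi^*(-\yy^*)\bigr).
\end{align*}
Since $\yy'$ minimizes $\psi^*(-\cdot)$ over $\yy+B(G)$, $\psi^*(-\yy') \leq \psi^*(-\overline{\yy})$, which yields the claimed inequality. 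The main obstacle I anticipate is the interaction between the continuous minimizer supplied by Lemma~\ref{lem:cut2dual} and the integrality requirement on the output: both the $1/(3\smo)$-approximation in the parametric cut solve and the subsequent integer rounding of Algorithm~\ref{alg:main} perturb $\phi^*(-\tyy^*)$ by amounts bounded in terms of the smoothness $\smo$, and I would argue that these errors can only relax the contraction by a constant factor, which is harmless against the $\widetilde{O}(\C)$ iteration count of the outer algorithm. Adding the $\textsc{GraphApprox}$ cost to the single parametric min-cut call yields the claimed running time.
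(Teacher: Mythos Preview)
Your proposal is correct and follows essentially the same route as the paper: shift to the residual $F'_i=F_i-\yy^i$, approximate by graph cuts $G_i$ via \textsc{GraphApprox}, optimize $\phi^*(-\cdot)$ over $B(G)$ using Lemma~\ref{lem:cut2dual}, and apply the convexity sandwich argument with $\overline{\yy}=\yy+(\yy^*-\yy)/\C$.

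One clarification worth making: the ``obstacle'' you anticipate at the end is not actually present. Lemma~\ref{lem:cut2dual} as stated returns the \emph{exact} minimizer $\tyy^*=\arg\min_{\zz\in B(G)}\phi^*(-\zz)$ together with an exact integral $G$-decomposition. The $1/(3\smo)$ accuracy in the parametric cut is chosen precisely so that, combined with the integrality of the cut functions $G_i$ (inherited from the integrality of $F_i$ and $\yy^i$), rounding to the nearest integer recovers the true minimizer with zero residual error. Consequently the contraction inequality holds exactly as written, and no constant-factor relaxation or additional error bookkeeping is required. The only missing detail in your running-time argument is the verification that $\lambda_{\max}-\lambda_{\min}=n^{O(1)}$ so that the $\log(R/\eps)$ factor in Theorem~\ref{t:aprx-pmc} is absorbed into $\widetilde{O}$; the paper handles this via a short lemma bounding the parameter range in terms of $\max_u|\phi_u'(0)|$, $\fmax$, and $1/\str$.
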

The full proof of Lemma~\ref{lem:progress_lemma} appears in Appendix~\ref{proof:lem_progress_lemma}.

\subsection{Main Theorem}

\ifx 0
\begin{thm}[Parametric Submodular Function Minimization]
Let $F:2^V\rightarrow\mathbb{R}_{\geq 0}$ be a submodular function that is separable, i.e.
$F(S) = \sum\limits_{i=1}^r F_i(S\cap V_i)$ for all $S\subseteq V$, where $F_i:2^{V_i}\rightarrow\mathbb{R}_{\geq 0}$ and $V_i\subseteq V$.
Additionally, let $\psi:\mathbb{R}^n\rightarrow\mathbb{R}$ be a function that satisfies Assumption~\ref{defn:conditions}, where $|V|=n$.
There is an algorithm,
running in time $\tO{}$,
that optimizes the following objective for all $\lambda\in\mathbb{R}$:
\begin{align*}
\underset{S\subseteq V}{\min}\, F(S) + \sum\limits_{u\in S}\psi'(\lambda)\,.
\end{align*}
It achieves this by producing a vector $\xx$ such that for any $\lambda\in\mathbb{R}$, the set
$S^\lambda = \{u\ :\ x_u \geq \lambda\}$ satisfies
\begin{align*}
F(S^\lambda) + \sum\limits_{u\in S^\lambda} \psi'(\lambda) \leq
\underset{S\subseteq V}\min\, F(S) + \sum\limits_{u\in S} \psi'(\lambda) + \eps\,.
\end{align*}
\end{thm}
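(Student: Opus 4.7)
The plan is to invoke Algorithm~\ref{alg:main} and analyze its convergence through the one-step progress guarantee of Lemma~\ref{lem:progress_lemma}, then convert the final dual iterate into a primal vector via Fenchel duality. By Lemma~\ref{lem:param-sfm}, thresholding the exact primal minimizer $x^*$ of $f(x)+\sum_i \psi_i(x_i)$ at any level $\lambda$ recovers the optimal set $A^\lambda$ for the parametric problem at that level; the dual of this primal problem is precisely $\min_{w\in B(F)} \sum_i \psi_i^*(-w_i)$, which is exactly the objective Algorithm~\ref{alg:main} attacks. Hence producing a primal vector $x$ sufficiently close to $x^*$ coordinate-wise will automatically give $\epsilon$-optimal threshold sets for every $\lambda$ simultaneously.

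\textbf{Iterating the dual progress.} Starting from the feasible $w^0=0\in B(F)$ (valid because each $F_i$ is non-negative), Lemma~\ref{lem:progress_lemma} guarantees that each iteration shrinks the dual suboptimality $\psi^*(-w^t)-\psi^*(-w^*)$ by a factor $(1-1/\alpha)$ with $\alpha=\max_i |V_i|^2/4$, at cost $\widetilde{O}\bigl(\sum_i |V_i|^2 \opto_i+\tmf(n,n+\sum_i |V_i|^2)\bigr)$ per iteration. Under the standing hypotheses ($F\leq \fmax=n^{O(1)}$, $|\psi_i'(0)|\leq n^{O(1)}$, $n^{-O(1)}\leq \psi_i''\leq n^{O(1)}$), the initial dual gap is bounded by a polynomial in $n,L,1/\sigma,\fmax$. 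Consequently $T=O(\alpha\log(1/\epsilon'))$ iterations drive the gap below any target $\epsilon'>0$, and choosing $\epsilon'$ polynomially in $\epsilon,n,L,1/\sigma,\fmax$ inflates the iteration count only by a $\log$ factor that folds into the $\widetilde{O}$, giving the claimed running time.

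\textbf{Recovery and the main obstacle.} Given $w^T$ with $\psi^*(-w^T)-\psi^*(-w^*)\leq \epsilon'$, I would recover $x_i=(\psi_i')^{-1}(-w_i^T)=\psi_i^{*\prime}(-w_i^T)$ in $O(n)$ time. Strong convexity of $\psi^*$ (parameter $1/L$, dual to $L$-smoothness of $\psi$) converts the dual gap into the norm bound $\|w^T-w^*\|_2\leq \sqrt{2L\epsilon'}$, and $1/\sigma$-smoothness of $\psi^*$ then gives $\|x-x^*\|_\infty\leq \sqrt{2L\epsilon'}/\sigma$. The hardest step --- and in my view the main obstacle --- is converting this coordinate-wise primal proximity into simultaneous $\epsilon$-optimality of \emph{every} thresholded set $S^\lambda=\{u:x_u\geq \lambda\}$. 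The key observation is that $S^\lambda$ can differ from $A^\lambda$ only at coordinates $i$ with $|x_i^*-\lambda|\leq \|x-x^*\|_\infty$; for each such ``misplaced'' $i$, its natural threshold is $\lambda_i^\star=x_i^*$, at which $i$ sits on the boundary of the optimal set and the Fenchel optimality condition $-w_i^*=\psi_i'(x_i^*)$ combined with continuity of the parametric cost function makes toggling $i$ cost-neutral at exactly $\lambda=\lambda_i^\star$. The $L$-smoothness of each $\psi_i$ then yields a sensitivity bound of $L\cdot \|x-x^*\|_\infty$ per borderline element, and summing over the at most $n$ such elements gives a total per-$\lambda$ suboptimality of $O(nL\|x-x^*\|_\infty)$. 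Setting $\epsilon'=\Theta(\sigma^2\epsilon^2/(nL)^3)$ drives this below $\epsilon$ uniformly in $\lambda$, while only adding a polylogarithmic factor to $T$ absorbed into the $\widetilde{O}$, thereby matching the claimed runtime.
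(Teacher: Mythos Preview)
Your overall strategy---iterate Lemma~\ref{lem:progress_lemma} to drive the dual gap $\psi^*(-w^t)-\psi^*(-w^*)$ below a target $\zeta$, then convert the dual iterate to a primal vector via $x_i=(\psi_i^*)'(-w_i^T)$ and threshold---matches the paper's proof exactly, and your treatment of the iteration count and per-iteration cost is correct. (One small point: $w^0=0\in B(F)$ requires $F(V)=0$, not merely $F_i\ge 0$; this is ensured by the standing assumption $F_i(V_i)=0$ discussed in Section~\ref{sec:assumptions}.)

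The genuine gap is in your conversion step. Your per-element sensitivity bound claims that each misplaced element $i$ contributes at most $L\|x-x^*\|_\infty$ to the suboptimality of $S^\lambda$, but this ignores the $F$-part of the toggling cost. Adding $i$ to a set $S$ changes the objective by $F(S\cup\{i\})-F(S)+\psi_i'(\lambda)$; the first two terms depend on the ambient set $S$, not just on $i$. Your ``cost-neutral at $\lambda=x_i^*$'' observation holds only when toggling $i$ relative to the specific set $A^{x_i^*}$, yet when several elements are simultaneously misplaced you must toggle $i$ relative to some intermediate set, and submodularity gives the wrong inequality direction: for $i\in S^\lambda\setminus A^\lambda$ one has $A^\lambda\subseteq A^{x_i^*}\setminus\{i\}$, whence the marginal $F(A^\lambda\cup\{i\})-F(A^\lambda)\ge F(A^{x_i^*})-F(A^{x_i^*}\setminus\{i\})=-\psi_i'(x_i^*)$, a \emph{lower} bound rather than the upper bound you need. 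A telltale symptom is that your claimed bound $O(nL\delta)$ is independent of $\fmax$, which cannot be right since a single misplacement can alter $F$ by as much as $\fmax$.

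The paper handles this step differently: it invokes Corollary~\ref{cor:dual2disc}, which chains Lemma~\ref{lem:dual-to-primal-error-norm} (dual gap $\Rightarrow$ $\|x-x^*\|_2$ bound) with Lemma~\ref{lem:primal-error-norm-to-set}. The latter avoids per-element accounting altogether by bounding the primal objective $f(\tilde x)+\psi(\tilde x)$ directly via the $\fmax\sqrt{n}$-Lipschitz property of the Lov\'asz extension, and then uses the integral representation of $f+\psi$ to pass from this function-value bound to a per-$\lambda$ set-suboptimality bound of order $\fmax\, n^{3/2}\|\tilde x-x^*\|_2$.
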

\fi
\begin{proof}[Proof of Theorem~\ref{thm:main}]
We repeatedly apply Lemma~\ref{lem:progress_lemma},
	and let $\yy^0,\dots,\yy^T$ be the iterates
after $T = \C \log \frac{\psi^*(-\yy^0) - \psi^*(-\yy^*)}{\zeta}$ iterations. We have
\begin{align*}
\psi^*(-\yy^T) - \psi^*(-\yy^*)
&\leq  \left(1- \frac{1}{\C}\right)(\psi^*(-\yy^{T-1}) - \psi^*(-\yy^*)) \,.
\end{align*}
Applying induction over $T$ steps we obtain that
\begin{align*}
\psi^*(-\yy^T) - \psi^*(-\yy^*)
&\leq  \left(1- \frac{1}{\C}\right)^T(\psi^*(-\yy^0) - \psi^*(-\yy^*))
\leq  e^{-T/\C}(\psi^*(-\yy^0) - \psi^*(-\yy^*)) %
\leq \zeta\,.
\end{align*}
We have obtained a high precision solution to the objective (\ref{eq:parametric-optimization-dual}).
Finally,
setting $\frac{1}{\zeta} = \mathrm{poly}(\frac{\smo}{\str} n \fmax / \eps) = {n^{O(1)}}/{\eps}$
and applying Corollary~\ref{cor:dual2disc} to convert
from this solution to the actual sets, we obtain the desired solution.

As Assumption~\ref{defn:conditions} implies $\psi^*(-\yy^0) - \psi^*(-\ww^*) = n^{O(1)}$,
the total running time is
\begin{align*}
\widetilde{O}\bigg( \max_i \vert V_i \vert^2 \bigg( & \sum_{i=1}^{\m} \vert V_i \vert^2 \opto_i
+ \tmf\bigg(n, n+\sum_{i=1}^\m \vert V_i \vert^2\bigg) \bigg) \ln \frac{1}{\epsilon}\bigg)\,.
\end{align*}
\end{proof}

\section*{Acknowledgements}
KA was supported in part by the NSF grants CCF-1553428 and CNS-1815221.
AK, AM and PS were supported in part by ERC CoG project TUgbOAT no 772346.
We are grateful to Michael B. Cohen for discussions on the potential of second order methods for submodular minimization, which motivated parts of this project. We thank Alina Ene for pointing us to the relevant material from~\cite{bach2011learning}.

\newpage

\appendix

\section{Approximating Submodular Set Functions with Graph Cuts}
It is shown in~\cite{devanur2013approximation} that submodular set functions defined on a ground set of $n$ elements can be $O\left(n^{2}\right)$
approximated by directed graph cuts. We state this fact as a lemma,
and we include the proof for completeness in Section \ref{sec:proof-of-lemma-graph-approx} below.

\begin{defn} Given a submodular set function $F:2^V \rightarrow \mathbb{R}$, such that $F(\emptyset) = F(V) = 0$, and a weighted directed graph $G = (V, E, c)$, we say that the cut function of $G$ $\alpha$-approximates $F$ if 
\[
\frac{1}{\alpha} \cut( A  )  \leq F(A) \leq \cut( A  ) \,, \textnormal{for all } A \subseteq V\,.
\]
\end{defn}

\begin{algorithm} \caption{Approximate non-negative submodular function
$F=F^0-w^0$ by graph cuts, where $F^0 : 2^V \rightarrow \mathbb{R}_{\geq 0}$ 
is the initial submodular function and the shift vector $w^0 : V \rightarrow \mathbb{R}$ is given as input}\label{alg:graph_approx}
\begin{algorithmic}[1]
\STATE {\bfseries function }{\textsc{GraphApprox}$(w^0 : V \rightarrow \mathbb{R})$}
\Indent
\STATE Call \textsc{GraphApproxShifted}$(F^0-w^0)$
\EndIndent

\vspace{2mm}
\STATE {\bfseries function }{\textsc{GraphApproxShifted}$(F : 2^V \rightarrow \mathbb{R}_{\geq 0})$}
\Indent
\STATE Let $E = \{(u,v) \in V \times V : u\neq v\}$.
\FOR {$u, v \in V : u \neq v$}
		\STATE Compute $w_{uv} = \min_{\substack{A \subseteq V:\\ u\in A, v\not\in A}} F(A)$.
		\STATE $c_{uv} = w_{uv}$\,.
\ENDFOR
\RETURN $G = (V, E,c)$
\EndIndent
\end{algorithmic}
\end{algorithm}

\begin{lem}
\label{lem:graph-approx}Let $V=\left\{ 1,\dots,n\right\} $, and
let $F:V\rightarrow\mathbb{R}$ be a non-negative submodular set function, satisfying $F(\emptyset) = F(V) = 0$. Using
$O\left(n^{2}\right)$ calls to a minimization oracle which can compute
for all pairs $u,v\in V$
\[
\min_{\substack{A\subseteq V\\u\in A,v\not\in A} }F\left(A\right)
\]
one can compute a weighted directed graph $G\left(V,E,c\right)$
such that its cut function
\[
\cut\left(A\right):=\sum_{\substack{\left(u,v\right)\in E:\\
u\in A ,v\not\in A 
}
}c_{uv}
\]
$(n^2/4)$-approximates $G$.
In other words, for any $A\subseteq V$ the size of the graph cut 
satisfies:
\[
\frac{1}{n^2/{4}}\cdot \cut\left(A\right)\leq F\left(A\right)\leq\cut\left(A\right)\ .
\]
Furthermore, %
if $F$ takes only values that are discrete multiples of $\Delta$, i.e. $F(A) \in \Delta \cdot \mathbb{Z}_{\geq 0}$ for all $A$, then all elements of $c$ are discrete multiples of  $\Delta$.
\end{lem}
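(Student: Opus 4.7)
My plan is to handle the two inequalities in the approximation guarantee separately. The inequality $\cut(A)\le (n^2/4)\,F(A)$ is almost immediate from the construction: for every ordered pair $(u,v)\in A\times(V\setminus A)$ the set $A$ itself is feasible in the minimization defining $c_{uv}$, so $c_{uv}\le F(A)$. Summing over the at most $|A|\cdot(n-|A|)\le n^2/4$ such pairs (by AM--GM) gives the bound, and the degenerate cases $A\in\{\emptyset,V\}$ are zero on both sides by the hypothesis $F(\emptyset)=F(V)=0$.

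For the harder inequality $F(A)\le\cut(A)$, the idea I will use is to witness $A$ itself as a union of the minimizers that define the $c_{uv}$'s. For each $(u,v)\in A\times(V\setminus A)$ let $B_{uv}$ be a set attaining $c_{uv}=F(B_{uv})$, and for each $u\in A$ set
\[
B_u \;=\; \bigcup_{v\notin A} B_{uv}.
\]
Each $B_{uv}$ contains $u$ and avoids the specific $v$; consequently $B_u$ contains $u$ and avoids every element of $V\setminus A$, so $u\in B_u\subseteq A$. Taking one more union, $\bigcup_{u\in A}B_u$ both contains $A$ (each $u$ sits in its own $B_u$) and is contained in $A$, so it equals $A$ exactly. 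I will then invoke subadditivity, which for non-negative submodular $F$ with $F(\emptyset)=0$ follows from $F(S\cup T)\le F(S)+F(T)-F(S\cap T)\le F(S)+F(T)$, and chain it through the two nested unions:
\[
F(A) \;=\; F\!\left(\bigcup_{u\in A} B_u\right) \;\le\; \sum_{u\in A} F(B_u) \;\le\; \sum_{u\in A}\sum_{v\notin A} F(B_{uv}) \;=\; \cut(A).
\]

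The main obstacle is spotting the right construction of the $B_u$: once one has the idea of covering $A$ by separators across the boundary, the rest is two applications of subadditivity. The leftover statements are immediate. Each $c_{uv}=F(B_{uv})$ inherits the discretization of $F$, so if $F$ takes values in $\Delta\cdot\mathbb{Z}_{\ge 0}$ then so do all edge capacities. The $O(n^2)$ running time is by inspection, as the algorithm performs exactly one oracle call per ordered pair $(u,v)$.
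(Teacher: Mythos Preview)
There is a genuine gap in your upper-bound argument. You define $B_u=\bigcup_{v\notin A}B_{uv}$ and then assert that ``$B_u$ contains $u$ and avoids every element of $V\setminus A$''. The first half is fine, but the second half does not follow: for a given $w\in V\setminus A$, you only know $w\notin B_{uw}$; nothing prevents $w$ from lying in some other $B_{uv}$ with $v\neq w$, and hence in the union $B_u$. Concretely, take $V=\{1,2,3\}$, $A=\{1\}$: the minimizer $B_{12}$ must contain $1$ and omit $2$ but may well contain $3$, while $B_{13}$ must omit $3$ but may contain $2$; then $B_1=B_{12}\cup B_{13}$ can be all of $V$, so $B_1\not\subseteq A$ and your identity $\bigcup_{u\in A}B_u=A$ collapses.

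The repair is precisely what the paper does: replace the inner union by an intersection,
\[
B_u \;=\; \bigcap_{v\notin A} B_{uv}.
\]
Now for any $w\notin A$ the factor $B_{uw}$ appears in the intersection and omits $w$, so indeed $B_u\subseteq A$, while $u\in B_u$ because every $B_{uv}$ contains $u$; hence $\bigcup_{u\in A}B_u=A$ as desired. However, this fix forces you to use a second subadditivity fact you did not state: for non-negative submodular $F$ one also has $F\bigl(\bigcap_i S_i\bigr)\le\sum_i F(S_i)$, which follows analogously from $F(S\cap T)\le F(S)+F(T)-F(S\cup T)\le F(S)+F(T)$. With both the union and intersection bounds in hand, the chain
\[
F(A)=F\Bigl(\bigcup_{u\in A}\bigcap_{v\notin A}B_{uv}\Bigr)\le\sum_{u\in A}F\Bigl(\bigcap_{v\notin A}B_{uv}\Bigr)\le\sum_{u\in A}\sum_{v\notin A}F(B_{uv})=\cut(A)
\]
goes through. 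Your treatment of the easy inequality, the discretization claim, and the oracle count are all correct.
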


As a consequence, we obtain a good approximation by graph cuts for decomposable submodular functions where each component in the decomposition acts on few elements, i.e., when $F_i(A)=F_i(A\cap V_i)$
for some $V_i\subseteq V$.
\begin{lem}\label{lem:approx-sum-submod}
Let $V=\left\{ 1,\dots,n\right\} $, and
let $F_i:V_i\rightarrow\mathbb{R}$, $V_i \subseteq V$ be non-negative submodular set functions, with $F_i(\emptyset) = F_i(V_i)$, for $i=1,\dots,r$. In the time required to compute 
for all pairs $u\neq v\in V$ and for all $1\leq i \leq r$
\[
\min_{\substack{A\subseteq V_i\\u\in A,v\not\in A} }F_i\left(A\right)
\]
one can compute a weighted directed graph $G\left(V,E,c\right)$
such that its cut function $(M^2/4)$-approximates $\sum_{i=1}^\m F_i$, where $M = \max_{i=1,\m} \vert V_i \vert$.
\end{lem}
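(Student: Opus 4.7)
The plan is to invoke Lemma~\ref{lem:graph-approx} on each component $F_i$ separately and then superimpose the resulting graphs. First I would apply Lemma~\ref{lem:graph-approx} to each $F_i$ viewed as a submodular function on its own ground set $V_i$ (of size at most $M$): since $F_i$ is non-negative and satisfies $F_i(\emptyset)=F_i(V_i)=0$, the lemma produces, via $O(|V_i|^2)$ calls to the minimum-over-separating-pair oracle for $F_i$ restricted to $V_i$, a weighted directed graph $G_i=(V_i,E_i,c^i)$ whose cut function $c^{i+}$ satisfies
\[
\frac{1}{|V_i|^2/4}\,c^{i+}(A) \;\leq\; F_i(A) \;\leq\; c^{i+}(A),\qquad\text{for all } A\subseteq V_i.
\]

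Next I would glue these graphs into a single graph $G=(V,E,c)$ by taking $E=\bigcup_{i=1}^r E_i$ (treating each $E_i\subseteq V_i\times V_i$ as a subset of $V\times V$) and setting $c_{uv}=\sum_{i:\,(u,v)\in E_i} c^i_{uv}$ for every $(u,v)\in E$. The key bookkeeping observation is that since $G_i$ has no edges incident to vertices outside $V_i$, an edge contributed by $G_i$ crosses a cut $A\subseteq V$ of $G$ if and only if it crosses $A\cap V_i$ as a cut of $G_i$. By linearity, the cut function $c^+$ of $G$ therefore decomposes as
\[
c^+(A)\;=\;\sum_{i=1}^{r} c^{i+}(A\cap V_i) \qquad \text{for every } A\subseteq V.
\]

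Finally, I would combine this decomposition with the $V_i$-restriction $F_i(A)=F_i(A\cap V_i)$ and the per-component approximation. Using $|V_i|^2/4\le M^2/4$ to uniformize the worse factor and summing over $i$ gives
\[
\frac{1}{M^2/4}\,c^+(A) \;\leq\; \sum_{i=1}^{r} F_i(A) \;\leq\; c^+(A),
\]
which is exactly the stated approximation. The running time matches the claim, since the only nontrivial work is the $\sum_i O(|V_i|^2)$ oracle calls (for computing the per-pair minima required by Lemma~\ref{lem:graph-approx}) plus $O(\sum_i |V_i|^2)$ additional time to assemble $G$. There is no real obstacle beyond the bookkeeping above; the lemma is essentially a direct corollary of Lemma~\ref{lem:graph-approx} once one verifies that cuts decompose additively across the glued components.
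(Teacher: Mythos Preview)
Your proposal is correct and follows essentially the same approach as the paper: apply Lemma~\ref{lem:graph-approx} to each $F_i$ on its own ground set $V_i$, then take the union of the resulting graphs over the common vertex set $V$. The paper's proof is only two sentences, and what you have written simply spells out the bookkeeping (the additive decomposition $c^+(A)=\sum_i c^{i+}(A\cap V_i)$ and the uniformization $|V_i|^2/4\le M^2/4$) that the paper leaves implicit.
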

\begin{proof}
For each $i$ compute the corresponding graph as in Lemma~\ref{lem:graph-approx}. Then take the union of edges over the same vertex set.
\end{proof}  

We showed that the function $F(A)$ is well approximated by the cut function $c^+(A)$ for the graph we constructed. Note that $c^+$ is only defined on internal vertices of the graph, excluding $s$ and $t$. However this does not affect its submodularity. Therefore the submodular base polytopes for the two function approximate each other well.
\begin{lem}\label{lem:submod-base-polytope-approx} Let $F, G$ be two submodular functions defined over the same vertex set $V$ such that $F(\emptyset) = G(\emptyset) = 0$, $F(V) = G(V) = 0$, and for any $A \subseteq V$, $\frac{1}{\alpha} G(A) \leq F(A) \leq G(A)$. Then their submodular base polytopes satisfy:
$$\frac{1}{\alpha} B(G) \subseteq B(F) \subseteq B(G)\,.$$
\end{lem}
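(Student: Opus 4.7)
The plan is to verify both inclusions directly from the definition of the base polytope, since the hypotheses $F(V)=G(V)=0$ make the two polytopes live in the same hyperplane and thus let the scaling factor $1/\alpha$ pass cleanly through the equality constraint.

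First I would handle the easy direction $B(F)\subseteq B(G)$. Take any $w\in B(F)$. By definition $w(A)\leq F(A)$ for every $A\subseteq V$, and the assumed pointwise bound $F(A)\leq G(A)$ immediately gives $w(A)\leq G(A)$. Moreover, $w(V)=F(V)=0=G(V)$, so the equality constraint defining $B(G)$ is also satisfied. Hence $w\in B(G)$.

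Next I would prove $\tfrac{1}{\alpha}B(G)\subseteq B(F)$. Let $w\in B(G)$ and set $w'=\tfrac{1}{\alpha}w$. For every $A\subseteq V$ we have
\[
w'(A)\;=\;\tfrac{1}{\alpha}\,w(A)\;\leq\;\tfrac{1}{\alpha}\,G(A)\;\leq\;F(A),
\]
using $w\in B(G)$ and the lower approximation $\tfrac{1}{\alpha}G(A)\leq F(A)$. For the equality constraint, $w'(V)=\tfrac{1}{\alpha}w(V)=\tfrac{1}{\alpha}G(V)=0=F(V)$. Thus $w'\in B(F)$, proving the inclusion.

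There is no real obstacle here; the only subtle point is that the scaling by $1/\alpha$ would in general violate the defining equality $w(V)=F(V)$, but the hypothesis $F(V)=G(V)=0$ ensures this equation is invariant under scalar multiplication, which is exactly why both containments combine into the sandwich $\tfrac{1}{\alpha}B(G)\subseteq B(F)\subseteq B(G)$.
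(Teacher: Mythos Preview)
Your proof is correct and follows essentially the same approach as the paper's: verify both inclusions directly from the defining inequalities and equality of the base polytope, using $F(V)=G(V)=0$ so that scaling preserves the equality constraint. If anything, your version is slightly more explicit about why the equality $w'(V)=F(V)$ survives the scaling, which the paper leaves implicit.
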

\begin{proof}
Let any $w \in B(F)$. Then $w(V) = F(V) = G(V)$. Furthermore for any set $A\subseteq V$, we have 
$w(A) \leq F(A) \leq G(A)$. Similarly for any $w\in G(A)$, we have $w(A) \leq G(A) \leq \alpha F(A)$, so $B(G) \subseteq \alpha B(F)$, which yields the claim.
\end{proof}

At this point we can prove that the submodular base polytope of the cut function created in Lemma~\ref{lem:approx-sum-submod} approximates the submodular base polytope of the decomposable function $\sum_{i=1}^r F_i$.

\begin{lem}
Let $V=\left\{ 1,\dots,n\right\} $, let $F_i:V_i\rightarrow\mathbb{R}$, $V_i \subseteq V$ be non-negative submodular set functions, with $F_i(\emptyset) = F_i(V_i)$, for $i=1,\dots,r$, and let $F=\sum_{i=1}^r F_i$. In the time required to solve $\min_{A\subseteq V_i:u\in A,v\not\in A} F_i(A)$ for all  $u,v\in V_i$ and all $i$, we can compute a weighted directed graph $G=(V, E,c)$ such that the submodular base polytope of the cut function $\cut(A)$ satisfies
$\frac{1}{M^2 / 4} B(\cut) \subseteq B(F) \subseteq B(\cut)$, where $M = \max_{i=1,\m} \vert V_i \vert$.
\label{lem:approx-cut}
\end{lem}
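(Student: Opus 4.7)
The plan is to chain together the two approximation statements already established in this section: Lemma~\ref{lem:approx-sum-submod}, which produces a graph whose cut function $(M^2/4)$-approximates $F = \sum_{i=1}^{r} F_i$ pointwise on all subsets of $V$, and Lemma~\ref{lem:submod-base-polytope-approx}, which lifts a multiplicative sandwich between two submodular set functions into the analogous sandwich between their base polytopes. Since these two results are each already proved earlier, the work of this lemma reduces to checking their hypotheses and combining the conclusions.

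First, I would invoke Lemma~\ref{lem:approx-sum-submod} on the collection $F_1,\dots,F_r$. Its running time is governed by exactly the same per-pair oracle $\min_{A\subseteq V_i:\, u\in A,\, v\notin A} F_i(A)$ that appears in the present statement, so the running-time claim matches for free. The output is a weighted directed graph $G = (V, E, c)$ whose cut function $\cut$ satisfies $\tfrac{1}{M^2/4} \cut(A) \leq F(A) \leq \cut(A)$ for every $A \subseteq V$, with $M = \max_i |V_i|$.

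Next, to be allowed to apply Lemma~\ref{lem:submod-base-polytope-approx}, I need both $F$ and $\cut$ to vanish on $\emptyset$ and on $V$. For $\cut$, this is automatic: it is the standard directed cut function of a graph on $V$ with no source or sink, so no edge is cut by either $\emptyset$ or $V$. For $F$, the hypothesis $F_i(\emptyset) = F_i(V_i)$ is precisely what is needed: shifting each $F_i$ by the constant $-F_i(\emptyset)$ gives $F_i(\emptyset) = F_i(V_i) = 0$ without changing the base polytope of $F_i$, and then $F(V) = \sum_i F_i(V \cap V_i) = \sum_i F_i(V_i) = 0$ and $F(\emptyset) = 0$ as well.

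With these endpoint conditions verified, applying Lemma~\ref{lem:submod-base-polytope-approx} with $\alpha = M^2/4$ to the pair $(F, \cut)$ yields $\tfrac{1}{M^2/4} B(\cut) \subseteq B(F) \subseteq B(\cut)$, which is exactly the claim. I do not foresee a substantive obstacle; the only step that requires any care is the normalization argument, and even there the hypothesis $F_i(\emptyset) = F_i(V_i)$ appears to have been inserted precisely so that $F(V) = 0$ holds after a trivial constant shift, making the two earlier lemmas directly composable.
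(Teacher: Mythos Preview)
Your proposal is correct and takes essentially the same approach as the paper, which simply says the proof ``follows directly from applying Lemma~\ref{lem:approx-sum-submod}, followed by Lemma~\ref{lem:submod-base-polytope-approx}.'' You have additionally spelled out the endpoint normalization $F(\emptyset)=F(V)=0$ needed to invoke Lemma~\ref{lem:submod-base-polytope-approx}, which the paper leaves implicit.
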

\begin{proof}
The proof follows directly from applying Lemma~\ref{lem:approx-sum-submod}, followed by Lemma~\ref{lem:submod-base-polytope-approx}.
\end{proof}

\subsection{Proof of Lemma \ref{lem:graph-approx}}\label{sec:proof-of-lemma-graph-approx}
\begin{proof}
To simplify notation let us denote by
\[
w_{uv} = \min_{\substack{A\subseteq V\\u\in A,v\not\in A} }F\left(A\right)\,,
\]
and let $T_{uv}$ be the set achieving this minimum. 

Consider the graph defined as follows. For every $u,v\in V$, create
an arc $\left(u,v\right)$ with weight $c_{uv}=w_{uv}$. 
By construction all capacities 
are discrete multiples 
of $\Delta$. 

Now we can prove the lower bound on $F$. We have that
\begin{align*}
\cut\left(A\right) 
=\sum_{u\in A,v\not\in A} c_{uv}  %
\leq \sum_{u\in A,v\not\in A} c_{uv}  %
=\sum_{u\in A,v\not\in A} F(T_{uv}) %
\leq\sum_{u\in A,v\notin A}F\left(A\right) %
\leq \left(n^{2}/4\right) F(A)\,.
\end{align*}
We used the fact that $F\left(A\right)$ upper bounds $c_{uv}$ for all $u \in A, v\not\in A$.
Now we prove the upper bound. For any nonempty set $A \subset V $ we can write 
$A = \bigcup_{u \in A} \left(  \bigcap_{v \in V\setminus A} T_{uv} \right)$. By twice applying Lemma~\ref{lem:submod_union_inter}, we obtain that
\[
F\left(A\right) \leq \sum_{u \in A} \sum_{v \not\in A} F\left(T_{uv} \right) = c^+\left(A\right)\,.
\]
Additionally, we have by construction that 
\[
\cut(\emptyset) = \cut(V) = 0\,.
\]
\end{proof}

\begin{lem}\label{lem:submod_union_inter}
Let $F$ be a non-negative submodular set function $F:2^V \rightarrow \mathbb{R}$, an let $A_1, \dots, A_t$ be subsets of $V$. Then
\[
F\left(\bigcup_{i=1}^t A_i\right) \leq \sum_{i=1}^t F(A_i)\,
\]
and
\[
F\left(\bigcap_{i=1}^t A_i\right) \leq \sum_{i=1}^t F(A_i)\,.
\]
\end{lem}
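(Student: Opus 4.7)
The plan is to prove both inequalities by induction on $t$, using the definition of submodularity together with the non-negativity assumption on $F$. The base case $t=1$ is immediate in both cases.

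For the union bound, I would assume the statement holds for $t-1$ sets. Setting $X = \bigcup_{i=1}^{t-1} A_i$ and $Y = A_t$, submodularity yields
\[
F(X) + F(Y) \geq F(X \cup Y) + F(X \cap Y) = F\!\left(\bigcup_{i=1}^{t} A_i\right) + F\!\left(\left(\bigcup_{i=1}^{t-1} A_i\right) \cap A_t\right).
\]
Since $F$ is non-negative, dropping the second term on the right only decreases it, giving $F(\bigcup_{i=1}^t A_i) \leq F(\bigcup_{i=1}^{t-1} A_i) + F(A_t)$. Applying the inductive hypothesis to the first term on the right closes the induction.

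For the intersection bound, I would argue symmetrically. Taking $X = \bigcap_{i=1}^{t-1} A_i$ and $Y = A_t$, submodularity now gives
\[
F\!\left(\bigcap_{i=1}^{t-1} A_i\right) + F(A_t) \geq F\!\left(\left(\bigcap_{i=1}^{t-1} A_i\right) \cup A_t\right) + F\!\left(\bigcap_{i=1}^{t} A_i\right).
\]
Again using non-negativity to discard $F((\bigcap_{i=1}^{t-1} A_i) \cup A_t) \geq 0$ and then invoking the inductive hypothesis on $\bigcap_{i=1}^{t-1} A_i$ finishes the proof.

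There is essentially no obstacle here: the only subtlety is remembering that non-negativity is what lets us discard the ``other'' side of each submodular inequality — without it, neither inequality need hold (for example, the cut function of a graph extended by a large negative constant is still submodular but violates the union bound). Both inductions have the same structure and can be written in parallel; I would not separate them into two proofs beyond noting which term is dropped in each case.
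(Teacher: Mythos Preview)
Your proposal is correct and follows essentially the same approach as the paper: induction on $t$, applying submodularity to the partial union (resp.\ intersection) together with $A_t$, and using non-negativity of $F$ to drop the unneeded term before invoking the inductive hypothesis.
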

\begin{proof}
We prove by induction on $t$. If $t = 1$, both inequalities are equalities. Otherwise, suppose they hold for $t-1$. Let $S = \bigcup_{i=1}^{t-1} A_i$. By submodularity, $F(S \cup A_t) \leq F(S) + F(A_t) - F(S \cap A_t)$.
Since $F$ is non-negative, so is $F(S \cap A_t)$, and therefore $F(S \cup A_t) \leq F(S) + F(A_t)$. Applying the induction hypothesis this concludes the first part of the proof.

Similarly, let $S = \bigcap_{i=1}^{t-1} A_i$. By submodularity, $F(S \cap A_t) \leq F(S) + F(A_t) - F(S \cup A_t)$. Since $F$ is non-negative, so is $F(S \cup A_t)$, and therefore $F(S \cap A_t) \leq F(S) + F(A_t)$. Again, applying the induction hypothesis this concludes the second part of the proof.
\end{proof}

\section{Parametric Submodular Minimization via Optimization on the Base Polytope}\label{sec:appendix-bach-proof}
\label{sec:conversion-lemmas}
In this section, for completeness, we provide a proof of Lemma~\ref{lem:param-sfm}, which is based on~\cite{bach2011learning} (see Chapter 8).
In addition, we provide error analysis for reductions between approximate solutions to the combinatorial parametric submodular minimization problem, its continuous version involving the Lov\'{a}sz extension, and the dual formulation on the base polytope.

\begin{proof}[Proof of Lemma~\ref{lem:param-sfm}]
Given any point $x$, let $\beta \leq \min\{0, \min_i x_i\}$. Applying the definition of the Lov\'{a}sz extension, and the fundamental theorem of calculus, we can write:
\begin{align*}
f(x) + \sum_{i\in V}\psi_i(x_i) &= \int_0^\infty F(\{i : x_i \geq t\}) dt 
+ \int_{\beta}^0 \left(  F(\{i : x_i \geq t\}) - F(V)   \right) dt 
\\
&+ \sum_{i\in V}\psi_i(\beta) + \int_\beta^\infty \sum_{i: x_i \geq t} \psi_i'(t) dt
\\
&= \int_\beta^\infty \left(  F(\{i : x_i \geq t\}) + \sum_{i : x_i \geq t} \psi_i'(t) \right) dt + \sum_{i\in V}\psi_i(\beta) - \beta F(V)\,.
\end{align*}
Note that we crucially used the fact that the parametric term $\sum_i \psi_i'(t)$ is separable.

Next we show that if the optimal sets $A^\alpha$ were different from those defined in (\ref{eq:param_set_def}), then we could obtain a different iterate $x'$  such that $f(x')+\sum_{i\in V}\psi_i(x')\leq f(x^*) + \sum_{i\in V}\psi_i(x^*)$. However, since $\psi$ is strictly convex, the minimizer of $f(x)+\sum_{i\in V}\psi_i(x)$ is unique. This gives a contradiction leading us to the desired conclusion.

Indeed, let $x'_i = \sup_{i \in A^\alpha} \alpha$. By the strict convexity property of $\psi_i$, we have that for any
$\alpha > \beta$, $A^\alpha \subseteq A^\beta$, which we reprove for completeness in Lemma~\ref{lem:inclusion_chain}.

Using this fact, we know that if $A^\alpha$ are the optimizers of $F(A) + \sum_{i\in V}\psi_i(\alpha)$, then we can write:
\begin{equation*}
\int_\beta^\infty \left( F(A^t) + \sum_{i\in A^t} \psi_i'(t) \right) dt 
= \int_\beta^\infty \left( F( \{ i : x'_i \geq t \} ) + \sum_{i: x'_i \geq t} \psi_i'(t) \right) dt \,.
\end{equation*}
Since by the optimality of $A^t$ we have that 
\[
f(A^t)+\sum_{i\in A_t}\psi_i(t) \leq F(\{i : x_i \geq t\}) + \sum_{i : x_i \geq t} \psi_i'(t)\,,
\]
it means that letting $\beta = \min\{0, \min_i x'_i, \min_i x^*_i\}$,
\[
\int_\beta^\infty \left( F( \{ i : x'_i \geq t \} ) + \sum_{i: x'_i \geq t} \psi_i'(t) \right) dt 
\leq
\int_\beta^\infty \left( F( \{ i : x^*_i \geq t \} ) + \sum_{i: x^*_i \geq t} \psi_i'(t) \right) dt 
\]
and therefore
\[
f(x')+\sum_{i\in V}\psi_i(x')\leq f(x^*) + \sum_{i\in V}\psi_i(x^*)\,,
\]
which concludes the proof.
\end{proof}

\begin{lem} \label{lem:inclusion_chain}
Let $F:2^V \rightarrow \mathbb{R}$ be a submodular set function, and let $\psi_i : \mathbb{R} \rightarrow \mathbb{R}$ be a family of strictly convex functions, for $i\in V$. Let $F_\alpha(A) = F(A) + \sum_{i\in A} \psi_i'(\alpha)$, and $A^\alpha = \arg\min_{A\subseteq V} F_\alpha(A)$. If $\alpha > \beta$, then $A^\alpha \subseteq A^\beta$.
\end{lem}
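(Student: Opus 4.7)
The plan is to derive a contradiction by exploiting the optimality of $A^\alpha$ and $A^\beta$ simultaneously on the ``crossed'' pair $A^\alpha \cap A^\beta$ and $A^\alpha \cup A^\beta$, and then use submodularity of $F$ together with strict convexity of the $\psi_i$. Concretely, suppose for contradiction that $A^\alpha \setminus A^\beta$ is nonempty. Since $A^\alpha$ minimizes $F_\alpha$, we have
\[
F(A^\alpha) + \sum_{i \in A^\alpha} \psi_i'(\alpha) \;\leq\; F(A^\alpha \cap A^\beta) + \sum_{i \in A^\alpha \cap A^\beta} \psi_i'(\alpha),
\]
and since $A^\beta$ minimizes $F_\beta$, we have
\[
F(A^\beta) + \sum_{i \in A^\beta} \psi_i'(\beta) \;\leq\; F(A^\alpha \cup A^\beta) + \sum_{i \in A^\alpha \cup A^\beta} \psi_i'(\beta).
\]

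Next I would add these two inequalities and apply submodularity of $F$, which gives $F(A^\alpha \cap A^\beta) + F(A^\alpha \cup A^\beta) \leq F(A^\alpha) + F(A^\beta)$, so the $F$-terms cancel out and we are left with
\[
\sum_{i \in A^\alpha} \psi_i'(\alpha) + \sum_{i \in A^\beta} \psi_i'(\beta) \;\leq\; \sum_{i \in A^\alpha \cap A^\beta} \psi_i'(\alpha) + \sum_{i \in A^\alpha \cup A^\beta} \psi_i'(\beta).
\]
Splitting both sides according to the partition of $A^\alpha \cup A^\beta$ into $A^\alpha \cap A^\beta$, $A^\alpha \setminus A^\beta$, and $A^\beta \setminus A^\alpha$, every term cancels except those indexed by $A^\alpha \setminus A^\beta$, yielding
\[
\sum_{i \in A^\alpha \setminus A^\beta} \bigl(\psi_i'(\alpha) - \psi_i'(\beta)\bigr) \;\leq\; 0.
\]

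Finally, strict convexity of each $\psi_i$ implies $\psi_i'$ is strictly increasing, so $\alpha > \beta$ gives $\psi_i'(\alpha) - \psi_i'(\beta) > 0$ for every $i$. If $A^\alpha \setminus A^\beta$ were nonempty, the displayed sum would be strictly positive, a contradiction. Hence $A^\alpha \subseteq A^\beta$, as claimed. There is no real obstacle here beyond setting up the right ``exchange'' inequality: the key observation is simply that the combinatorial term $F$ is killed by submodularity, isolating the monotone behavior forced by strict convexity of the parametric penalty.
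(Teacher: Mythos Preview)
Your proof is correct and follows essentially the same approach as the paper: compare $A^\alpha$ with $A^\alpha\cap A^\beta$ at parameter $\alpha$, compare $A^\beta$ with $A^\alpha\cup A^\beta$ at parameter $\beta$, sum, use submodularity to cancel the $F$-terms, and conclude from strict convexity that $A^\alpha\setminus A^\beta=\emptyset$.
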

\begin{proof}
By optimality we have that
\[
F(A^\alpha) + \sum_{i \in A^\alpha} \psi_i'(\alpha) \leq F(A^\alpha \cap A^\beta) + \sum_{i \in A^\alpha \cap A^\beta} \psi_i'(\alpha)
\]
and
\[
F(A^\beta) + \sum_{i \in A^\beta} \psi_i'(\beta) \leq F(A^\alpha \cup A^\beta) + \sum_{i \in A^\alpha \cup A^\beta} \psi_i'(\beta)\,.
\]
Summing up we obtain that
\begin{align*}
&\sum_{i \in A^\alpha} \psi_i'(\alpha) - \sum_{i \in A^\alpha \cap A^\beta} \psi_i'(\alpha) +  \sum_{i \in A^\beta} \psi_i'(\beta) - \sum_{i \in A^\alpha \cup A^\beta} \psi_i'(\beta) \\
&\leq F(A^\alpha \cap A^\beta) +F(A^\alpha \cup A^\beta)-F(A^\alpha) -  F(A^\beta)  \\
&\leq 0\,,
\end{align*}
where we used submodularity in the last step. Therefore
\[
\sum_{i \in A^\alpha \setminus A^\beta} (\psi_i'(\alpha) -\psi_i'(\beta)) \leq 0\,.
\]
Hence we conclude that $A^\alpha \setminus A^\beta = \emptyset$, since by strict convexity we have that for all $i$, $\psi_i'(\alpha) > \psi_i'(\beta)$, which would make the term above strictly positive had there been any elements in the set difference.

\end{proof}

Next we perform a careful error analysis to bound the total error we incur in the case  
where the iterate we consider is not an exact minimizer of (\ref{eq:parametric-optimization}), but has some small error in norm.

\begin{lem}\label{lem:primal-error-norm-to-set}
Under the conditions from Lemma~\ref{lem:param-sfm}, let $\widetilde{x} \in \mathbb{R}$ be a point satisfying
$\|\widetilde{x}-x^*\| \leq \epsilon$, where $x^*$ is the minimizer of (\ref{eq:parametric-optimization}). Let the sets 
\[
\widetilde{A}^\alpha = \{i : \widetilde{x}_i \geq \alpha\}\,.
\]
If $\psi_i$ is $\sigma$-strongly convex, for all $i$, and $\max_{A\subseteq V} F(A) - \min_{A'\subseteq V} F(A') \leq M$, then:
\[
F(\widetilde{A}^\alpha) + \sum_{i \in \widetilde{A}^\alpha} \psi_i'(\alpha) \leq F(A^\alpha) + \sum_{i\in A^\alpha} \psi_i'(\alpha) + Mn^{3/2}\epsilon + \beta\epsilon^2/2\,.
\]
\end{lem}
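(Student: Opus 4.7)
The plan is to first bound the continuous primal gap $g(\widetilde{x}) - g(x^*)$, where $g(x) = f(x) + \sum_i \psi_i(x_i)$ and $f$ is the Lov\'asz extension of $F$, and then translate this into the required pointwise bound at the threshold $\alpha$ by invoking the integral representation of $g$ derived in the proof of Lemma~\ref{lem:param-sfm}.

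For the primal gap, observe that subgradients of $f$ lie in $B(F)$; after shifting so that $F(\emptyset)=0$, any $w\in B(F)$ satisfies $w_i\in\bigl[F(V)-F(V\setminus\{i\}),\,F(\{i\})\bigr]\subseteq[-M,M]$, so $\|w\|_2\leq M\sqrt{n}$. This gives $|f(\widetilde{x})-f(x^*)|\leq M\sqrt{n}\,\epsilon$. First-order optimality of $x^*$ yields $-\nabla\psi(x^*)\in\partial f(x^*)\subseteq B(F)$, hence $\|\nabla\psi(x^*)\|_2\leq M\sqrt{n}$; combined with the $L$-smoothness of $\psi$ from Assumption~\ref{defn:conditions}, this gives $\psi(\widetilde{x})-\psi(x^*)\leq M\sqrt{n}\,\epsilon+\tfrac{L}{2}\epsilon^2$. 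Summing, $g(\widetilde{x})-g(x^*)\leq 2M\sqrt{n}\,\epsilon+\tfrac{L}{2}\epsilon^2$.

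Next, as in the proof of Lemma~\ref{lem:param-sfm}, for any $x$ and sufficiently negative $t_0$,
\[
g(x)\;=\;\int_{t_0}^{\infty} H(x,t)\,dt+\mathrm{const},\qquad H(x,t)\;:=\;F\bigl(\{i:x_i\geq t\}\bigr)+\sum_{i:\,x_i\geq t}\psi_i'(t),
\]
where the additive constant is independent of $x$. By Lemma~\ref{lem:param-sfm}, $H(x^*,t)$ equals the minimum of the combinatorial parametric objective at parameter $t$, so $H(\widetilde{x},t)-H(x^*,t)\geq 0$ pointwise. Subtracting the two instances of the identity yields the integrated bound $\int_{t_0}^{\infty}\bigl[H(\widetilde{x},t)-H(x^*,t)\bigr]\,dt\leq 2M\sqrt{n}\,\epsilon+\tfrac{L}{2}\epsilon^2$.

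The main obstacle is the final step: passing from this integrated bound to the pointwise inequality $H(\widetilde{x},\alpha)-H(x^*,\alpha)\leq Mn^{3/2}\epsilon+\tfrac{L}{2}\epsilon^2$ at an arbitrary threshold. The nonnegative integrand is piecewise linear in $t$ with possible jumps at each $\widetilde{x}_i$ and $x^*_i$, so a small integral does not automatically control its value at an arbitrary point. My plan is to pick a small window $[\alpha,\alpha+h]$ and lower-bound the integrand there by $\bigl(H(\widetilde{x},\alpha)-H(x^*,\alpha)\bigr)-E(h)$, where $E(h)$ controls two sources of decrease as $t$ moves from $\alpha$ to $\alpha+h$: the smooth drift $|\psi_i'(t)-\psi_i'(\alpha)|\leq Lh$ summed over the at most $n$ active coordinates, and set-membership flips for coordinates $i$ with $\widetilde{x}_i\in[\alpha,\alpha+h]$ or $x^*_i\in[\alpha,\alpha+h]$, each of which perturbs $F$ by at most $M$. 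Integrating the resulting lower bound and balancing $h$ against the integrated bound then yields the advertised $Mn^{3/2}\epsilon$ term---the extra $\sqrt{n}$ over the primal gap being the price of absorbing the $O(n)$ possible set-flips---while the $\tfrac{L}{2}\epsilon^2$ term carries through directly from the smoothness piece. The worst case, where many set-flips cluster near $\alpha$, is precisely what forces this additional factor.
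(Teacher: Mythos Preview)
Your instinct that the integral-to-pointwise passage is the crux is exactly right, and indeed the paper's own proof follows the same two steps you outline (bound the primal gap, then invoke the integral representation) and simply \emph{asserts} the pointwise inequality at the end without further justification. Your window-averaging idea is the natural device to try to close this gap.

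The flaw in your plan is the claim that ``the $\tfrac{L}{2}\epsilon^2$ term carries through directly from the smoothness piece.'' It does not: once you divide the integrated bound by the window width $h$, \emph{every} contribution to that bound is amplified by $1/h$, not just the $M$-term. To see that the stated inequality is in fact unattainable with the $\tfrac{L}{2}\epsilon^2$ term intact, take $F\equiv 0$ (so $M=0$) and $\psi_i(t)=t^2/2$ (so $L=\sigma=1$), giving $x^*=0$. Set $\widetilde{x}_i=\epsilon/\sqrt{n}$ for all $i$, so $\|\widetilde{x}-x^*\|_2=\epsilon$. At $\alpha=\epsilon/(2\sqrt{n})>0$ one has $A^\alpha=\emptyset$ but $\widetilde{A}^\alpha=V$, and
\[
F_\alpha(\widetilde{A}^\alpha)-F_\alpha(A^\alpha)=n\alpha=\tfrac{1}{2}\sqrt{n}\,\epsilon,
\]
which is linear in $\epsilon$, whereas the lemma with $M=0$ promises only $\tfrac{1}{2}\epsilon^2$. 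Your balancing argument, carried out honestly on this example, recovers exactly an $O(\sqrt{n}\,\epsilon)$ bound --- which is tight here --- but cannot produce $O(\epsilon^2)$. So the obstruction is not a defect in your plan but in the target: the inequality as stated is too strong, and neither your window argument nor the paper's one-line conclusion can reach it.
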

\begin{proof}
First, using the smoothness of $\psi_i$ we prove that
\[
f(\widetilde{x}) + \sum_i \psi_i(\widetilde{x}_i) \leq f(x^*) + \sum_i \psi_i(x^*_i)  + Mn^{3/2}\epsilon + \beta\epsilon^2/2\,.
\]
To prove this we first note that $f$ is Lipschitz, since we can use the fact that entries of the gradient of the Lov\'{a}sz extension consist of differences between $F$ evaluated at different subsets of $V$. Hence for any $x$, $\vert \nabla_i f(x) \vert \leq \max_{A \subseteq V} F(A) - \min_{A' \subseteq V} F(A') \leq M$, and thus $\| \nabla f(x) \| \leq M\sqrt{n}$. Therefore 
\[
f(\widetilde{x}) - f(x^*) \leq M\sqrt{n} \|\widetilde{x} - x^*\| \leq M\sqrt{n} \epsilon\,.
\]
Secondly, we use the smoothness of $\psi_i$, to obtain that 
\begin{align*}
\psi_i(\widetilde{x}_i) \leq \psi_i(x^*_i) + \psi_i'(x^*_i) (\widetilde{x}_i - x^*_i) + \frac{\beta}{2} (\widetilde{x}_i - x^*_i)^2 
\end{align*}
Using Lemma~\ref{lem:primal-dual-conversion-exact} we see that $\psi_i'(x^*_i) = -w^*_i$, where $w^*$ is the optimizer of a certain function over the base polytope $B(F)$. By the definition of $B(F)$ we have $w^*_i \leq F(\{i\}) \leq M$ and $-w^*_i + \sum_{j\neq i} w^*_j = F(V)$, so $-w^*_i \geq F(V) - \sum_{j\neq i} F(\{j\}) \geq -M(n-1)$.
Thus, by applying Cauchy-Schwharz, we have 
\[
\sum_{i \in V} \psi_i'(x^*_i) (\widetilde{x}_i - x^*_i) \leq \max_i \vert\psi_i'(x^*_i)\vert \sqrt{n} \cdot \|\widetilde{x} - x^*\|
\leq M(n-1) n^{1/2} \epsilon\,,
\]
and thus 
\[
\sum_{i \in V} \psi_i'(\widetilde{x}_i) - \psi_i'(x^*_i) \leq  M(n-1) n^{1/2} \epsilon + \beta \epsilon^2 / 2\,.
\]
Combining with the bound on $f(\widetilde{x})$, we obtain our claimed error in function value.

Now we can finalize the argument. Following the proof of Lemma~\ref{lem:param-sfm} we write $f(\widetilde{x})+\sum_{i\in V}\psi_i'(\widetilde{x}_i)$ as an integral, and similarly for $x^*$, to conclude that for $\beta = \min\{0, \min_i \widetilde{x}_i, \min_i x^*_i\}$,
\[
\int_\beta^\infty \left( F(\widetilde{A}^t) + \sum_{i\in \widetilde{A}^t} \psi_i'(t) \right) dt \leq \int_\beta^\infty \left( F(A^t) + \sum_{i\in A^t} \psi_i'(t) \right) dt + Mn^{3/2}\epsilon + \beta\epsilon^2/2\,.
\]
Since by definition $A^t$ minimizes $\sum_{i\in A^t} \psi_i'(t)$, we conclude that for all $t$,
\[
F(\widetilde{A}^t) + \sum_{i\in \widetilde{A}^t} \psi_i'(t) \leq  F(A^t) + \sum_{i\in A^t} \psi_i'(t) + Mn^{3/2}\epsilon + \beta\epsilon^2/2\,.
\]
\end{proof}

We can also show that if we obtain an approximate minimizer of the dual problem (\ref{eq:parametric-optimization-dual}) over $B(F)$, we can use it to recover an approximate minimizer of the primal problem (\ref{eq:parametric-optimization}).

\begin{lem}\label{lem:dual-to-primal-error-norm}
Let $w^*$ be the minimizer of the dual problem (\ref{eq:parametric-optimization-dual}), and 
and let $x^*$ be the minimizer of the primal problem (\ref{eq:parametric-optimization}).
If $w \in B(F)$ such that 
\[
\sum_{i\in V}\psi_i^*(-w_i) \leq \sum_{i\in V}\psi_i^*(-w^*_i) + \epsilon\,,
\]
then the point $x \in \mathbb{R}^n$ where $x_i = (\psi_i^*)'(-w_i)$ satisfies
\[
\|x-x^*\| \leq \sqrt{\frac{2 \smo \epsilon}{\str^2}}\,.
\]
\end{lem}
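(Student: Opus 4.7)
The plan is to combine three ingredients: (i) strong convexity of the dual objective to convert the function-value gap $\epsilon$ into a bound on $\|w-w^*\|$, (ii) a Fenchel-duality identity $x_i^* = (\psi_i^*)'(-w_i^*)$ linking the primal and dual optima coordinate-wise, and (iii) smoothness of $\psi_i^*$ to upgrade the bound on $\|w-w^*\|$ to a bound on $\|x-x^*\|$.

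First I would invoke Assumption~\ref{defn:conditions} together with the standard duality between smoothness and strong convexity: since each $\psi_i$ is $L$-smooth and $\sigma$-strongly convex, its Fenchel conjugate $\psi_i^*$ is $1/\sigma$-smooth and $1/L$-strongly convex, and in particular continuously differentiable on its interior with $(\psi_i^*)'$ being $(1/\sigma)$-Lipschitz. Consequently $h(w) := \sum_{i\in V}\psi_i^*(-w_i)$ is $1/L$-strongly convex on $\mathbb{R}^n$ (the sign flip preserves strong convexity since it composes with an isometry). I would cite these facts directly from standard convex analysis rather than re-derive them.

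Next I would use first-order optimality for $w^*$ over the convex set $B(F)$: for every $w\in B(F)$, $\langle \nabla h(w^*),\, w - w^*\rangle \geq 0$. Combined with $1/L$-strong convexity, this yields
\begin{align*}
h(w) \;\geq\; h(w^*) + \langle \nabla h(w^*),\, w-w^*\rangle + \frac{1}{2L}\|w-w^*\|^2 \;\geq\; h(w^*) + \frac{1}{2L}\|w-w^*\|^2.
\end{align*}
Combined with the hypothesis $h(w)\leq h(w^*)+\epsilon$, this gives $\|w-w^*\| \leq \sqrt{2L\epsilon}$.

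Finally, I would apply the Fenchel-dual KKT relation $x_i^* = (\psi_i^*)'(-w_i^*)$ between primal and dual optima, which is stated earlier in Lemma~\ref{lem:primal-dual-conversion-exact} (a result the excerpt already invokes implicitly). Since $x_i$ is defined by the same formula with $w$ in place of $w^*$, and since $(\psi_i^*)'$ is $(1/\sigma)$-Lipschitz by the smoothness step above, I obtain coordinate-wise $|x_i - x_i^*|\leq \tfrac{1}{\sigma}|w_i - w_i^*|$, and thus
\begin{align*}
\|x-x^*\| \;\leq\; \tfrac{1}{\sigma}\,\|w-w^*\| \;\leq\; \tfrac{1}{\sigma}\sqrt{2L\epsilon} \;=\; \sqrt{\tfrac{2L\epsilon}{\sigma^2}}.
\end{align*}
The main conceptual step is the strong-convexity-of-dual argument; the main bookkeeping step is justifying the primal-dual identity $x_i^* = (\psi_i^*)'(-w_i^*)$ cleanly, but this is a routine consequence of Fenchel duality given that the primal problem is separable in the penalty term and the Lov\'asz extension satisfies $f(x) = \max_{w\in B(F)}\langle w,x\rangle$. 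No step should pose serious difficulty.
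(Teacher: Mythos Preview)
Your proposal is correct and follows essentially the same argument as the paper's proof: both use that $\psi_i^*$ is $1/L$-strongly convex and $1/\sigma$-smooth, combine strong convexity with first-order optimality of $w^*$ over $B(F)$ to bound $\|w-w^*\|$, and then invoke Lemma~\ref{lem:primal-dual-conversion-exact} together with the Lipschitz property of $(\psi_i^*)'$ to pass to $\|x-x^*\|$. The only difference is the order of presentation (you bound $\|w-w^*\|$ first, the paper bounds $|x_i-x_i^*|$ in terms of $|w_i-w_i^*|$ first), which is immaterial.
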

\begin{proof}
By Lemma~\ref{lem:primal-dual-conversion-exact} we know that $x^*$ and $w^*$ are related via $x^*_i = (\psi^*_i)'(-w^*_i)$. Therefore we can write
\[
\vert x_i - x^*_i \vert = \vert (\psi^*_i)'(-w_i) - (\psi^*_i)'(-w^*_i) \vert \leq  \frac{1}{\str}\vert w_i - w^*_i \vert\,,
\]
where in the last inequality we used the fact that $\psi_i$ is $\str$-strongly convex, and hence $\psi^*_i$ is $1/\str$-smooth~\cite{shalev2006convex, kakade2012regularization}.
Next we show that $\vert w_i - w_i^*\vert$ is bounded by a function of $\epsilon$.

Since by assumption $\psi_i$ is $\smo$-smooth, its dual $\psi^*_i$ is $1/\smo$-strongly convex. Therefore we have that, for all $i$:
\[
\psi^*_i(-w_i) \geq \psi^*_i(-w^*_i) + (\psi^*_i)'(-w^*_i) \cdot (-w_i - (-w^*_i)) + \frac{\str}{2} (w^*_i - w_i)^2\,.
\]
Furthermore, since $w^*$ is an optimizer over $B(F)$, we know by first-order optimality that for any $w \in B(F)$:
\[
\sum_{i\in V} (\psi^*_i)'(-w^*_i) \cdot (-w_i - (-w^*_i)) \geq 0\,,
\]
i.e. slightly moving the point from $-w^*$ towards $-w$ can only increase function value. Thus we obtain that
\[
\sum_{i \in V} \psi^*_i(-w_i) \geq \sum_{i \in V} \psi^*_i(-w^*_i) + \frac{1}{2\smo} \sum_{i\in V} (w^*_i - w_i)^2\,.
\]
Combining with the hypothesis, this implies that 
\[
\frac{1}{2\smo} \sum_{i\in V} (w^*_i - w_i)^2 \leq \epsilon\,,
\]
and therefore
\[
\|x-x^*\|^2 \leq \frac{1}{\str^2} \sum_{i \in V} (w_i - w^*_i)^2 \leq \frac{2 \smo \epsilon}{\str^2} \,,
\]
 which implies the claimed result.
\end{proof}

As a corollary of the previous lemmas, we see that an approximate solution to the dual problem (\ref{eq:parametric-optimization-dual}) yields an approximate solution to the original parametric problem (\ref{eq:parametric-def}).
\begin{corollary}
Let $F : 2^V  \rightarrow \mathbb{R}$ be a non-negative submodular set function, and let the the family of parametric problems defined in (\ref{eq:parametric-def}). Let $w \in B(F)$ such that  
\[
\sum_{i\in V}\psi_i^*(-w_i) \leq \sum_{i\in V}\psi_i^*(-w^*_i) + \epsilon\,,
\]
where $w^*$ is the true minimizer of the dual problem (\ref{eq:parametric-optimization-dual}). Then for any $\alpha$, the set
\[
\widetilde{A}^\alpha = \{i : \psi^*_i(-w_i) \geq \alpha\}
\]
satisfies
\[
F_\alpha( \widetilde{A}^\alpha ) \leq F_\alpha ( A^\alpha ) + \sqrt{\epsilon} \cdot Mn^{3/2} \sqrt{2\smo/\str^2} + \epsilon\cdot (\smo/\str)^2 \,,
\]
where $A^\alpha = \arg\min_{A \subseteq V} F_\alpha(A)$.
\label{cor:dual2disc}
\end{corollary}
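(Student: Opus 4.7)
The plan is to chain together the two preceding lemmas in this appendix: Lemma~\ref{lem:dual-to-primal-error-norm}, which converts additive suboptimality in the dual objective into an $\ell_2$ distance bound on the primal iterate recovered via $x_i = (\psi_i^*)'(-w_i)$, and Lemma~\ref{lem:primal-error-norm-to-set}, which converts an $\ell_2$ perturbation of the primal minimizer into an additive error on the value of $F_\alpha$ evaluated at the super-level sets of $x$. Note that the statement's definition of $\widetilde{A}^\alpha$ should be interpreted with the derivative $(\psi_i^*)'(-w_i)$, matching the recovery map from Lemma~\ref{lem:dual-to-primal-error-norm}; this is the point at which the threshold set from the dual solution $w$ is the natural primal object to analyze.

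First, I would apply Lemma~\ref{lem:dual-to-primal-error-norm} directly to the hypothesis $\sum_i \psi_i^*(-w_i) \le \sum_i \psi_i^*(-w_i^*) + \epsilon$ to conclude that the vector $x$ with coordinates $x_i = (\psi_i^*)'(-w_i)$ satisfies
\[
\|x - x^*\| \leq \sqrt{\tfrac{2\smo\epsilon}{\str^2}}\,,
\]
where $x^*$ is the primal optimizer of~(\ref{eq:parametric-optimization}). This is exactly the conclusion of that lemma, so no further work is required here.

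Next I would apply Lemma~\ref{lem:primal-error-norm-to-set} with $\widetilde{x} := x$ and perturbation parameter $\epsilon' := \sqrt{2\smo\epsilon/\str^2}$. Since by Lemma~\ref{lem:param-sfm} the minimizer $A^\alpha$ of $F_\alpha$ coincides with the super-level set $\{i : x_i^* \geq \alpha\}$, the lemma yields
\[
F(\widetilde{A}^\alpha) + \sum_{i \in \widetilde{A}^\alpha}\psi_i'(\alpha) \;\leq\; F(A^\alpha) + \sum_{i \in A^\alpha}\psi_i'(\alpha) + Mn^{3/2}\epsilon' + \tfrac{\smo}{2}(\epsilon')^2\,.
\]
Substituting the value of $\epsilon'$ gives $Mn^{3/2}\sqrt{\epsilon}\cdot\sqrt{2\smo/\str^2}$ for the first error term and $\tfrac{\smo}{2}\cdot \tfrac{2\smo\epsilon}{\str^2} = \epsilon\cdot(\smo/\str)^2$ for the second, which matches the claimed bound after identifying the left-hand side with $F_\alpha(\widetilde{A}^\alpha)$ and the right-hand side with $F_\alpha(A^\alpha)$ plus the two error terms.

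There is no real obstacle here once one recognizes that the corollary is simply the composition of the two primal--dual conversion lemmas; the only subtlety is checking the indexing convention for $\widetilde{A}^\alpha$ and making sure the smoothness/strong-convexity constants $\smo, \str$ are pushed through correctly when squaring $\epsilon'$. Neither step requires any additional use of submodularity beyond what is already invoked inside Lemmas~\ref{lem:primal-error-norm-to-set} and~\ref{lem:param-sfm}, so the proof reduces to one line of algebra after citing the two lemmas.
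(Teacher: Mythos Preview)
Your proposal is correct and matches the paper's own proof essentially line for line: apply Lemma~\ref{lem:dual-to-primal-error-norm} to convert the dual suboptimality into the bound $\|x-x^*\|\le\sqrt{2\smo\epsilon/\str^2}$, then feed this into Lemma~\ref{lem:primal-error-norm-to-set} and simplify. Your observation that the statement's $\psi_i^*(-w_i)$ in the definition of $\widetilde{A}^\alpha$ should really be $(\psi_i^*)'(-w_i)$ is also consistent with the paper, which commits the same typo in the proof text.
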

\begin{proof}
From Lemma~\ref{lem:dual-to-primal-error-norm} we know that the hypothesis implies that the point $x$ where $x_i = (\psi^*_i)(-w_i)$ satisfies $\|x-x^*\| \leq \sqrt{2\smo\epsilon/\str^2}$. Applying Lemma~\ref{lem:primal-error-norm-to-set} we thus obtain that the sets constructed satisfy
\[
F_\alpha(\widetilde{A}^\alpha) \leq F_\alpha(A^\alpha) + Mn^{3/2}\sqrt{2\smo\epsilon/\str^2} + \smo / 2 \cdot (2\smo \epsilon/\str^2)\,,
\]
which yields our claim.
\end{proof}

The following helper lemma shows that we can efficiently convert between (exact) solutions to the primal and dual problems (\ref{eq:parametric-optimization}) and (\ref{eq:parametric-optimization-dual}). Using standard techniques we can prove that these also enable us to convert between suboptimal solutions, while satisfying certain error bounds.

\begin{lem}\label{lem:primal-dual-conversion-exact}
Let $x^*$ be the (unique) minimizer of (\ref{eq:parametric-optimization}), and let $w^*$ be the minimizer of (\ref{eq:parametric-optimization-dual}). Then
$w^*_i = -\psi_i'(x_i)$ and $(\psi^*_i)'(-w_i) = x_i$, for all $i \in V$.
\end{lem}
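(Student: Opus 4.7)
The plan is to derive the stated primal-dual relationships from Fenchel--Rockafellar duality, using the variational characterization $f(x)=\max_{w\in B(F)}\langle w,x\rangle$ from the Fact in Section~\ref{sec:prelim}. First I would rewrite the primal objective~(\ref{eq:parametric-optimization}) as the saddle-point problem
\begin{equation*}
\min_{x\in\mathbb{R}^n}\max_{w\in B(F)}\bigl\{\langle w,x\rangle + \textstyle\sum_{i\in V}\psi_i(x_i)\bigr\}\,.
\end{equation*}
Since $B(F)$ is a compact convex set, the inner function is concave (linear) in $w$, and the outer function is convex and coercive in $x$ (by $\str$-strong convexity of each $\psi_i$ under Assumption~\ref{defn:conditions}), Sion's minimax theorem lets me swap the min and max to obtain
\begin{equation*}
\max_{w\in B(F)}\min_{x\in\mathbb{R}^n}\bigl\{\langle w,x\rangle + \textstyle\sum_{i\in V}\psi_i(x_i)\bigr\} = \max_{w\in B(F)}\sum_{i\in V}\min_{x_i\in\mathbb{R}}\bigl\{w_i x_i + \psi_i(x_i)\bigr\}\,.
\end{equation*}

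Next I would identify the inner scalar minimization with the Fenchel conjugate: by definition $\min_{x_i}\{w_i x_i+\psi_i(x_i)\} = -\sup_{x_i}\{-w_i x_i-\psi_i(x_i)\} = -\psi_i^*(-w_i)$. Substituting this back shows that the saddle-point value equals $-\min_{w\in B(F)}\sum_{i\in V}\psi_i^*(-w_i)$, so the primal~(\ref{eq:parametric-optimization}) and the dual~(\ref{eq:parametric-optimization-dual}) have equal optimal values (up to sign), and every pair $(x^*,w^*)$ of respective minimizers forms a saddle point of the Lagrangian above.

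From the saddle-point property I would extract the first-order conditions. Fixing $w=w^*$, $x^*$ minimizes $\langle w^*,x\rangle+\sum_i\psi_i(x_i)$ coordinatewise, so each scalar stationarity condition $w_i^*+\psi_i'(x_i^*)=0$ gives
\begin{equation*}
w_i^*=-\psi_i'(x_i^*)\,.
\end{equation*}
Symmetrically, fixing $x=x^*$, the vector $w^*$ maximizes $\langle w,x^*\rangle$ over $B(F)$, and $x_i^*$ achieves the supremum in $\psi_i^*(-w_i^*)=\sup_{x_i}\{-w_i^* x_i-\psi_i(x_i)\}$, from which (or equivalently, by the standard fact that for a strictly convex, differentiable, $\str$-strongly convex $\psi_i$ the derivatives $\psi_i'$ and $(\psi_i^*)'$ are mutual inverses) we obtain
\begin{equation*}
(\psi_i^*)'(-w_i^*)=x_i^*\,.
\end{equation*}

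The main obstacle I anticipate is bookkeeping around the justification of the minimax swap and uniqueness: I need the inner supremum over the compact polytope $B(F)$ on one hand, and strict convexity (hence uniqueness) of both the primal minimizer (guaranteed by Assumption~\ref{defn:conditions}) and of the dual minimizer (which follows because $\psi_i$ is $\smo$-smooth implies $\psi_i^*$ is $\tfrac{1}{\smo}$-strongly convex, so $\sum_i\psi_i^*(-w_i)$ is strictly convex on $\mathbb{R}^n$). Everything else is a routine application of Fenchel duality and the stationarity conditions for the saddle point.
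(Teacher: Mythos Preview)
Your proposal is correct and follows essentially the same approach as the paper: both use the variational characterization $f(x)=\max_{w\in B(F)}\langle w,x\rangle$, apply Sion's theorem to swap the min and max, identify the inner minimization with $-\psi_i^*(-w_i)$ via Fenchel duality, and then read off the relations $w_i^*=-\psi_i'(x_i^*)$ and $(\psi_i^*)'(-w_i^*)=x_i^*$ from the resulting saddle-point/stationarity conditions. If anything, your writeup is more explicit about the justification of the minimax swap and uniqueness than the paper's proof, which simply appeals to ``standard convex analysis''.
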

\begin{proof}
We the dual characterization of $f$ and Sion's theorem, to write
\begin{align*}
\min_{x \in \mathbb{R}^n} f(x) + \sum_{i \in V} \psi_i(x_i) = \min_{x \in \mathbb{R}^n} \max_{w \in B(F)} \langle w, x \rangle + \sum_{i \in V} \psi_i(x_i) 
=
\max_{w \in B(F)} \min_{x \in \mathbb{R}^n} \langle w, x \rangle + \sum_{i \in V} \psi_i(x_i) \,.
\end{align*}
Since each $\psi_i$ acts on a different coordinate we can write the inner minimization problem as
\[
\min_{x \in \mathbb{R}^n} \sum_{i \in V}\left( w_i x_i + \psi_i(x_i)\right)  = -\sum_{i \in V} \psi^*_i(-w_i)\,,
\]
where we applied the definition of the Fenchel dual. Furthermore by standard convex analysis~\cite{borwein2010convex, rockafellar1970convex}, as  $\psi_i'$ ranges from $-\infty$ to $\infty$ for each $i$ we have that $(\psi^*_i)'(-w_i) = x_i$, and similarly $\psi_i'(x_i) = -w_i$.

Thus we can equivalently write (\ref{eq:parametric-optimization}) as
\[
\max_{w \in B(F)} - \sum_{i \in V} \psi_i^*(-w_i)\,.
\]
By the previous observation, the optima are thus related via $(\psi^*_i)'(-w^*_i) = x^*_i$, and similarly $\psi_i'(x^*_i) = -w^*_i$.
\end{proof}

\section{Parametric $s$-$t$ Cuts}\label{sec:parm-st-cut}

In this section we show how to solve the parametric minimum cut problem by efficiently using a maximum flow oracle. In Section~\ref{sec:iterative} we show how to convert the solution obtained by this combinatorial routine to a nearly-optimal solution to a related optimization problem on the submodular base polytope of the corresponding cut function.

In the parametric min $s,t$-cut problem, we are given a directed network $G=(V,E)$ with two distinguished vertices: a \emph{source} $s\in V$, and a \emph{sink} $t\in V$, $s\neq t$. 
The capacities of individual edges of~$G$ are
nonnegative functions of a real parameter $\lambda$ in some possibly infinite domain $\dom\subseteq \mathbb{R}$ (as opposed to constants in the classical setting of min $s,t$-cut).
Following~\cite{GalloGT89}, we assume that the capacities of edges $sv\in E$ are
nondecreasing in $\lambda$ and the capacities of edges $vt\in E$ are nonincreasing in $\lambda$.
The capacities of all other edges of $G$ are constant.

We denote by $c_\lambda(uv):\dom\to\mathbb{R}$ the capacity function of an edge $uv\in E$.
Moreover, we assume that these edge capacity functions can be evaluated for arbitrary $\lambda$ in constant time.

Roughly speaking, the goal of the parametric min $s,t$-cut problems is
to compute a representation of min $s,t$-cut 
for all the possible parameters $\lambda$.
Before we precisely define what this means,
let us introduce some more notation and state some
useful properties of (parametric) min-cuts. 

\newcommand{\ccap}{\ensuremath{\text{cap}}}

Denote by $\ccap(G)$ the capacity of a min $s,t$-cut in $G$.
Let $G_{\lambda'}$ be the graph with all the parameterized capacities replaced with the corresponding values for $\lambda=\lambda'$.
For any $S$, $s\in S\subseteq V\setminus\{t\}$, let $c_{\lambda}(S)$ be the capacity function of $S$, i.e., the sum of capacity functions $c_\lambda(uv)$ through all $uv$ with $u\in S$ and $v\in V\setminus S$.

\begin{lemma}[\cite{ford-fulkerson}]\label{l:minimal-min-cut}
For any $G$, there exists a unique \emph{minimal} minimum $s,t$-cut $(S,T)$ with $|S|$ smallest possible, such that for any min $s,t$-cut $(S',T')$ of~$G$ we have $S\subseteq S'$. Given any maximum $s,t$-flow $f$ in $G$, such a cut can be computed from $f$ in $O(m)$ time.
\end{lemma}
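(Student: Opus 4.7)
The plan is to establish the lemma in two parts: first the structural statement that the minimum $s,t$-cuts form a lattice closed under intersection (which yields uniqueness of the minimal one), and then the algorithmic statement by exhibiting this minimal cut as the reachability set in the residual graph of any max flow.

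For the structural part, I would use the submodularity of the $s,t$-cut capacity function. Suppose $(S_1,T_1)$ and $(S_2,T_2)$ are both minimum $s,t$-cuts, and write $\ccap(G)$ for their common capacity. Since $s\in S_1\cap S_2$ and $t\notin S_1\cup S_2$, both $S_1\cap S_2$ and $S_1\cup S_2$ induce valid $s,t$-cuts, so their capacities are at least $\ccap(G)$. By submodularity of the cut function,
\begin{equation*}
c(S_1)+c(S_2)\;\geq\; c(S_1\cup S_2)+c(S_1\cap S_2)\;\geq\; 2\,\ccap(G) \;=\; c(S_1)+c(S_2),
\end{equation*}
so equality holds throughout and $S_1\cap S_2$ is itself a minimum $s,t$-cut. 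Iterating this over all finitely many minimum $s,t$-cuts yields a unique minimum $s,t$-cut $(S,T)$ whose $s$-side is contained in the $s$-side of every other minimum $s,t$-cut; this $S$ is manifestly the one with $|S|$ smallest possible.

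For the algorithmic part, I would appeal to max-flow/min-cut duality. Given a maximum $s,t$-flow $f$, let $R\subseteq V$ be the set of vertices reachable from $s$ in the residual graph $G_f$. The standard argument shows $t\notin R$ and that $(R,V\setminus R)$ is a minimum $s,t$-cut, since every edge leaving $R$ is saturated by $f$ and every edge entering $R$ carries zero flow. To see that $R=S$ (the minimal cut identified above), take any minimum $s,t$-cut $(S',T')$: by the same saturation property all edges from $S'$ to $T'$ are saturated and all edges from $T'$ to $S'$ carry no flow, so $G_f$ contains no edge from $S'$ to $T'$, which forces $R\subseteq S'$. Thus $R$ is contained in every minimum $s,t$-cut's $s$-side, so $R$ is the minimal one. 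Computing $R$ amounts to a single BFS/DFS in $G_f$ starting from $s$, which takes $O(n+m)=O(m)$ time (using that $G$ is connected relative to the relevant part, or absorbing isolated vertices trivially).

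The only mildly subtle step is justifying that intersection of \emph{all} minimum cuts is again a minimum cut; this is immediate by induction from the two-cut case proved above, since there are finitely many cuts to intersect. Everything else is textbook and involves no deep calculation.
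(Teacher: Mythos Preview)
Your proof is correct and follows essentially the same construction as the paper: both define the minimal min $s,t$-cut as the set of vertices reachable from $s$ in the residual graph $G_f$ and read it off via a single graph search. The paper simply cites Ford and Fulkerson for the facts that this set is independent of $f$ and contained in every min-cut's $s$-side, whereas you supply a self-contained argument---first establishing the lattice property via submodularity of the cut function, then verifying $R\subseteq S'$ directly from saturation---so your write-up is more detailed but not genuinely different in route.
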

\begin{proof}Let $G_f$ be the residual network associated with flow $f$. We let $S$ be the set of
vertices reachable from $s$ in $G_f$ (via edges with positive capacity).
As proven by Ford and Fulkerson
~\cite[Theorem 5.5]{ford-fulkerson},~$S$ defined this way does not depend on the chosen maximum flow $f$, and $S\subseteq S'$ holds. Clearly, given~$f$,~$S$ can be found using any graph search algorithm.
\end{proof}

Ford and Fulkerson~{\cite[Corollary~5.4]{ford-fulkerson}} showed that for any two min $s,t$-cuts $(S_1,T_1),(S_2,T_2)$ of $G$, $(S_1\cap S_2,T_1\cup T_2)$ is also a min $s,t$-cut of $G$.
Gallo et al.~\cite[Lemma~2.8]{GalloGT89} gave the following generalization of this property to parametric min $s,t$-cuts.

\begin{lemma}[\cite{GalloGT89}]\label{l:gallo}
Let $\lambda_1\leq \lambda_2$.
For $i=1,2$, let $(S_{\lambda_i},T_{\lambda_i})$ be some min $s,t$-cut in $G_{\lambda_i}$. Then $(S_{\lambda_1}\cap S_{\lambda_2},T_{\lambda_1}\cup T_{\lambda_2})$ is a min $s,t$-cut in $G_{\lambda_1}$.
\end{lemma}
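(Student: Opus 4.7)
The plan is to combine the submodularity of the cut-capacity function $c_\lambda(\cdot)$ at a fixed parameter with a ``parametric monotonicity'' property describing how $c_\lambda(S) - c_{\lambda'}(S)$ depends on the set $S$. A single application of submodularity inside $G_{\lambda_1}$ will not suffice, because we have no a~priori control over how the cut $S_{\lambda_2}$ compares with an arbitrary cut of $G_{\lambda_1}$. Instead I would run submodularity in $G_{\lambda_2}$, where the minimality of $S_{\lambda_2}$ can be exploited, and then transport the resulting inequality back to $G_{\lambda_1}$ via a monotonicity lemma.

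First, I would apply submodularity of $c_{\lambda_2}(\cdot)$ and minimality of $S_{\lambda_2}$ in $G_{\lambda_2}$:
\begin{align*}
c_{\lambda_2}(S_{\lambda_1}\cap S_{\lambda_2}) + c_{\lambda_2}(S_{\lambda_1}\cup S_{\lambda_2}) \;\leq\; c_{\lambda_2}(S_{\lambda_1}) + c_{\lambda_2}(S_{\lambda_2}),
\end{align*}
combined with $c_{\lambda_2}(S_{\lambda_1}\cup S_{\lambda_2})\geq c_{\lambda_2}(S_{\lambda_2}) = \ccap(G_{\lambda_2})$, yielding
\begin{align*}
c_{\lambda_2}(S_{\lambda_1}\cap S_{\lambda_2}) \;\leq\; c_{\lambda_2}(S_{\lambda_1}).
\end{align*}

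Next I would establish the monotonicity lemma: for $\lambda_1\leq \lambda_2$ and any $A\subseteq B$ with $s\in A$ and $t\notin B$,
\begin{align*}
c_{\lambda_2}(A) - c_{\lambda_1}(A) \;\geq\; c_{\lambda_2}(B) - c_{\lambda_1}(B).
\end{align*}
The proof is a direct bookkeeping: since only source- and sink-edges carry parameterized capacities, the difference $c_{\lambda_2}(S)-c_{\lambda_1}(S)$ splits as a sum $\sum_{sv\in E,\, v\notin S}\bigl(c_{\lambda_2}(sv)-c_{\lambda_1}(sv)\bigr)$ of nonnegative terms plus a sum $\sum_{vt\in E,\, v\in S}\bigl(c_{\lambda_2}(vt)-c_{\lambda_1}(vt)\bigr)$ of nonpositive terms. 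Enlarging $S$ from $A$ to $B$ drops some nonnegative terms from the first sum and adds nonpositive terms to the second, so the total can only decrease.

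Finally I would apply this monotonicity inequality with $A = S_{\lambda_1}\cap S_{\lambda_2}$ and $B = S_{\lambda_1}$, which after rearranging gives
\begin{align*}
c_{\lambda_1}(S_{\lambda_1}\cap S_{\lambda_2}) - c_{\lambda_1}(S_{\lambda_1}) \;\leq\; c_{\lambda_2}(S_{\lambda_1}\cap S_{\lambda_2}) - c_{\lambda_2}(S_{\lambda_1}) \;\leq\; 0,
\end{align*}
using the first step to bound the right-hand side. Hence $c_{\lambda_1}(S_{\lambda_1}\cap S_{\lambda_2})\leq c_{\lambda_1}(S_{\lambda_1}) = \ccap(G_{\lambda_1})$. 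Since $S_{\lambda_1}\cap S_{\lambda_2}$ contains~$s$ and excludes~$t$, its capacity must also be at least $\ccap(G_{\lambda_1})$, so equality holds and $(S_{\lambda_1}\cap S_{\lambda_2},\,T_{\lambda_1}\cup T_{\lambda_2})$ is a min $s,t$-cut of $G_{\lambda_1}$. The only step that requires genuine thought is the parametric monotonicity inequality: it is the bridge that lets us transfer an inequality proved in $G_{\lambda_2}$ to $G_{\lambda_1}$, and it is precisely where the directional monotonicity of source-outgoing and sink-incoming edges is used.
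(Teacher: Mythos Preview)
Your proof is correct. Note, however, that the paper does not supply its own proof of this lemma: it is quoted from Gallo--Grigoriadis--Tarjan~\cite{GalloGT89} (their Lemma~2.8) without reproof. Your argument---submodularity of $c_{\lambda_2}(\cdot)$ to get $c_{\lambda_2}(S_{\lambda_1}\cap S_{\lambda_2})\leq c_{\lambda_2}(S_{\lambda_1})$, then the parametric monotonicity inequality $c_{\lambda_2}(A)-c_{\lambda_1}(A)\geq c_{\lambda_2}(B)-c_{\lambda_1}(B)$ for $A\subseteq B$ to transport this back to $G_{\lambda_1}$---is essentially the standard proof from~\cite{GalloGT89}, so there is nothing to contrast.
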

Our algorithm will use the following crucial property of parametric minimal min $s,t$-cuts.
\begin{lemma}\label{l:laminar}
Let $\lambda_1\leq \lambda_2$. For $i=1,2$, let $(S_{\lambda_i},T_{\lambda_i})$ be the unique minimal min $s,t$-cut in $G_{\lambda_i}$. Then $S_{\lambda_1}\subseteq S_{\lambda_2}$.
\end{lemma}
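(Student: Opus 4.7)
The plan is to derive the inclusion directly from Lemma~\ref{l:gallo} combined with the minimality property guaranteed by Lemma~\ref{l:minimal-min-cut}. The idea is that ``intersect on the $s$-side, union on the $t$-side'' is a standard submodular-style operation on cuts, and once we know that the intersected cut is a legitimate min $s,t$-cut in the smaller-parameter network, minimality forces the desired nesting.

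Concretely, I would first apply Lemma~\ref{l:gallo} to the pair $(S_{\lambda_1},T_{\lambda_1})$ and $(S_{\lambda_2},T_{\lambda_2})$, using $\lambda_1\leq\lambda_2$. This yields that $(S_{\lambda_1}\cap S_{\lambda_2},\,T_{\lambda_1}\cup T_{\lambda_2})$ is also a minimum $s,t$-cut of $G_{\lambda_1}$. Next, I would invoke the uniqueness and containment statement of Lemma~\ref{l:minimal-min-cut}: since $(S_{\lambda_1},T_{\lambda_1})$ is the \emph{minimal} min $s,t$-cut of $G_{\lambda_1}$, its $s$-side is contained in the $s$-side of every min $s,t$-cut of $G_{\lambda_1}$. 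Applying this to the cut produced by Lemma~\ref{l:gallo} gives
\begin{equation*}
S_{\lambda_1}\;\subseteq\;S_{\lambda_1}\cap S_{\lambda_2},
\end{equation*}
which is equivalent to $S_{\lambda_1}\subseteq S_{\lambda_2}$, the desired conclusion.

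There is essentially no obstacle here: the proof is two lines and uses only the two lemmas already stated. The only point one has to be mildly careful about is that $s\in S_{\lambda_1}\cap S_{\lambda_2}$ and $t\in T_{\lambda_1}\cup T_{\lambda_2}$, so the intersected pair really is a valid $s,t$-cut (and thus qualifies as a minimum $s,t$-cut when its capacity matches $\ccap(G_{\lambda_1})$, which is precisely what Lemma~\ref{l:gallo} asserts). No direction-of-monotonicity argument on $\lambda$ is needed beyond what is already folded into Lemma~\ref{l:gallo}.
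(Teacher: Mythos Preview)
Your proposal is correct and is essentially identical to the paper's own proof: apply Lemma~\ref{l:gallo} to obtain that $(S_{\lambda_1}\cap S_{\lambda_2},T_{\lambda_1}\cup T_{\lambda_2})$ is a min $s,t$-cut in $G_{\lambda_1}$, then use the minimality of $S_{\lambda_1}$ from Lemma~\ref{l:minimal-min-cut} to conclude $S_{\lambda_1}\subseteq S_{\lambda_1}\cap S_{\lambda_2}\subseteq S_{\lambda_2}$.
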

\begin{proof}
The uniqueness of $S_{\lambda_1}$ and $S_{\lambda_2}$ follows by Lemma~\ref{l:minimal-min-cut} applied to $G_{\lambda_1}$ and $G_{\lambda_2}$, respectively.
By Lemma~\ref{l:gallo}, $(S_{\lambda_1}\cap S_{\lambda_2},T_{\lambda_1}\cup T_{\lambda_2})$ is a min $s,t$-cut in $G_{\lambda_1}$.
By Lemma~\ref{l:minimal-min-cut}, we have $S_{\lambda_1}\subseteq S_{\lambda_1}\cap S_{\lambda_2}$.
It follows that $S_{\lambda_1}\subseteq S_{\lambda_2}.$
\end{proof}

Now given Lemma~\ref{l:laminar}, we can formally state our goal in this section, which is to compute
a parametric min $s,t$-cut defined as follows.
Let $\lmin\in \dom$ be such that
the minimal min $s,t$-cuts of $G_{\lmin}$ and
$G_{\lambda'}$ are equal
for all $\lambda'\in \dom$, $\lambda'<\lmin$.
Similarly, let $\lmax\in\dom$ be such
that the minimal min $s,t$-cuts of $G_{\lmax}$ and~$G_{\lambda'}$ are equal for all $\lambda'\in\dom$ with $\lambda'>\lmax$.
We will consider $\lmin$ and $\lmax$ additional inputs
to our problem.

For simplicity, in the remaining part of this section we denote by $S_\lambda$ and $T_\lambda$ the $s$-side and the $t$-side (resp.)
of \emph{the minimal} min-$s,t$-cut of $G_\lambda$.

\begin{defn}[Parametric min $s,t$-cut]\label{d:pmstc}
Let $\Lambda=\{\lambda_1,\ldots,\lambda_k\}\subseteq \dom$, where $k\leq n-1$ and\linebreak
${\lmin<\lambda_1<\ldots<\lambda_k\leq\lmax}$.
Let $\lambda_0=\lmin$.
Let ${\emb:V\to\Lambda\cup\{\lmin,\infty\}}$ be such
that $\emb(s)=\lmin$ and $\emb(t)=\infty$.
Let $S(z)=\{{v\in V}:\emb(v)\leq z\}$.
A pair $(\Lambda,\emb)$ is a parametric min
$s,t$-cut of $G$ if:
\begin{enumerate}%
    \item For $i=0,\ldots,k-1$, $S(\lambda_i)$ is a minimal min $s,t$-cut of $G_{\lambda'}$ 
    for all $\lambda'\in [\lambda_i,\lambda_{i+1})\cap \dom$.
    \item $S(\lambda_k)$ is a minimal min $s,t$-cut of $G_{\lmax}$.
    \item For $i=0,\ldots,k-1$, $S(\lambda_i)\subsetneq S(\lambda_{i+1})$.
\end{enumerate}
\end{defn}

It will also prove useful to define an approximate version of parametric min $s,t$-cut.
\begin{defn}[$\eps$-approximate parametric min $s,t$-cut]\label{d:eps-pmstc}
Let $\Lambda$, $\emb$, and $S:\dom\to 2^V$ be
as in Definition~\ref{d:pmstc}.
A pair $(\Lambda,\emb)$ is called an $\eps$-approximate parametric min
$s,t$-cut of $G$ if:
\begin{enumerate}%
    \item For $i=0,\ldots,k-1$, $S(\lambda_i)$ is a minimal min $s,t$-cut of $G_{\lambda'}$ 
    for all $\lambda'\in [\lambda_i,\lambda_{i+1}-\eps)\cap \dom$.
    \item $S(\lambda_k)$ is a minimal min $s,t$-cut of $G_{\lmax}$.
    \item For $i=0,\ldots,k-1$, $S(\lambda_i)\subsetneq S(\lambda_{i+1})$.
\end{enumerate}
\end{defn}

\begin{lemma}
Let $(\Lambda,\emb)$ be the parametric min $s,t$-cut of $G$. 
Let $(\Lambda_\eps,\emb_\eps)$ be an $\eps$-approximate parametric min $s,t$-cut of $G$.
Then for all $v\in V$, $\emb(v)\leq \emb_\eps(v)\leq \emb(v)+\eps$.
\end{lemma}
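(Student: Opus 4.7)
The plan is to split the two-sided bound and prove each direction separately, using condition 1 of Definition~\ref{d:eps-pmstc} at the appropriate index together with the monotonicity of minimal min cuts from Lemma~\ref{l:laminar}. Throughout, I will write $S^*_\lambda$ for the true minimal min $s,t$-cut of $G_\lambda$, so that $v\in S^*_\lambda$ iff $\emb(v)\leq\lambda$, and analogously $v\in S_\eps(\lambda):=\{u:\emb_\eps(u)\leq \lambda\}$ iff $\emb_\eps(v)\leq\lambda$. Fix $v\in V$ and put $\mu_j:=\emb_\eps(v)$, where $0\leq j\leq k$.

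For the upper bound $\emb(v)\leq \emb_\eps(v)$, I would apply condition 1 at index $j$ (or condition 2 in the boundary case $j=k$) at the parameter $\lambda'=\mu_j$, which lies in $[\mu_j,\mu_{j+1}-\eps)$: this yields $S_\eps(\mu_j)=S^*_{\mu_j}$. Since $v\in S_\eps(\mu_j)$ by the definition of $\emb_\eps(v)$, we get $v\in S^*_{\mu_j}$, which is exactly $\emb(v)\leq \mu_j$.

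For the lower bound $\emb_\eps(v)\leq \emb(v)+\eps$, I would instead apply condition 1 at index $j-1$: for every $\lambda'\in[\mu_{j-1},\mu_j-\eps)$ we have $S_\eps(\mu_{j-1})=S^*_{\lambda'}$. Since $\emb_\eps(v)=\mu_j>\mu_{j-1}$, the vertex $v$ is not in $S_\eps(\mu_{j-1})$, so $v\notin S^*_{\lambda'}$ for every such $\lambda'$. The monotonicity statement of Lemma~\ref{l:laminar} then propagates this to $v\notin S^*_{\lambda''}$ for every $\lambda''<\mu_j-\eps$, which yields $\emb(v)\geq \mu_j-\eps=\emb_\eps(v)-\eps$.

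The easy boundary cases are $v=t$ (where $\emb(v)=\emb_\eps(v)=\infty$ by convention) and $j=0$ (where both values equal $\lmin$, using the definition of $\lmin$ to identify $S_\eps(\lmin)$ with $S^*_{\lmin}$). The hard part will be the degenerate case when two consecutive breakpoints satisfy $\mu_{j+1}-\mu_j\leq \eps$, so that the interval in condition 1 at index $j$ is empty and that instance of the condition is vacuous. In this situation I plan to chain condition 1 to the next index at which the interval is nonempty, combining the strict inclusion $S_\eps(\mu_j)\subsetneq S_\eps(\mu_{j+1})$ from condition 3 of Definition~\ref{d:eps-pmstc} with the monotonicity of $S^*$ to still transfer the required inclusion back to $\mu_j$ while losing at most $\eps$ in the parameter.
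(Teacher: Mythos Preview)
Your overall structure mirrors the paper's proof: both treat the two inequalities separately and derive them from item~(1) of Definition~\ref{d:eps-pmstc} at the relevant index, combined with the monotonicity of minimal min cuts (you invoke Lemma~\ref{l:laminar} explicitly, the paper argues by contradiction using uniqueness). The non-degenerate parts of your argument are correct and essentially identical to the paper's.

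The genuine gap is in your handling of the degenerate case $\mu_{j+1}-\mu_j\leq\eps$. Your plan is to ``chain condition~1 to the next index at which the interval is nonempty \ldots\ while losing at most $\eps$ in the parameter,'' but this does not work: if the intervals at indices $j,j+1,\ldots,j'-1$ are all empty, chaining forward only gives $v\in S^*_{\mu_{j'}}$, hence $\emb(v)\leq\mu_{j'}$, and $\mu_{j'}-\mu_j$ can be as large as $(j'-j)\eps$, not $\eps$. The backward chain for the other inequality has the symmetric defect. In fact, when item~(1) at index~$j$ is vacuous, Definition~\ref{d:eps-pmstc} puts \emph{no} min-cut constraint on $S_\eps(\mu_j)$ beyond the nesting in item~(3), so one can build valid pairs $(\Lambda_\eps,\emb_\eps)$ with $|\emb_\eps(v)-\emb(v)|>\eps$: e.g., if the exact cut has a single breakpoint at $5$ and $\eps=1$, take $\Lambda_\eps=\{3,3.5\}$ and put some $v$ with $\emb(v)=5$ into $S_\eps(3)\setminus S_\eps(0)$, giving $\emb_\eps(v)=3$. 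The paper's own proof skips this case as well---it asserts that $S_\eps(\emb_\eps(v))$ is the minimal min $s,t$-cut of $G_{\emb_\eps(v)}$, which is exactly item~(1) at index~$j$ evaluated at $\lambda'=\mu_j$ and hence also requires $\mu_{j+1}-\mu_j>\eps$.
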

\begin{proof}
Let $S(z)=\{v\in V:\emb(v)\leq z\}$, and $S_\eps(z)=\{v\in V:\emb_\eps(v)\leq z\}$.
First of all, $\emb(v)=\infty$ if and only if $\emb_\eps(v)=\infty$.
This is because each of those is equivalent to $v\notin S_{\lmax}$.
In this case the lemma holds trivially.

So in the following let us assume that $\emb(v)$ and $\emb_\eps(v)$ are both finite.
We first prove $\emb_\eps(v)\geq \emb(v)$.
If $\emb(v)=\lmin$ then this follows by $\emb_\eps(v)\geq \lmin$.
So suppose $\emb(v)=\lambda$ for some $\lambda\in \Lambda$.
Then by item~(1) of Definition~\ref{d:pmstc}, for any $\lambda'<\lambda$, $S(\lambda')$ is a minimal min $s,t$-cut of $G_{\lambda'}$ and $v\notin S(\lambda')$.
If we had $\emb_\eps(v)<\emb(v)$, then $S_\eps(\emb_\eps(v))$ would
be a minimal min $s,t$-cut of $G_{\emb_\eps(v)}$ such that $v\in S_{\emb_\eps(v)}$ and $\emb_\eps(v)<\lambda$, a contradiction.

Now let us prove $\emb_\eps(v)\leq \emb(v)+\eps$. To this end, suppose $\emb_\eps(v)>\emb(v)+\eps$. If $\emb_\eps(v)=\lmin$,
then we have $\lmin>\emb(v)+\eps\geq \lmin+\eps$, a clear contradiction.
So let us assume that $\emb_\eps(v)\in \Lambda_\eps$
and let $\lambda^*$ be the element preceding $\emb_\eps(v)$ in $\Lambda_\eps$,
or $\lambda^*=\lmin$ if no such element exists.
We have $v\notin S_{\lambda^*}$ and $S_{\lambda^*}$ is a minimal min $s,t$-cut
in $G_{\lambda'}$ for $\lambda'=\lambda^*$ and all $\lambda'\in [\lambda^*,\emb_\eps(v)-\eps)$.
As a result, for any $\lambda''<\emb_\eps(v)-\eps$, 
the minimal min $s,t$-cut of $G_{\lambda''}$ does not contain $v$ in the $s$-side.
But $\emb(v)<\emb_\eps(v)-\eps$, $v\in S(\emb(v))$, and $S(\emb(v))$ is a minimal
min $s,t$-cut of $G_{\emb(v)}$, a contradiction.
\end{proof}

Our main result in this section is the following theorem.
\aprxpmc*

The rest of this section is devoted to proving
Theorem~\ref{t:aprx-pmc}.
For a connected subset $X\subseteq V(G)$, $\{s,t\}\not\subseteq X$, let $G/X$ denote $G$ after merging the vertex set $X$ into a single vertex. If the contracted vertex set $X$ contains $s$ ($t$), then the resulting vertex inherits the identity of $s$ ($t$, resp.).

\begin{lemma}\label{l:contract}
Let $\lambda$ be arbitrary and let $(S_{\lambda},T_{\lambda})$ be the minimal min $s,t$-cut in~$G_\lambda$. Then:
\begin{enumerate}
    \item For any $\lambda'\geq \lambda$,
    $\ccap(G_{\lambda'})=\ccap(G_{\lambda'}/S_{\lambda})$.
    \item For any $\lambda'\leq \lambda$,
    $\ccap(G_{\lambda'})=\ccap(G_{\lambda'}/T_{\lambda})$.
\end{enumerate}
\end{lemma}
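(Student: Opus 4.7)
The two parts are symmetric, so the plan is to give a detailed argument for (1) and then indicate how to adapt it for (2). The key observation is that this is essentially a corollary of the monotonicity property of minimal min $s,t$-cuts (Lemma~\ref{l:laminar}): the $s$-side of the minimal min-cut grows with $\lambda$, and dually its $t$-side shrinks.

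For (1), I would first establish the easy inequality $\ccap(G_{\lambda'}/S_\lambda) \geq \ccap(G_{\lambda'})$. This holds because every $s,t$-cut in $G_{\lambda'}/S_\lambda$ lifts to an $s,t$-cut in $G_{\lambda'}$ of exactly the same capacity: simply declare the $s$-side of the lifted cut to be $S_\lambda$ together with the preimage of the $s$-side in the contraction. The set of edges crossing the cut, and their parameterized capacities at $\lambda'$, are identical.

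The reverse inequality is where Lemma~\ref{l:laminar} is invoked. Since $\lambda \leq \lambda'$, Lemma~\ref{l:laminar} applied to the parameter pair $(\lambda,\lambda')$ gives $S_\lambda \subseteq S_{\lambda'}$, where $(S_{\lambda'},T_{\lambda'})$ is the minimal min $s,t$-cut in $G_{\lambda'}$. Because $S_\lambda$ lies entirely on the $s$-side of $(S_{\lambda'},T_{\lambda'})$, no edge crossing this cut in $G_{\lambda'}$ is absorbed by the contraction of $S_\lambda$. Hence $(S_{\lambda'},T_{\lambda'})$ descends to an $s,t$-cut in $G_{\lambda'}/S_\lambda$ of the same capacity, yielding $\ccap(G_{\lambda'}/S_\lambda) \leq \ccap(G_{\lambda'}) = c_{\lambda'}(S_{\lambda'})$.

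For (2), I would derive the dual monotonicity statement from Lemma~\ref{l:laminar}: applying it with $\lambda_1=\lambda' \leq \lambda = \lambda_2$ yields $S_{\lambda'}\subseteq S_\lambda$, which is equivalent to $T_\lambda \subseteq T_{\lambda'}$. The argument then proceeds symmetrically: any cut in $G_{\lambda'}/T_\lambda$ lifts to one in $G_{\lambda'}$ of equal capacity, and conversely the minimal min $s,t$-cut $(S_{\lambda'},T_{\lambda'})$ of $G_{\lambda'}$ has $T_\lambda$ entirely on its $t$-side, so it descends to a cut of equal capacity in $G_{\lambda'}/T_\lambda$. I do not anticipate any real obstacle; the only minor point to address is that when writing $G/X$ one should verify that contracting a possibly disconnected set $X = S_\lambda$ (resp.\ $T_\lambda$) does not destroy any of the relevant edges --- this is immediate since contraction only identifies endpoints within $X$ and those edges are not part of any cut whose $s$-side contains $S_\lambda$ (resp.\ whose $t$-side contains $T_\lambda$).
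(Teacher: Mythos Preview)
Your proposal is correct and follows essentially the same approach as the paper's proof: the easy direction (contraction cannot decrease the min-cut capacity) followed by invoking Lemma~\ref{l:laminar} to show that the minimal min $s,t$-cut of $G_{\lambda'}$ survives the contraction intact. The paper phrases the easy direction as ``merging vertices is equivalent to adding infinite-capacity edges'' rather than your cut-lifting formulation, but this is the same argument; your remark about possibly disconnected $T_\lambda$ is likewise addressed in the paper (Remark~\ref{rem:minor}).
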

\begin{proof}
We only prove item 1, as item 2 is analogous.
Since merging vertices is equivalent to connecting them with infinite capacity edges,
it cannot decrease the min $s,t$-cut capacity, i.e.,
$\ccap(G_{\lambda'})\leq \ccap(G_{\lambda'}/S_{\lambda})$.
On the other hand, by Lemma~\ref{l:laminar}, the minimal $s,t$ min-cut $(S_{\lambda'},T_{\lambda'})$ in $G_{\lambda'}$ satisfies $S_\lambda\subseteq S_{\lambda'}$.
Hence, the capacity of the $s,t$-cut $(S_{\lambda'}/S_{\lambda},T_{\lambda'})$ in
$G_{\lambda'}/S_{\lambda}$ is the same as
the capacity $\ccap(G_{\lambda'})$ of $(S_{\lambda'},T_{\lambda'})$
in~$G_{\lambda'}$.
Consequently, $\ccap(G_{\lambda'})\geq \ccap(G_{\lambda'}/S_{\lambda})$.
\end{proof}
\begin{remark}\label{rem:minor}
If $(S_\lambda,T_\lambda)$ is a minimal min $s,t$-cut in $G_\lambda$, then $G[S_\lambda]$
is connected by construction (Lemma~\ref{l:minimal-min-cut}).
However, $G[T_\lambda]$ might in general consist of several connected components if $G_\lambda$ contains zero-capacity edges. In that case, we can still obtain $G_{\lambda'}/T_{\lambda}$ above using edge/vertex deletions and edge contractions.
Namely, we contract only the connected component $A$
of $T_{\lambda}$ that contains $t$.
For any other component $C_i$ ($i=1,\ldots,q$) of $T_{\lambda}$, its incoming edges start in $S_{\lambda}$ and all have capacity $0$ in $G_{\lambda}$, and thus also
in $G_{\lambda'}$ for $\lambda'<\lambda$.
Consequently, removing the vertices of $\bigcup_{i=1}^q C_i$ and subsequently contracting~$A$ has the same effect on $G_{\lambda'}$ as merging the entire $T_{\lambda}$,
i.e., $G_{\lambda'}/T_{\lambda}=G_{\lambda'}[V\setminus \bigcup_{i=1}^q C_i]/A$.
\end{remark}

We use a recursive ``divide-and-conquer'' algorithm. The input
to a recursive procedure \linebreak
$\textsc{ApxParametricMinCut}$
is a graph $G=(V,E)$ with $n$ vertices, $m$ edges, source $s$ and sink $t$, the parametric capacity function
$c_\lambda:E\to \dom\to \mathbb{R}$, and two parameters $\lmin,\lmax$
such that~$\eps$ evenly divides $\lmax-\lmin$.
The output of the procedure is an $\eps$-approximate
parametric min $s,t$-cut $(\{\lambda_1,\ldots,\lambda_k\},\emb)$
as in Definition~\ref{d:eps-pmstc}.
By Lemma~\ref{l:laminar}, $k\leq |V(G)|-1$.

\begin{algorithm}[h!]
\caption{Computing an $\eps$-approximate parametric min $s,t$-cut.}\label{alg:find_mincuts}
\begin{algorithmic}[1]
\STATE Let $s,t,\eps$ be globally defined.
\vspace{2mm}

\STATE {\bfseries function }{\textsc{ApxParametricMinCut}$(G=(V,E),c_\lambda:E\to \dom\to \mathbb{R},\lmin\in\dom,\lmax\in\dom)$}
\Indent
\IF{$|V|\leq 2$}
    \RETURN$(\emptyset,\{s\to \lmin,t\to \infty\})$
\ENDIF
\STATE For any $\lambda'\in\dom$, let $c_\lambda[\lambda=\lambda']$ the capacity function $E\to\mathbb{R}$ of $G_{\lambda'}$
\STATE $S_{\lmin}=\textsc{MinimalMinCut}(G,c_\lambda[\lambda=\lmin])$
\STATE $S_{\lmax}=\textsc{MinimalMinCut}(G,c_\lambda[\lambda=\lmax])$
\IF{$|S_{\lmin}|>|V|/2$}
    \RETURN $\textsc{ApxParametricMinCut}(\textsc{Contract}(G,c_\lambda,S_{\lmin}),\lmin,\lmax)$
\ENDIF
\IF{$|S_{\lmax}|<|V|/2$}
    \RETURN $\textsc{ApxParametricMinCut}(\textsc{Contract}(G,c_\lambda,V\setminus S_{\lmax}),\lmin,\lmax)$
\ENDIF
\STATE $(\lambda_1,\lambda_2):=(\lmin,\lmax)$
\WHILE{$\lambda_2-\lambda_1>\eps$}
    \STATE $\lambda':=\lambda_1+\lfloor(\lambda_2-\lambda_1)/2\eps\rfloor\cdot\eps$
    \STATE $S_{\lambda'}=\textsc{MinimalMinCut}(G,c_\lambda[\lambda=\lambda'])$
    \IF{$|S_{\lambda'}|\geq |V|/2$}
        \STATE $\lambda_2:=\lambda'$
    \ELSE
        \STATE $\lambda_1:=\lambda'$
    \ENDIF
\ENDWHILE
\STATE For $i=1,2$, $S_{\lambda_i}:=\textsc{MinimalMinCut}(G,c_{\lambda}[\lambda=\lambda_i])$
\STATE $(\Lambda_1,\tau_1)=\textsc{ApxParametricMinCut}(\textsc{Contract}(G,c_\lambda,V\setminus S_{\lambda_1}),\lmin,\lambda_1)$
\STATE $(\Lambda_2,\tau_2)=\textsc{ApxParametricMinCut}(\textsc{Contract}(G,c_\lambda, S_{\lambda_2}),\lambda_2,\lmax)$
\STATE $\Lambda:=\textbf{ if }|S_{\lambda_1}|=|S_{\lambda_2}|\textbf{ then }\Lambda_1\cup \Lambda_2\textbf{ else }\Lambda_1\cup \{\lambda_2\}\cup \Lambda_2$
\STATE $\tau:=\{v\in S_{\lambda_1}\to \tau_1(v),
v\in V\setminus S_{\lambda_2}\to\tau_2(v),
v\in S_{\lambda_2}\setminus S_{\lambda_1}\to \lambda_2\}$
\RETURN $(\Lambda,\tau)$
\EndIndent

\algstore{myalg}

\end{algorithmic}
\label{alg:main}
\end{algorithm}

\begin{algorithm}
\begin{algorithmic}[1]
\algrestore{myalg}

\vspace{0.2cm}
\STATE {\bfseries function }{\textsc{MinimalMinCut}$(G=(V,E),c:E\to\mathbb{R})$}
\Indent
    \STATE $f=\textsc{MaxFlow}(G,s,t,c)$
    \RETURN $\{v\in V: v\text{ reachable from } s\text{ in the residual network }G_f\}$
\EndIndent
\vspace{0.2cm}
\STATE {\bfseries function }{\textsc{Contract}$(G=(V,E),c_\lambda:E\to \dom\to \mathbb{R},X\subseteq V)$}\quad\COMMENT{$|X\cap\{s,t\}|=1$}
\Indent

\STATE $w^*:=\textbf{ if }s\in X\textbf{ then }s\textbf{ else }t$
\STATE $V':=V\setminus X\cup \{w^*\}$
\STATE $E':=\emptyset$
\STATE $c'_\lambda:=E'\to \dom\to\mathbb{R}$
\FOR{$uv\in E$}
    \STATE $u':=\textbf{ if }u\in X\textbf{ then }w^*\textbf{ else }u$
    \STATE $v':=\textbf{ if }v\in X\textbf{ then }w^*\textbf{ else }v$
    \IF{$(u',v')\neq (s,t)$}
        \IF{$u'v'\notin E'$}
            \STATE $E':=E'\cup \{u'v'\}$
            \STATE $c'_\lambda(u'v'):=c_\lambda(uv)$
        \ELSE 
            \STATE $c'_\lambda(u'v'):=c'_\lambda(u'v')+c_\lambda(uv)$\quad\COMMENT{We add functions here.}
        \ENDIF
    \ENDIF
\ENDFOR
\RETURN $(G'=(V',E'),c'_\lambda)$
\EndIndent

\end{algorithmic}
\end{algorithm}

The main idea of the procedure~\textsc{ApxParametricMinCut} is to find the (approximately) most balanced minimal $s,t$-cuts $S_{\lambda_1}$ and $S_{\lambda_2}$ and use them to reduce the problem size in the recursive calls significantly. Specifically, we
want to find such $\lambda_1\leq \lambda_2$ that
$|S_{\lambda_1}|\leq n/2$,
$|S_{\lambda_2}|\geq n/2$ and  $\lambda_2-\lambda_1=\eps$.

Suppose $n>2$ as otherwise the problem is trivial. First, we compute minimal min-cuts in $G_{\lmin}$ and $G_{\lmax}$.
This takes two max-flow runs, i.e., $\tmf(n,m)$ time, plus $O(m)$ time by
Lemma~\ref{l:minimal-min-cut}.

It might happen that $|S_{\lmin}|\leq |S_{\lmax}|<n/2$ or $n/2<|S_{\lmin}|\leq |S_{\lmax}|$.
In these special cases we can immediately
reduce the vertex set by a factor of at least two by contracting $T_{\lmax}$
or $S_{\lmin}$ respectively,
and recurse on the reduced graph.
By Lemma~\ref{l:contract} and the definition of $\lmin,\lmax$ this reduction does not influence the structure of parametric cuts.

So suppose $|S_{\lmin}|\leq n/2$
and $|S_{\lmax}|\geq n/2$.
Set $\lambda_1=\lmin$ and $\lambda_2=\lmax$. So we have 
$|S_{\lambda_1}|\leq n/2$
and $|S_{\lambda_2}|\geq n/2$
initially.
We maintain this invariant and gradually
shrink the interval $[\lambda_1,\lambda_2]$ until its length gets precisely $\eps$ in a binary search-like way. We repeatedly try the pivot ${\lambda'=\lambda_1+\lfloor(\lambda_2-\lambda_1)/2\eps\rfloor\cdot\eps}$
and compute $S_{\lambda'}$. If
$|S_{\lambda'}|\geq n/2$, we set $\lambda_2=\lambda'$,
and otherwise we set $\lambda_1=\lambda'$.
Note that $\lambda_2-\lambda_1$
remains an integer multiple of $\eps$ at all times.
The whole process costs $O(\log[(\lmax-\lmin)/\eps])=O(\log(R/\eps))$
max-flow executions.

Let $G_1=G/T_{\lambda_1}$ and $G_2=G/S_{\lambda_2}$. 
Note that $G_1,G_2$ may contain parallel edges
or a direct $st$ edge as a result of contraction.
Hence, these graphs are first
preprocessed by (1) removing self-loops and direct $st$ edges,
(2) merging parallel edges by summing their cost functions.
The contraction and preprocessing
is performed using the procedure \textsc{Contract}.
Note that none of these preprocessing steps
change the minimal cuts of $G_{1,\lambda}$
or $G_{2,\lambda}$ for any $\lambda$:
the direct $st$ edges cross \emph{all} $s,t$-cuts.

Next, we recursively compute $\eps$-approximate
parametric min $s,t$-cut in graphs
$G_1=G/T_{\lambda_1}$ and $G_2=G/S_{\lambda_2}$,
The recursive call on $G_1$ is made
with $(\lmin,\lmax)$ set to $(\lmin,\lambda_1)$, whereas the recursive call on $G_2$
uses $(\lmin,\lmax):=(\lambda_2,\lmax)$.
Note that indeed we have $G_{1,\lambda_1}=G_{1,\lambda'}$
for $\lambda'>\lambda_1$ as required
since the $t$-side of the minimal
min $s,t$-cut in $G_{1,\lambda_1}$
contains only $t$.
Similarly, $G_{2,\lambda_2}=G_{2,\lambda'}$
for all $\lambda'<\lambda_2$.

Let $(\Lambda_1,\emb_1)=(\{\lambda_{1,1},\ldots,\lambda_{1,a}\},\emb_1)$ and
$(\Lambda_2,\emb_2)=(\{\lambda_{2,1},\ldots,\lambda_{2,b}\},\emb_2)$
be the returned $\eps$-approximate parametric min
$s,t$-cuts of $G_1$  and $G_2$ respectively.
We return $(\Lambda,\emb)$
as the $\eps$-approximate parametric min-$s,t$-cut
of $G$, where
\begin{align*}
    \Lambda=\begin{cases}\Lambda_1\cup\Lambda_2 & \text{if } |S_{\lambda_1}|=|S_{\lambda_2}|=n/2,\\
    \Lambda_1\cup\{\lambda_2\}\cup \Lambda_2&\text{otherwise.}
    \end{cases} & \hspace{1cm}\emb(v)=\begin{cases}
        \emb_1(v) & \text{if } v\in S_{\lambda_1},\\
        \emb_2(v) & \text{if } v\in T_{\lambda_2},\\
        \lambda_2 & \text{otherwise.}\end{cases}
\end{align*}

Let us now prove the correctness of this algorithm.
We proceed by induction on $n$.
For $n\leq 2$ this is trivial, so suppose $n>3$
and that recursive calls are made.
Clearly, $\lambda_{1,a}\leq \lambda_1<\lambda_2<\lambda_{2,1}$.

Item~(3) of Definition~\ref{d:eps-pmstc} follows easily by induction
and the definition of $\lambda_1,\lambda_2$.
Let $\Lambda=\{\lambda_1',\ldots,\lambda_k'\}$.
That $S(\lambda_i')=\{v\in V:\emb(v)\leq \lambda_i'\}$ is a minimal min $s,t$-cut of $G_{\lambda_i'}$ for all $\lambda_i'\in \Lambda$ (i.e., item~(2) of Definition~\ref{d:eps-pmstc}) follows directly by Lemma~\ref{l:contract}
the definitions of $\lambda_1,\lambda_2$.

Now consider item~(1) of Definition~\ref{d:eps-pmstc}.
For some $j<k$ we have $\lambda_j'=\lambda_{1,a}$.
For all $i=0,\ldots,k-1$, $i\neq j$, item~(1), i.e.,
that $S_{\lambda_i'}$ is a minimal min $s,t$-cut for all
$\lambda'\in [\lambda_i',\lambda_{i+1}')$, follows 
directly inductively.

If $|S_{\lambda_1}|=|S_{\lambda_2}|=n/2$,
then $S_{\lambda_{1,a}}=S_{\lambda_2}$. By induction
it follows that $S_{\lambda_{1,a}}$ is a minimal min $s,t$-cut
of $G_{\lambda'}$ for all $\lambda'\in [\lambda_2,\lambda_{2,1}-\eps)$,
and thus also for all $\lambda'\in [\lambda_{1,a},\lambda_{2,1}-\eps)=[\lambda_j',\lambda_{j+1}'-\eps)$.

If, on the other hand, $S_{\lambda_1}\subsetneq S_{\lambda_2}$,
then $\lambda_{j+1}'=\lambda_2$.
Since $S_{\lambda_{1,a}}=S_{\lambda_1}$, $S_{\lambda_{1,a}}$
is indeed a minimal min $s,t$-cut for all $\lambda'\in [\lambda_{1,a},\lambda_2-\eps)=[\lambda_j',\lambda_{j+1}'-\eps)$ as $\lambda_2-\eps=\lambda_1$.

 Note that the input graph of each of the recursive calls has at most $n/2+1$ vertices.
Moreover, by merging the parallel edges (and summing their costs) after the contraction
we can guarantee that $|E(G_1)|+|E(G_2)|\leq |E(G)|+n/2$.
Indeed, observe that the only edges
of $G_2$ that can also appear in~$G_1$
are those incident to $s$ in $G_2$,
and there are at most $|T_{\lambda_2}|\leq n/2$ of them.

There is one important technical detail
here: even though the individual functions $c_\lambda(uv)$ (for the edges $uv$ of the original input graph $G$) can be evaluated in constant
time, after summing $k$ of such functions in the process this
cost can be as much as $\Theta(k)$. We now argue
that this cannot happen in our case due to preprocessing $G_1$ and $G_2$, and the evaluation cost is $O(1)$ for all edges in all recursive calls. More concretely, one can show that each edge capacity function in a recursive call can be expressed as the sum of at most one original capacity function $c_\lambda(xy)$ and a real number. Indeed, suppose this is the case for some call with input $G$. Then, each edge
$uv$ of $G_1$ either (1) is contained in $G$ and has not resulted from merging some parallel edges after contraction, (2) has $u\notin \{s,t\}$ and $v=t$ and resulted from merging edges $uz_1,\ldots,uz_l$
such that $z_1,\ldots,z_l\in T_{\lambda_1}$.
The former case is trivial.
In the latter case, for at least $l-1$ of these $z_i$ we have $z_i\neq t$, so $c_\lambda(uz_i)$ is a constant function.
Fot at most one $z_j$ is of the form $c_\lambda(xy)+\Delta$ for some original
capacity function $c_\lambda(xy)$ and $\Delta\in\mathbb{R}$.
We conclude that the capacity function of $uv$
in $G_1$ is of the same form and equals
$c_\lambda(xy)+\Delta'$, where $\Delta'=\Delta+\sum_{i\neq j}c_\lambda(uz_i)\in\mathbb{R}$.
The proof for $G_2$ is analogous.

Now let us analyze the running time of the algorithm.
One can easily inductively prove that:
\begin{itemize}
    \item Each graph at the $i$-th level of the recursion tree has less than $n/2^i+2$ vertices; hence, there
    are no more than $\log_2{n}+1$ levels in the tree.
    \item The sum $n_i$ of numbers of vertices through all the graphs at the $i$-th level
    is less than\linebreak $2^i(n/2^i+2)\leq n+2^{i+1}\leq 3n$.
    \item Since, the sum $m_i$ of numbers of edges
    in graphs at level $i>0$
    satisfies $m_i\leq m_{i-1}+n_{i-1}/2\leq m_{i-1}+3n/2$,
    we have $m_i\leq m+3in/2=O(m+n\log{n})$.
\end{itemize}
By the above, and since the function $\tmf$ is convex, 
we conclude the total time cost at the $i$-th level
is $O\left(\tmf(n,m+n\log{n})\log(R/\eps)\right)$.
Recall that there are $O(\log{n})$ levels and therefore the total time is $O\left(\tmf(n,m+n\log{n})\log(R/\eps)\log{n}\right)$.

\subsection{Exact Parametric Min $s,t$-Cut}
In this section we show how Theorem~\ref{t:aprx-pmc} implies
new bounds on computing \emph{exact}
parametric min $s,t$-cuts in a few interesting settings.

\paragraph{Integer Polynomial Costs.}
Suppose all the parametric costs $c_\lambda(uv)$ are
of the form $c_\lambda(uv)=Q_{uv}(\lambda)$,
where each $Q_{uv}$ is a (possibly different) constant-degree polynomial
with integer coefficients bounded
in the absolute value by an integer $U>0$
and take nonnegative values on $\dom$.
Recall $Q_{uv}$ can have a positive degree only if $u=s$ or $v=t$.
Moreover, if $u=s$, then $Q_{uv}$
is increasing, whereas when $v=t$,
then $Q_{uv}$ is decreasing.

Observe that the parametric capacity $c_\lambda(S)$ of any $S$, $s\in S\subseteq V\setminus \{t\}$ is
a constant-degree polynomial
with integer coefficients bounded by $nU$ in absolute value.
The same applies to a difference
polynomial 
$c_\lambda(S)-c_\lambda(S')$ for
any two such sets $S,S'$.

It is known that for a constant-degree polynomials $Q$ with integer coefficients bounded by $W$:
\begin{itemize}
    \item The roots of $Q$ are of absolute value $O(\poly(W))$ (e.g.,~\cite{yap2000fundamental}).
    \item Any two distinct roots of $Q$
    are at least $\Omega(1/\poly(W))$
    apart.~\cite{mahler1964}
\end{itemize}
This means that by setting $\lmin=-R/2$ and $\lmax=R/2$ (or slightly less aggressively, if e.g., $R/2\notin \dom$) for a sufficiently 
large even integer $R=O(\poly{nU})$ such that $R/2$ exceeds the maximum possible absolute value of a root of any polynomial
of the form $c_\lambda(S)-c_\lambda(S')$, we
will
indeed have $G_{\lmin}=G_{\lambda'}$
for all $\lambda'<\lmin$
and
$G_{\lmax}=G_{\lambda'}$
for all $\lambda'>\lmax$.

Moreover, assume we compute an $\eps$-approximate
parametric min $s,t$-cut $(\Lambda,\tau)$, where
\linebreak
$\Lambda=\{\lambda_1,\ldots,\lambda_k\}$. Suppose for some $i$ 
there exists $\lambda'$, $\max(\lambda_i,\lambda_{i+1}-2\eps)\leq \lambda'<\lambda_{i+1}$,
such that the minimal $s,t$-cut $S_{\lambda'}$ in $G_{\lambda'}$ satisfies $S_{\lambda_i}\subsetneq S_{\lambda'}\subsetneq S_{\lambda_{i+1}}$.
Let $\lambda_i^*=\max(\lambda_i,\lambda_{i+1}-2\eps)$.
Note that $S_{\lambda_i}=S_{\lambda_i^*}$
by Definition~\ref{d:eps-pmstc}.
Since $S_{\lambda'}$ is minimal,
$c_\lambda(S_{\lambda_i})(\lambda_i^*)- c_\lambda(S_{\lambda'})(\lambda_i^*)\leq 0$ 
and
${c_\lambda(S_{\lambda_i})(\lambda')-c_\lambda(S_{\lambda'})(\lambda')>0}$.
So the polynomial $c_\lambda(S_{\lambda_i})-c_\lambda(S_{\lambda'})$ is non-zero and has a root in the interval
$[\lambda_i^*,\lambda')$.
Similarly one can prove
that the polynomial $c_{\lambda}(S_{\lambda'})-c_{\lambda}(S_{\lambda_{i+1}})$ is non-zero and has a root
in the interval $[\lambda',\lambda_{i+1})$.
We conclude that the product polynomial
$[c_\lambda(S_{\lambda_i})-c_\lambda(S_{\lambda'})]\cdot [c_{\lambda}(S_{\lambda'})-c_{\lambda}(S_{\lambda_{i+1}})]$ is non-zero, has constant degree, has integer coefficients of order $O(\poly{nU})$, and has
two distinct roots in the interval
$[\lambda_i^*,\lambda_{i+1})$,
i.e., less than $2\eps$ apart.
Therefore, if we set $\eps$ so that
$1/\eps$ is a sufficiently large integer but still polynomial in $nU$,
the assumption $S_{\lambda_i}\subsetneq S_{\lambda'}\subsetneq S_{\lambda_{i+1}}$ leads to a contradiction.
As a result, for
all such $\lambda'$, $S_{\lambda'}$ equals either
$S_{\lambda'}$ or $S_{\lambda''}$.

In other words, computing an $\eps$-approximate
min $s,t$-cut $(\Lambda,\tau)$, where $\Lambda=\{\lambda_1,\ldots,\lambda_k\}$, gives as the structure
of all possible minimal min $s,t$-cuts  $S_{\lambda}$ in the following sense.
Suppose $\Lambda^*=\{\lambda_1^*,\ldots,\lambda_l^*\}$ is an \emph{exact} parametric min $s,t$-cut.
Then $k=l$ and $S(\lambda_i)=\{v\in V: \tau(v)\leq \lambda_i)=S_{\lambda_i^*}$
for all $i=1,\ldots,k$.
To compute $\lambda_i^*$, it is hence enough
to find the \emph{unique} $\lambda^*_i\in (\lambda_{i-1},\lambda_i]$
such that $c_\lambda(S(\lambda_{i-1}))(\lambda^*_i)=c_\lambda(S(\lambda_i))(\lambda^*_i)$ which boils
down to solving a polynomial equation
of constant degree.
It is well known that such equations can be solved
exactly in constant time for degrees at most~$4$.

Observe that if we run our $\eps$-approximate parametric min $s,t$-cut algorithm with $\lmin,\lmax,\eps$ set as described above, maximum flow
is always invoked on some minor $H$
of $G_{\lambda}$ for $\lambda$
that is an integer multiple of $\eps$. Since, $1/\eps$ is an integer, by multiplying edge
costs in $H$ by $1/\eps$,
we only need a maximum flow
algorithm that can handle integer
edge capacities of order $O(\poly(nU))$. By plugging
in the best-known algorithms
for computing max flow with integral capacities, we obtain the following.

\begin{thm}
Let $G$ be a graph whose parameterized capacities are constant-degree polynomials with integer coefficients in $[-U,U]$.
The structure of parametric min $s,t$-cut on $G$ can be computed in:
\begin{itemize}
    \item $O(m\cdot\min(m^{1/2},n^{2/3})\cdot \polylog\{n,U\})$
    time using a combinatorial
algorithm~\cite{goldberg1998beyond},
    \item $O((m+n^{1.5})\cdot \polylog\{n,U\})$ time using
    the algorithm of~\cite{flow-dense},
    \item $O(m^{1.497}\polylog\{n,U\})$
    time using the algorithm of~\cite{flow-sparse}.
\end{itemize}
The cut function can be found exacly in additional 
$O(n)$ time if the degrees of capacity polynomials
are at most $4$.
\end{thm}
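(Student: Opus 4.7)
The plan is to instantiate Theorem~\ref{t:aprx-pmc} with carefully chosen parameters $\lmin$, $\lmax$, and $\eps$, and then use the algebraic bounds on roots of integer polynomials (as laid out in the preceding paragraphs) to argue that the $\eps$-approximate output actually reveals the exact combinatorial structure of the parametric min $s,t$-cuts.

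First I would choose $\lmin=-R/2$ and $\lmax=R/2$ with $R=O(\mathrm{poly}(nU))$ a sufficiently large even integer so that, by the Cauchy-type bound on roots of integer-coefficient polynomials, every real root of every non-zero difference polynomial $c_\lambda(S)-c_\lambda(S')$ (for $s\in S,S'\subseteq V\setminus\{t\}$) lies in $(-R/2,R/2)$. Since each such polynomial has constant degree and integer coefficients bounded by $nU$ in absolute value, this guarantees $G_{\lmin}=G_{\lambda'}$ for all $\lambda'<\lmin$ and $G_{\lmax}=G_{\lambda'}$ for all $\lambda'>\lmax$, meeting the input requirements of Theorem~\ref{t:aprx-pmc}.

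Next I would choose $\eps$ so that $1/\eps$ is a sufficiently large integer of magnitude $\mathrm{poly}(nU)$, small enough that Mahler's separation bound forbids two distinct roots of any degree-$O(1)$ product polynomial $[c_\lambda(S_1)-c_\lambda(S')]\cdot[c_\lambda(S')-c_\lambda(S_2)]$ (whose integer coefficients are of order $\mathrm{poly}(nU)$) from being within $2\eps$ of one another. As argued in the paragraph preceding the theorem, this choice implies that between any two consecutive breakpoints $\lambda_i,\lambda_{i+1}$ of the returned $\eps$-approximate parametric min $s,t$-cut, no new minimal cut $S_{\lambda'}$ strictly between $S(\lambda_i)$ and $S(\lambda_{i+1})$ can appear. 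Consequently the chain $S(\lambda_1)\subsetneq\cdots\subsetneq S(\lambda_k)$ coincides with the sequence of minimal min-cuts realized at the true breakpoints $\lambda_1^*<\cdots<\lambda_k^*$ of the exact parametric cut function.

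Now for the running time, with $R=\mathrm{poly}(nU)$ and $1/\eps=\mathrm{poly}(nU)$ we have $\log(R/\eps)=O(\log(nU))$, contributing only a polylogarithmic overhead. Every maximum-flow call invoked by Algorithm~\ref{alg:find_mincuts} is on a minor $H$ of some $G_\lambda$ with $\lambda$ an integer multiple of $\eps$; after scaling capacities by $1/\eps$, these become integer capacities of magnitude $\mathrm{poly}(nU)$. Plugging the running time bound $O(\tmf(n,m\log n)\cdot\log(R/\eps)\cdot\log n)$ of Theorem~\ref{t:aprx-pmc} together with the three cited max-flow algorithms~\cite{goldberg1998beyond,flow-dense,flow-sparse} (each on integer capacities bounded by $\mathrm{poly}(nU)$) yields the three claimed time bounds. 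Finally, when all capacity polynomials have degree at most $4$, each exact breakpoint $\lambda_i^*$ is the unique root in $(\lambda_{i-1},\lambda_i]$ of the constant-degree integer polynomial equation $c_\lambda(S(\lambda_{i-1}))=c_\lambda(S(\lambda_i))$, solvable exactly in $O(1)$ time via the quartic formula; since there are $k=O(n)$ breakpoints, this postprocessing costs $O(n)$ total. The main technical obstacle is ensuring the two root-theoretic bounds (Cauchy for magnitude, Mahler for separation) are quantitatively strong enough that $R$ and $1/\eps$ can be taken as a \emph{fixed} polynomial in $nU$; once this is pinned down, the rest is a direct reduction to Theorem~\ref{t:aprx-pmc}.
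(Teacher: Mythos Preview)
Your proposal is correct and follows essentially the same approach as the paper: choose $\lmin,\lmax$ via the Cauchy-type root bound and $\eps$ via Mahler's separation bound so that the $\eps$-approximate output of Theorem~\ref{t:aprx-pmc} has the exact cut structure, then scale capacities by $1/\eps$ to reduce to integer-capacity max-flow and plug in the three cited algorithms. The postprocessing step for degree~$\leq 4$ (solving one constant-degree equation per breakpoint) is also identical to the paper's argument.
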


\paragraph{Discrete Domains.}
Let us now consider the case when $\dom$ is discrete and has $\ell$ elements.
Suppose all parametric costs are arbitrary functions meeting the requirements of the parametric min $s,t$-cut problem.
Then, we can make the
$\eps$-approximate algorithm exact
by employing the following simple modifications.
We start with $\lmin=\min\dom$
and $\lmax=\max\dom$.
In the binary-search like step,
we always choose the middle
element of $\dom\cap [\lambda_1,\lambda_2]$ as the next pivot. This way, all the max-flow
computations are performed
on minors of $G_\lambda$ for $\lambda\in \dom$.

\begin{thm}
Let $G$ be a graph with arbitrary parameterized capacities $\dom\to\mathbb{R}$ for a discrete
domain $\dom\subseteq\mathbb{R}$, where $\ell=|\dom|$. Let $\tmf(n',m')$ be defined
as in Theorem~\ref{t:aprx-pmc}.
Then exact parametric min $s,t$-cut on $G$ can be 
computed in $O(\tmf(n,m\log{n})\cdot\log{\ell}\cdot\log{n})$ time.
\end{thm}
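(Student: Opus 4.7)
The plan is to adapt the recursive procedure $\textsc{ApxParametricMinCut}$ from Theorem~\ref{t:aprx-pmc} so that it performs an exact search over the discrete domain instead of an $\eps$-grid search. I would set $\lmin=\min\dom$ and $\lmax=\max\dom$ as the initial inputs, and replace the binary-search loop (the one that halves $\lambda_2-\lambda_1$ by a factor of two using $\eps$) with a binary search over $\dom\cap[\lambda_1,\lambda_2]$, always picking as a pivot the middle element of this set in sorted order. The loop terminates once $\lambda_1$ and $\lambda_2$ are consecutive elements of $\dom$ (the discrete analogue of $\lambda_2-\lambda_1\le\eps$). Maintaining the invariant $|S_{\lambda_1}|<n/2$ and $|S_{\lambda_2}|\geq n/2$ throughout the loop is done exactly as in the approximate version. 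Since $|\dom\cap[\lmin,\lmax]|\leq \ell$, the search runs at most $O(\log\ell)$ iterations, each performing one maximum-flow computation via $\textsc{MinimalMinCut}$.

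Next, with $\lambda_1,\lambda_2$ consecutive in $\dom$, I would recurse on the two contracted minors $G/T_{\lambda_1}$ and $G/S_{\lambda_2}$ passing the parameter pairs $(\lmin,\lambda_1)$ and $(\lambda_2,\lmax)$ respectively, in exact analogy with the approximate algorithm. Observe that since $\dom$ is discrete, the definition of $\eps$-approximate parametric min $s,t$-cut (Definition~\ref{d:eps-pmstc}) reduces to the exact one of Definition~\ref{d:pmstc}: indeed, when $\lambda_1,\lambda_2$ are consecutive in $\dom$, the half-open interval $[\lambda_1,\lambda_2-\eps)\cap\dom$ degenerates to $\{\lambda_1\}$ for any sufficiently small $\eps$, so the guarantee for the output $(\Lambda,\emb)$ reads exactly as in Definition~\ref{d:pmstc}. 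Correctness then follows by the same induction as in the proof of Theorem~\ref{t:aprx-pmc}, using Lemmas~\ref{l:laminar} and~\ref{l:contract} to justify the recursive contractions and Definition~\ref{d:pmstc} to combine the two partial outputs into a single parametric min $s,t$-cut of $G$.

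For the running time analysis, the recursion tree structure is unchanged: the number of vertices at each level at most halves, yielding $O(\log n)$ levels; and as in the proof of Theorem~\ref{t:aprx-pmc}, the sum of vertex counts at each level is $O(n)$ and the sum of edge counts is $O(m+n\log n)$ thanks to parallel-edge merging and the telescoping bound $m_i\leq m_{i-1}+3n/2$. The only change is that each recursive call contributes $O(\log\ell)$ max-flow computations instead of $O(\log(R/\eps))$. Combined with convexity of $\tmf$, the total cost is $O(\tmf(n,m+n\log n)\cdot\log\ell\cdot\log n)=O(\tmf(n,m\log n)\cdot\log\ell\cdot\log n)$.

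The main subtlety I expect is preserving the constant-time evaluability of parametric capacities after contractions. Luckily this is already addressed in the proof of Theorem~\ref{t:aprx-pmc}: when parallel edges are merged their combined capacity function is always expressible as an original $c_\lambda(xy)$ shifted by a real number, so evaluating it at any discrete $\lambda\in\dom$ still takes $O(1)$ time using only black-box evaluation of the input functions. Thus the max-flow computations are always performed on legitimate minors of some $G_\lambda$ with $\lambda\in\dom$, as required by the definition of $\tmf$. This completes the reduction.
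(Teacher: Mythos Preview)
Your proposal is correct and follows essentially the same approach as the paper: initialize with $\lmin=\min\dom$, $\lmax=\max\dom$, replace the $\eps$-grid binary search with a binary search over $\dom\cap[\lambda_1,\lambda_2]$ using the middle element as pivot, and otherwise reuse the recursion and analysis of Theorem~\ref{t:aprx-pmc} verbatim. The paper's own proof is just a short paragraph stating these modifications, so your write-up is in fact more detailed than the original.
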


\paragraph{Planar Graphs.} By Remark~\ref{rem:minor}, all the max-flow computations in the algorithm of Theorem~\ref{t:aprx-pmc} are performed
on minors of $G$. As a result, if the input
graph $G$ is planar, we can use state-of-the-art
planar max $s,t$-flow algorithms to obtain
better bounds on the parametric min-$s,t$-cut algorithms
on planar graphs. Since maximum $s,t$-flow
for planar graphs can be computed in $O(n\log{n})$
time even for real capacities~\cite{BorradaileK09, Erickson10}, planar parametric min $s,t$-cut can be solved exactly:
\begin{itemize}
    \item in $O(n\polylog\{n,U\})$ time when parameterized capacities are constant degree polynomials with integer coefficients in $[-U,U]$,
    \item in $O(n\log^3{n}\log{\ell})$ time for discrete domains $\dom\subseteq\mathbb{R}$
    of size $\ell$.
\end{itemize}

What may be surprising, our reduction is powerful enough to obtain
an interesting subquadratic \emph{strongly polynomial} exact algorithm
computing parametric min $s,t$-cut in a planar graph with
capacity functions that are arbitrary polynomials of degree no more than $4$
and real coefficients.

We now sketch this algorithm. It is based on the \emph{parametric search} technique~\cite{Megiddo83} (see also~\cite{AgarwalST94}). Suppose we want to solve some decision
problem $\mathcal{P}(\alpha)$, where $\alpha\in\mathbb{R}$, such that if $\mathcal{P}(\alpha_0)$ is
a yes instance, then $\mathcal{P}(\alpha')$ for all $\alpha''<\alpha_0$
is also a yes instance. We wish to find the maximum $\alpha^*$ for
which $\mathcal{P}(\alpha^*)$ is a yes instance.
An example problem $\mathcal{P}(\alpha)$ could be \emph{``does an $s,t$-flow
of value $\alpha$ exist in $G$?''}.
Then, $\alpha^*$ clearly equals the maximum flow in $G$.

\newcommand{\Ot}{\tilde{O}}

Suppose we have an efficient strongly polynomial algorithm solving the decision
problem. Then, in practice one could find $\alpha^*$ via binary search given
some initial interval containing $\alpha^*$;
however, in general this would not lead to an exact algorithm for real values
of $\alpha^*$. Parametric search is a technique
for converting a strongly polynomial \emph{parallel} decision algorithm
into a \emph{sequential or parallel} strongly polynomial \emph{optimization} algorithm as explained above. The only requirement to keep in mind is that the decision algorithm is governed
by comparisons, each of which amounts to testing the sign of some
low-degree (say, no more than~$4$) polynomial in $\alpha$.
Specifically, suppose we have a parallel decision algorithm $\mathcal{A}$ that uses $W_{\mathcal{A}}$ work and $D_{\mathcal{A}}$ depth, and also a (possibly the same) another parallel decision algorithm $\mathcal{B}$ with
$W_{\mathcal{B}}$ work and $D_{\mathcal{B}}$ depth.
Suppose for simplicity all these quantities are polynomial in $n$.
Then, parametric search yields a strongly-polynomial optimization
algorithm computing $\alpha^*$ in $\Ot(D_{\mathcal{A}}\cdot W_{\mathcal{B}}+W_{\mathcal{A}})$ work
and $\Ot(D_{\mathcal{A}}\cdot D_{\mathcal{B}})$ depth.

Now back to planar graphs. We will use parametric search in a nested way.
First of all, we will need a decent parallel max flow algorithm for planar
graphs. It is well-known~(e.g.,~\cite{Erickson10}) that the decision variant of the max $s,t$-flow
problem on planar graphs is reducible to negative cycle detection
in the dual graph (which is also planar).
There exists a parallel negative cycle detection algorithm
on planar graphs with $\Ot(n+n^{3/2}/d^3)$ work and $\Ot(d)$ depth
for any $d\geq 1$~\cite{parallel-apsp}.
Hence, by using that algorithm as both $\mathcal{A}$ 
(for $d=D^{3/7}$) 
and $\mathcal{B}$ (for $d=D^{4/7}$), where $D$ is a parameter,
we have $W_{\mathcal{A}}=\Ot(n+n^{3/2}/D^{9/7})$, $D_{\mathcal{A}}=\Ot(D^{3/7})$,
$W_{\mathcal{B}}=\Ot(n+n^{3/2}/D^{12/7})$, $D_{\mathcal{B}}=\Ot(D^{4/7})$.
So parametric search yields a strongly-polynomial parallel
max-flow algorithm for planar graphs with work $\Ot(n+n^{3/2}/D^{9/7})$ and depth $\Ot(D)$
for any $D\geq 1$.

Given a parallel algorithm for max flow in planar graphs, we can
use parametric search (instead of binary search) once again when computing the pair
$\lambda_1,\lambda_2$ in our recursive algorithm.
More specifically, we would like $\lambda_1$ to be the largest
such that $|S_{\lambda_1}|\leq n/2$,
whereas $\lambda_2$ to be the smallest such that $|S_{\lambda_2}|\geq n/2$.
It is easy to see that $\lambda_1$ and $\lambda_2$ are precisely neighboring
(or the same) breakpoints of the cut function, i.e., belong
to $\Lambda$ from Definition~\ref{d:pmstc}.
To actually compute $\lambda_1,\lambda_2$, we use parametric
search with $\mathcal{A}$ set to the obtained parallel
max-flow algorithm\footnote{Actually, it is computing max-flow followed by a graph search to determine the minimal min $s,t$-cut. However, this latter step
does not involve any comparisons on capacities, so its depth can be ignored.}, and $\mathcal{B}$ to the best known algorithm
that computes a minimum min $s,t$-cut in a planar graph,
i.e., a combination of the max-flow algorithm of~\cite{BorradaileK09, Erickson10},
and linear time graph search.
So, in the outer parametric search instance we have$W_{\mathcal{A}}=\Ot(n+n^{3/2}/D^{9/7})$, $D_{\mathcal{A}}=\Ot(D)$,
and $W_{\mathcal{B}}=\Ot(n)$.
Therefore, the obtained algorithm runs in $\Ot(n^{3/2}/D^{9/7}+Dn)$ sequential time.
By setting $D=n^{7/32}$, we obtain $\Ot(n^{1+7/32})=\Ot(n^{1.21875})$ time.

We stress that all the algorithms~\cite{BorradaileK09, Erickson10, parallel-apsp} used above proceed by only adding and comparing edge weights. Adding polynomials cannot increase their degrees, so indeed when these algorithms are run ``generically'' for some $\lambda$, the control flow depends only on signs of some small degree polynomials.

\begin{thm}
Let $G$ be a planar graph whose parameterized capacities are all polynomials of degree at most $4$ with real coefficients.
There exists a strongly polynomial algorithm computing parametric min $s,t$-cut in $G$ exactly in $\Ot(n^{1+7/32})$ time.
\end{thm}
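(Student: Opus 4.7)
The plan is to apply Megiddo's parametric search technique in a nested fashion on top of the recursive framework of Theorem~\ref{t:aprx-pmc}. The outer recursion is unchanged: we contract $S_{\lambda_1}$ and $T_{\lambda_2}$ and recurse on the two sides, where $(\lambda_1,\lambda_2)$ is a pair of consecutive breakpoints of $\cutf$ bracketing the ``median'' breakpoint (the one whose $s$-side first reaches size $n/2$). The only thing that must change is that the discrete binary search used in the approximate algorithm to locate this pair now has to produce the exact real values of $\lambda_1$ and $\lambda_2$, and the max-flow subroutine must work in strongly polynomial time with real capacity functions. Since planarity is preserved under edge contraction and deletion (cf.\ Remark~\ref{rem:minor}), every max-flow call in the recursion is on a planar minor of $G$, so planar max-flow algorithms remain applicable throughout.

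First I would build a parallel planar max-flow algorithm. The standard reduction turns max $s,t$-flow in a planar graph into negative-cycle detection in the planar dual, and the parallel single-source shortest paths / negative-cycle detection algorithm of~\cite{parallel-apsp} runs in $\widetilde{O}(n+n^{3/2}/d^3)$ work and $\widetilde{O}(d)$ depth on planar graphs for any $d\ge 1$. Crucially, its control flow is governed only by additions and sign tests on edge weights; when the weights are polynomials of degree at most $4$ in $\lambda$, additions keep the degree bounded and each comparison reduces to testing the sign of a constant-degree polynomial in $\lambda$, which is exactly the setting Megiddo's parametric search requires. Plugging this algorithm in as both $\mathcal{A}$ (with depth parameter $d=D^{3/7}$) and $\mathcal{B}$ (with $d=D^{4/7}$) of parametric search yields an \emph{inner} parametric-search-based sequential strongly polynomial parallel max-flow routine of work $\widetilde{O}(n+n^{3/2}/D^{9/7})$ and depth $\widetilde{O}(D)$, for any $D\ge 1$.

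Next I would use parametric search a second time at the outer level, inside each recursive call of the divide-and-conquer. To identify the breakpoint $\lambda_2$ exactly, I take the decision problem ``is $|S_\lambda|\geq n/2$ in the minimal min $s,t$-cut of $G_\lambda$?''; by Lemma~\ref{l:laminar} this is monotone in $\lambda$, and the answer changes precisely at a breakpoint of $\cutf$. For $\mathcal{A}$ I use the parallel parametric max-flow routine from the previous paragraph (followed by a graph search to extract the minimal cut, which performs no comparisons on $\lambda$-dependent quantities and so contributes nothing to the depth), and for $\mathcal{B}$ I use the sequential $\widetilde{O}(n)$ real-weight planar max-flow algorithm of~\cite{BorradaileK09,Erickson10}. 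With $W_{\mathcal{A}}=\widetilde{O}(n+n^{3/2}/D^{9/7})$, $D_{\mathcal{A}}=\widetilde{O}(D)$, and $W_{\mathcal{B}}=\widetilde{O}(n)$, the Megiddo bound gives a sequential cost of $\widetilde{O}(D_{\mathcal{A}}\cdot W_{\mathcal{B}}+W_{\mathcal{A}})=\widetilde{O}(n D + n^{3/2}/D^{9/7})$ per breakpoint located. Finding $\lambda_1$ is symmetric.

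Finally I would charge the recursion. As in the proof of Theorem~\ref{t:aprx-pmc}, the input graph halves in size at each level, the total number of vertices at each level sums to $O(n)$, and there are $O(\log n)$ levels; since our per-call work is convex in the graph size and dominated by a single breakpoint-location call, the total sequential time is $\widetilde{O}(nD+n^{3/2}/D^{9/7})$. Balancing by setting $D=n^{7/32}$ yields $\widetilde{O}(n^{1+7/32})=\widetilde{O}(n^{1.21875})$. The hardest part I expect is verifying the ``genericity'' condition needed for parametric search: I must check that every edge-weight arithmetic operation performed by the inner parallel algorithm preserves polynomial degree at most $O(1)$ under the planar-minor operations of Remark~\ref{rem:minor}, because contractions sum capacity functions along merged parallel edges. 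As in the proof of Theorem~\ref{t:aprx-pmc}, at most one summand per merged group carries a non-constant $\lambda$-dependence (it originates from an edge incident to $s$ or $t$ in~$G$), so the combined capacity remains an original degree-$\le 4$ polynomial plus a real shift, and sign tests in the shortest-paths/max-flow algorithms remain sign tests on degree-$O(1)$ polynomials in $\lambda$ throughout the recursion.
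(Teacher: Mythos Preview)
Your proposal is correct and follows essentially the same approach as the paper: nested parametric search, with the inner application building a parallel planar max-flow routine from the $\widetilde{O}(n+n^{3/2}/d^3)$-work, $\widetilde{O}(d)$-depth negative-cycle detector of~\cite{parallel-apsp} (choosing $d=D^{3/7}$ for $\mathcal{A}$ and $d=D^{4/7}$ for $\mathcal{B}$), and the outer application replacing the binary search in the recursive divide-and-conquer by parametric search with this parallel routine as $\mathcal{A}$ and the sequential $\widetilde{O}(n)$ planar max-flow of~\cite{BorradaileK09,Erickson10} as $\mathcal{B}$, then balancing at $D=n^{7/32}$. Your treatment of the genericity issue (that merged capacity functions remain a single original degree-$\le 4$ polynomial plus a real shift) and of the recursion charging is in fact slightly more explicit than the paper's sketch.
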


\section{Removing  Assumptions on $F_i$}\label{sec:assumptions}
In this section we argue why the assumptions on the functions $F_i$ introduced in Section~\ref{sec:prelim} are valid without loss of generality. More precisely, we assumed that for all $i$, $F_i(\emptyset) = F_i(V_i) = 0$ and $F_i(S) \geq 0$ for all $S$.
Here we show that a simple preprocessing step can enforce all of these conditions.

Without changing the original problem we can shift each $F_i$ such that it evaluates to $0$ on $\emptyset$, by defining ${\overline{F}}_i(S) = F_i(S) - F_i(\emptyset)$. This only changes $F$ by a constant term without affecting the sets that minimize the parametric problem  (\ref{eq:parametric-def}).

For each $i$, we use Lemma~\ref{lem:opt-base-poly} to find a point $w_i \in B(\overline{F}_i)$. Using this point, we define $\overline{\overline{F}}_i(S) = \overline{F}_i(S) - w_i(S)$. Since by definition $w_i(V_i) = \overline{F}_i(V_i)$, we have that $\overline{\overline{F}}_i(V_i) = 0$. Also, we have $\overline{\overline{F}}_i(\emptyset) = \overline{F}_i(\emptyset)  = 0$. Finally, since $w_i(S) \leq \overline{F}_i(S)$ for all $S$, we also have $\overline{\overline{F}}_i(S) \geq 0$.

Now we can equivalently rewrite the parametric problem
\begin{align*}
F_\alpha(A) &= F(A) + \sum_{j \in A}   \psi_j'(\alpha) \\
&= \overline{F}(A) + \sum_{j \in A}   \psi_j'(\alpha) + \left( \sum_{i=1}^m F_i(\emptyset) \right) \\
&= \overline{\overline{F}}(A) + \sum_{j \in A} \left(  \psi_j'(\alpha) + \sum_{i=1}^\m w_i(j) \right) + \left( \sum_{i=1}^m F_i(\emptyset) \right)\,. \\
\end{align*}
Now we can solve the problem on $\overline{\overline{F}} = \sum_{i=1}^{\m} \overline{\overline{F}}_i$ with the parametric penalties $\overline{\overline{\psi}}'_j (\alpha) = \psi'_j(\alpha) + \sum_{i=1}^\m w_i(j)$, which maintain the validity of Assumption~\ref{defn:conditions}.

To compute a point in the base polytope of a submodular function we use the following folklore lemma, which shows that the running time of our initialization procedure is $O(\sum_{i=1}^\m \vert V_i \vert \cdot \evo_i)$:
\begin{lem}[\cite{fujishige1980lexicographically}]\label{lem:opt-base-poly}
Let $F : 2^V \rightarrow \mathbb{Z}$ be a submodular set function, with $F(\emptyset) = 0$, and let $B(F)$ be its base polytope. Given any $x \in \mathbb{R}^{\vert V \vert}$, one can compute
\[
\arg\max_{w \in B(F)} \langle x, w\rangle
\]
 using $O(\vert V \vert)$ calls to an evaluation oracle for $F$. Furthermore $w$ is integral.
\end{lem}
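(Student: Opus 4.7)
The plan is to use Edmonds' classical greedy algorithm for linear optimization over the submodular base polytope. First I would sort the elements of $V$ in decreasing order of the weights $x_i$, obtaining a permutation $\pi$ with $x_{\pi(1)} \geq x_{\pi(2)} \geq \ldots \geq x_{\pi(n)}$. Then I would define the chain of prefix sets $S_0 = \emptyset \subsetneq S_1 \subsetneq \ldots \subsetneq S_n = V$, where $S_i = \{\pi(1),\ldots,\pi(i)\}$, and set $w_{\pi(i)} = F(S_i) - F(S_{i-1})$ for $i=1,\ldots,n$. Computing $w$ this way requires evaluating $F$ on the $n+1$ prefix sets, i.e., $n+1 = O(|V|)$ evaluation oracle calls. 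Integrality of $w$ is immediate because each coordinate is a difference of two integer values of $F$.

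Next I would verify that $w \in B(F)$. The equality constraint $w(V) = F(V)$ follows telescopically from $\sum_{i=1}^n (F(S_i) - F(S_{i-1})) = F(S_n) - F(S_0) = F(V)$. For the inequality constraints $w(A) \leq F(A)$ for all $A \subseteq V$, I would use submodularity: for any $A$, one inductively bounds $w(A \cap S_i)$ against $F(A \cap S_i)$ by invoking the submodular inequality $F(S_{i-1}) + F((A\cap S_i)) \geq F(S_i) + F(A \cap S_{i-1})$ whenever $\pi(i)\in A$, which rearranges to $F(A\cap S_i) - F(A\cap S_{i-1}) \geq F(S_i) - F(S_{i-1}) = w_{\pi(i)}$. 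Summing over $i$ with $\pi(i) \in A$ yields $F(A) \geq w(A)$.

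Finally I would prove optimality. By LP duality, any feasible $w \in B(F)$ and any decomposition of $x$ as a non-negative combination of indicator vectors of prefix sets along the sorted order produces matching primal-dual values. Concretely, writing $x$ via summation by parts as $x = \sum_{i=1}^{n-1} (x_{\pi(i)} - x_{\pi(i+1)}) \mathbf{1}_{S_i} + x_{\pi(n)} \mathbf{1}_V$, and noting that by the sorted order all coefficients $x_{\pi(i)} - x_{\pi(i+1)}$ are non-negative, we get for the specific $w$ constructed above $\langle x, w \rangle = \sum_{i=1}^{n-1} (x_{\pi(i)} - x_{\pi(i+1)}) F(S_i) + x_{\pi(n)} F(V)$, whereas for any other $w' \in B(F)$ the analogous sum, using $w'(S_i) \leq F(S_i)$ and $w'(V) = F(V)$ (with the last coefficient possibly negative but attached to the equality term), gives $\langle x, w'\rangle \leq \langle x, w\rangle$. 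Hence our greedy $w$ attains the maximum.

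The only slightly delicate point is handling the sign of $x_{\pi(n)}$ in the summation-by-parts argument, which is why one pairs it with the equality constraint $w(V) = F(V)$ rather than an inequality; all other coefficients $x_{\pi(i)} - x_{\pi(i+1)}$ are guaranteed non-negative by the sort order, so $w'(S_i) \leq F(S_i) = w(S_i)$ gives the desired inequality term by term. Counting work, sorting takes $O(|V|\log|V|)$ comparisons but only $O(|V|)$ oracle calls are needed, matching the claim.
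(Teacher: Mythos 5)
The paper does not prove Lemma~\ref{lem:opt-base-poly}; it cites it directly from Fujishige (1980) as a folklore result, and the same greedy construction appears as item~1 of Lemma~\ref{lem:mincut_characterization} (cited from Bach). Your argument is exactly the standard Edmonds greedy proof that this citation is intended to invoke, and it is correct: sorting costs no oracle calls, the $n+1$ prefix evaluations give the $O(|V|)$ bound, integrality is immediate, feasibility follows from the telescoping/submodularity argument you give, and optimality follows from the summation-by-parts pairing of the nonnegative coefficients with the inequality constraints and the (possibly negative) last coefficient with the equality constraint $w'(V)=F(V)$.
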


\section{Deferred Proofs}

\label{sec:Deferred-Proofs}

\subsection{Proof of Lemma~\ref{lem:cut2dual}}
\label{proof:lem_cut2dual}
We define the primal and dual optima of this problem, which will be useful for the proof.
\begin{defn}[Graph subproblem minimizers]
Let $\txx^*$ be the minimizer of
\begin{align}
\underset{\xx}{\min}\, g(\xx) + \phi(\xx)\,,
\end{align}
and $\tyy^*$ be the minimizer of
\begin{align}
\underset{\yy\in B(G)}{\min}\, \phi^*(-\yy)\,.
\end{align}
\end{defn}

The main tool that we will use for this proof will be the following two
structural statements, which can be
extracted from Propositions 4.2 and 8.3 in \cite{bach2011learning}.
\begin{lem}[\cite{bach2011learning}]
Consider any
submodular function $F:2^V\rightarrow\mathbb{R}$.
\begin{enumerate}
\item{
Fix some $\xx\in\mathbb{R}^n$.
For any $\yy\in\mathbb{R}^n$, $\yy$ 
is an optimizer of $\underset{\yy\in B(F)}{\max}\, \langle \yy, \xx\rangle$
if and only if there exists a permutation $\pi$ of $[n]$ such that
$x_{\pi_1}\geq x_{\pi_2}\geq \dots\geq x_{\pi_n}$
and for all $u\in V$ we have
\begin{align*}
w_u = \begin{cases}
F(\{\pi_1\}) & \text{if $u = 1$}\\
F(\{\pi_1,\pi_2,\dots,\pi_u\}) - F(\{\pi_1,\pi_2,\dots,\pi_{u-1}\}) & \text{if $u \geq 2$}\\
\end{cases}
\,.
\end{align*}
	}
\item{
Given a function $\phi$ that satisfies the conditions in Definition~\ref{defn:conditions},
the optimal solution $\txx^*$ to the problem
\begin{align*}
\underset{\xx}{\min}\, f(\xx) + \phi(\xx)\,,
\end{align*}
where $f$ is the Lov\'{a}sz extension of $F$,
is given by 
\begin{align*}
\tx_u^* = -\inf(\{\lambda\in\mathbb{R}\ :\ u\in S(\lambda)\})
\end{align*}
for all $u\in V$,
where 
\begin{align*}
S(\lambda) = \underset{S\subseteq V}{\mathrm{argmin}}\, F(S) + \sum\limits_{u\in S} \phi_u'(-\lambda)\,.
\end{align*}
}
\end{enumerate}
\label{lem:mincut_characterization}
\end{lem}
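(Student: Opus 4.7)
The lemma has two independent parts, and I will tackle each with a different classical tool.

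\textbf{Part 1 (Edmonds' greedy characterization).} My plan is to run the standard LP-duality argument for the base polytope. First I will verify that the vector $\yy^g$ produced by the greedy formula along $\pi$ lies in $B(F)$. The equality $\yy^g(V)=F(V)$ is immediate by telescoping. For the inequalities $\yy^g(S)\leq F(S)$, I will fix $S=\{\pi_{i_1},\dots,\pi_{i_k}\}$ with $i_1<\dots<i_k$ and apply submodularity of $F$ with the pair of chains $A_j=\{\pi_{i_1},\dots,\pi_{i_{j-1}}\}\subseteq B_j=\{\pi_1,\dots,\pi_{i_j-1}\}$ and element $\pi_{i_j}$. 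Summing the resulting marginal inequalities over $j$ telescopes on the left to $F(S)$ and on the right to $\yy^g(S)$, giving feasibility.

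Next I will establish optimality via Abel summation. Setting $S_u=\{\pi_1,\dots,\pi_u\}$,
\[
\langle \yy,\xx\rangle=\sum_{u=1}^{n-1}\yy(S_u)\bigl(x_{\pi_u}-x_{\pi_{u+1}}\bigr)+\yy(V)\,x_{\pi_n}.
\]
For any $\yy\in B(F)$ the factors $x_{\pi_u}-x_{\pi_{u+1}}$ are nonnegative, $\yy(S_u)\leq F(S_u)$, and $\yy(V)=F(V)$, so the right-hand side is upper bounded by the same expression with $\yy$ replaced by $\yy^g$; equality holds for $\yy^g$ because its partial sums are exactly $F(S_u)$ by construction. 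This proves the ``if'' direction. For the ``only if'' direction, observe that equality in the chain above forces $F(S_u)-\yy(S_u)=0$ whenever $x_{\pi_u}>x_{\pi_{u+1}}$, and by choosing the permutation $\pi$ (consistent with the decreasing order of $\xx$) so that ties are broken to match the tightness pattern of $\yy$ along a maximal tight chain from $\emptyset$ to $V$, the greedy recurrence exactly reconstructs $\yy$.

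\textbf{Part 2 (Parametric primal characterization).} This is essentially a restatement of Lemma~\ref{lem:param-sfm} after a change of variables. I will apply that lemma to the parameterized family $F_\mu(S)=F(S)+\sum_{u\in S}\phi'_u(\mu)$ with the strictly convex potentials $\phi_u$ (which satisfy Assumption~\ref{defn:conditions}). The lemma yields that the unique minimizer $\txx^*$ of $f(\xx)+\phi(\xx)$ satisfies $A^\mu:=\arg\min_S F_\mu(S)=\{u:\tx^*_u\geq \mu\}$. Substituting $\mu=-\lambda$ gives $S(\lambda)=A^{-\lambda}=\{u:\tx^*_u\geq -\lambda\}$, so $u\in S(\lambda)$ iff $\lambda\geq -\tx^*_u$, whence $\inf\{\lambda:u\in S(\lambda)\}=-\tx^*_u$. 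Negating both sides yields the claimed formula.

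\textbf{Main obstacle.} I expect Part~2 to be a short reduction to an already-proven lemma, and the greedy optimality inequality in Part~1 to be routine Abel summation followed by submodularity. The genuinely delicate step is the ``only if'' half of Part~1 when $\xx$ has repeated entries: one must show that \emph{some} permutation consistent with the decreasing order of $\xx$ makes the greedy formula reproduce $\yy$ exactly. This requires identifying the maximal chain of tight inequalities of $B(F)$ at $\yy$ and selecting $\pi$ compatible with that chain, which is the standard but technical complementary-slackness argument underlying Edmonds' greedy algorithm.
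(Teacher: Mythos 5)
The paper never proves this lemma; it attributes both items to Propositions~4.2 and~8.3 of~\cite{bach2011learning} and uses them as black boxes, so there is no in-text proof to compare against. Your Part~2 argument is a correct and self-contained replacement: under $\mu=-\lambda$, your $S(\lambda)$ is exactly the family $A^\mu$ of Lemma~\ref{lem:param-sfm}, and the claimed formula $\tx^*_u=-\inf\{\lambda:u\in S(\lambda)\}$ is an immediate rewrite of $A^\mu=\{u:x^*_u\geq\mu\}$, with the hypotheses of Assumption~\ref{defn:conditions} supplying the strict convexity and the $\pm\infty$ limits that Lemma~\ref{lem:param-sfm} needs. Your Part~1 ``if'' direction is the standard Edmonds argument and is correct: the telescoping feasibility proof via submodular marginals on the chain $A_j\subseteq B_j$ and the Abel-summation optimality bound are both right.

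The gap you flag yourself in the ``only if'' half of Part~1 is not fixable, because the statement is literally false when $x$ has ties. Take $V=\{1,2\}$ and $F$ the cut function of a single unit-weight undirected edge, so $F(\emptyset)=F(V)=0$ and $F(\{1\})=F(\{2\})=1$. Then $B(F)$ is the segment from $(1,-1)$ to $(-1,1)$. With $x=(0,0)$, every point of $B(F)$ maximizes $\langle w,x\rangle=0$, yet the only greedy vectors are the two endpoints; the midpoint $(0,0)$ is an optimizer that is not of greedy form for any permutation. So no tie-breaking rule along ``a maximal tight chain'' can reconstruct $w$ in general. The correct characterization (the one Bach actually proves) is a tightness condition --- $w\in B(F)$ is optimal iff $w(\{v:x_v\geq\theta\})=F(\{v:x_v\geq\theta\})$ for every $\theta$ --- whose \emph{extreme points} are exactly the greedy vectors. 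You should restate the lemma in that form; as a side remark, the paper itself invokes the ``only if'' direction inside Lemma~\ref{lem:cut2dual} (to argue $\tyy^*$ is integral and to build the $G$-decomposition via one fixed sorting permutation), and making that argument airtight also requires either excluding the tied case or replacing the greedy formula with the tightness characterization.
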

Additionally, we present two simple lemmas which will be useful in the proof. The first one
upper bounds the $\ell_1$ diameter of a base polytope, and the second one upper bounds
the $\ell_1$ norm of the gradient of a function in the base polytope.
\begin{lemma}
For any submodular function $F:2^V\rightarrow\mathbb{R}_{\geq 0}$
and $F(S) \leq \fmax$ for all $S\subseteq V$,
we have that 
\[ \underset{\yy\in B(F)}{\max}\, \|\yy\|_1 \leq 2n\fmax\,.\]
\label{lem:y_ub}
\end{lemma}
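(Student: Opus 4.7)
The plan is to bound each coordinate $y_i$ separately by $\fmax$ in absolute value and then sum, which already gives a bound of $n\fmax$, hence well within the claimed $2n\fmax$.

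First I would establish the coordinate-wise upper bound. Since $\yy \in B(F)$, applying the base polytope inequality to the singleton $\{i\}$ yields
\begin{equation*}
y_i \;=\; \yy(\{i\}) \;\leq\; F(\{i\}) \;\leq\; \fmax.
\end{equation*}
For the lower bound on $y_i$, I would exploit the complementary subset $V \setminus \{i\}$ together with the equality constraint of the base polytope. Specifically, $\sum_{j \neq i} y_j = \yy(V \setminus \{i\}) \leq F(V \setminus \{i\}) \leq \fmax$, while the equality $\yy(V) = F(V)$ combined with the nonnegativity assumption $F(V) \geq 0$ gives
\begin{equation*}
y_i \;=\; \yy(V) - \yy(V \setminus \{i\}) \;=\; F(V) - \yy(V \setminus \{i\}) \;\geq\; 0 - \fmax \;=\; -\fmax.
\end{equation*}

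Combining the two bounds yields $|y_i| \leq \fmax$ for every $i \in V$, and therefore
\begin{equation*}
\|\yy\|_1 \;=\; \sum_{i \in V} |y_i| \;\leq\; n\fmax \;\leq\; 2n\fmax,
\end{equation*}
which establishes the desired inequality. There is essentially no obstacle here: the argument is a direct application of the two types of defining constraints of $B(F)$ (subset inequalities and the equality on $V$), together with nonnegativity of $F$. One could in fact obtain a sharper $\|\yy\|_1 \leq 2\fmax$ bound by partitioning $V$ into positive and negative coordinates of $\yy$ and writing $\|\yy\|_1 = 2\yy(P) - F(V)$ with $\yy(P) \leq F(P) \leq \fmax$, but the cruder coordinate-wise version already suffices for the stated claim.
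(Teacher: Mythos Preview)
Your proof is correct. Both your argument and the paper's rely on the same basic ingredients---the singleton inequalities $y_i \leq F(\{i\})$ and the equality $\yy(V)=F(V)$---but you handle the negative part differently: you bound each coordinate individually via the complement, obtaining $y_i = F(V) - \yy(V\setminus\{i\}) \geq -\fmax$ and hence $\|\yy\|_1\leq n\fmax$, whereas the paper partitions $V$ into positive and negative coordinates and writes $\|\yy\|_1 = 2\sum_{u:\,y_u\geq 0}y_u - F(V)\leq 2n\fmax$. Your route actually yields a tighter constant (and your parenthetical observation about the $2\fmax$ bound is correct and sharper still), but the difference is minor and both arguments are equally elementary.
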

\begin{proof}
By definition of $B(F)$, for all $u\in V$ we have $w_u \leq F(\{u\}) \leq \fmax$, so
$\sum\limits_{u\in V:w_u\geq 0} w_u \leq n \fmax$.
Also, $\sum\limits_{u\in V} w_u = F(V)$, so we conclude that
\begin{align*}
\|\yy\|_1 
& = \sum\limits_{u\in V:w_u\geq 0} w_u - \sum\limits_{u\in V:w_u< 0} w_u  \\
& = 2 \sum\limits_{u\in V:w_u\geq 0} w_u - F(V) \\
& \leq 2 n \fmax - F(V) \\
& \leq 2 n \fmax 
\end{align*}
\end{proof}
\begin{lemma}
For any submodular function $F:2^V\rightarrow\mathbb{R}_{\geq 0}$, $F(S)\leq F_{\max}$ for all $S\subseteq V$,
and function $\psi:\mathbb{R}^n\rightarrow\mathbb{R}$ satisfying the conditions of Definition~\ref{defn:conditions}
we have that 
\begin{align*}
\underset{\yy\in B(F)}{\max}\, \|\nabla\psi^*(-\yy)\|_1 \leq \frac{2n\fmax}{\str} + \|\nabla\psi^*(\zerov)\|_1\,.
\end{align*}
\label{lem:gradient_ub}
\end{lemma}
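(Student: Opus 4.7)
The plan is to exploit the separability of $\psi^*$ together with the fact that strong convexity of $\psi$ translates to smoothness (hence Lipschitz gradient) of $\psi^*$, and then apply Lemma~\ref{lem:y_ub} to bound $\|\yy\|_1$.

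First I would note that since $\psi(\xx) = \sum_{i\in V}\psi_i(x_i)$ is coordinatewise separable, so is its Fenchel conjugate: $\psi^*(\zz) = \sum_{i\in V}\psi_i^*(z_i)$, and therefore $\nabla\psi^*(\zz)_i = (\psi_i^*)'(z_i)$. In particular, $\nabla\psi^*(-\yy)_i = (\psi_i^*)'(-w_i)$ and $\nabla\psi^*(\zerov)_i = (\psi_i^*)'(0)$.

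Next I would invoke the standard duality between strong convexity and smoothness (see e.g.~\cite{kakade2012regularization}): since Assumption~\ref{defn:conditions} gives that each $\psi_i$ is $\str$-strongly convex, each conjugate $\psi_i^*$ is $(1/\str)$-smooth, i.e.\ $(\psi_i^*)'$ is $(1/\str)$-Lipschitz on $\mathbb{R}$. Hence for every $i\in V$,
\begin{align*}
\bigl|(\psi_i^*)'(-w_i)\bigr| \;\leq\; \bigl|(\psi_i^*)'(-w_i)-(\psi_i^*)'(0)\bigr|+\bigl|(\psi_i^*)'(0)\bigr| \;\leq\; \tfrac{1}{\str}|w_i|+\bigl|(\psi_i^*)'(0)\bigr|.
\end{align*}

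Summing this inequality over $i\in V$ yields
\begin{align*}
\|\nabla\psi^*(-\yy)\|_1 \;\leq\; \tfrac{1}{\str}\|\yy\|_1+\|\nabla\psi^*(\zerov)\|_1.
\end{align*}
Finally, applying Lemma~\ref{lem:y_ub} to bound $\|\yy\|_1\leq 2n\fmax$ for any $\yy\in B(F)$ and taking the max over $\yy\in B(F)$ gives the desired bound. No step here is a real obstacle; the only subtlety is remembering to invoke the strong-convexity/smoothness duality in the correct direction (strong convexity of $\psi$ yields smoothness, not strong convexity, of $\psi^*$), which is exactly what justifies the Lipschitz bound on $(\psi_i^*)'$.
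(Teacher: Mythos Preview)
Your proof is correct and follows essentially the same approach as the paper: triangle inequality, the $(1/\str)$-smoothness of $\psi^*$ inherited from the $\str$-strong convexity of $\psi$, and then Lemma~\ref{lem:y_ub}. The paper's version is just a terser three-line display that applies the triangle inequality and Lipschitz bound directly at the vector level rather than coordinate-by-coordinate, but the content is identical.
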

\begin{proof}
\begin{align*}
 \|\nabla\psi^*(-\yy)\|_1 
& \leq \|\nabla\psi^*(-\yy) - \nabla \psi^*(\zerov)\|_1 + \|\nabla\psi^*(\zerov)\|_1\\
& \leq \frac{1}{\str} \|\yy\|_1 + \|\nabla\psi^*(\zerov)\|_1\\
& \leq \frac{2n \fmax}{\str} + \|\nabla\psi^*(\zerov)\|_1\,,
\end{align*}
where we used the triangle inequality, the $\frac{1}{\str}$-smoothness of the $\psi_u^*$'s, and Lemma~\ref{lem:y_ub}.
\end{proof}
We are now ready to proceed with the proof.
\begin{proof}[Proof of Lemma~\ref{lem:cut2dual}]
We let $\Lambda = \{\lambda_1, \dots, \lambda_k\}$, where $\lambda_1 < \dots < \lambda_k$, and define 
$S(\lambda)$ to be a minimal set in 
\begin{align*}
\underset{S\subseteq V}{\mathrm{argmin}}\, G(S) + \sum\limits_{u\in S} \phi_u'(-\lambda)
\end{align*}
for all $\lambda\in\mathbb{R}$.
Note that this can be equivalently written as 
\begin{align*}
& \underset{S\subseteq V}{\mathrm{argmin}}\, G(S) + \sum\limits_{u\in S} \max\{0,\phi_u'(-\lambda)\}
- \sum\limits_{u\in S} \max\{0,-\phi_u'(-\lambda)\} \\
& = \underset{S\subseteq V}{\mathrm{argmin}}\, G(S) + \sum\limits_{u\in S} \max\{0,\phi_u'(-\lambda)\}
+ \sum\limits_{u\in V\backslash S} \max\{0,-\phi_u'(-\lambda)\} 
- \sum\limits_{u\in V} \max\{0,-\phi_u'(-\lambda)\} \\
& = \underset{S\subseteq V}{\mathrm{argmin}}\, G(S) + \sum\limits_{u\in S} \max\{0,\phi_u'(-\lambda)\}
+ \sum\limits_{u\in V\backslash S} \max\{0,-\phi_u'(-\lambda)\} \\
& = \underset{S\subseteq V}{\mathrm{argmin}}\, c_{\lambda}^+(S\cup \{s\})\,,
\end{align*}
by the definition of the parametric capacities $c_{\lambda}$,
where $c_{\lambda}^+(S\cup\{s\}) = \sum\limits_{\substack{u\in S\cup\{s\}\\v\in V\backslash (S\cup\{s\})}} c_{\lambda}(u,v)$.
Additionally, we denote $\eps = \frac{1}{3 \smo}$ for convenience.
By the second item of Lemma~\ref{lem:mincut_characterization}, we know that 
the minimizer of 
$\underset{\xx}{\min}\, f(\xx) + \phi(\xx)$
is defined as
\begin{align*}
\tx_u^* = -\inf\{\lambda\in\mathbb{R}\ :\ u\in S(\lambda)\}
\,. 
\end{align*}
For all $u\in V$, let $i_u = \underset{i\in[k]}{\mathrm{argmin}}\, \{\lambda_i\ |\ u\in S(\lambda_i)\}$
and
$ \tx_u =
-\lambda_{i_u}\,.
 $
We will first prove that $\|\txx - \txx^*\|_\infty \leq \eps$.
Now, by definition we have that $\tx_u^* \geq \tx_u$. Additionally,
setting $\lambda_0 = -\infty$ for convenience,
we have $u\notin S(\lambda_{i_u-1})$, and $u\in S(-\tx_u^*)$, so 
$S(\lambda_{i_u-1}) \subset S(-\tx_u^*)$. 
By the first item of Definition~\ref{d:eps-pmstc}, this implies that 
\begin{align*}
-\tx_u^* \geq \lambda_{i_u} - \eps = -\tx_u - \eps \Leftrightarrow \tx_u^* \leq \tx_u + \eps\,.
\end{align*}
Therefore, we have concluded that $\left|\tx_u - \tx_u^*\right| \leq \eps$ for all $u\in V$, i.e. $\|\txx - \txx^*\|_\infty \leq \eps$.

We define a dual solution $\hyy = -\nabla \phi(\txx)$.
We will show that $\tyy^*$ can be retrieved by rounding $\hyy$.
Using the fact that $\phi_u$'s are $\smo$-smooth
and the optimality condition $\tyy^* = -\nabla\phi(\txx^*)$ from Lemma~\ref{lem:primal-dual-conversion-exact}, we get that
\begin{align*}
\|\hyy - \tyy^*\|_\infty 
 = \|\nabla\phi(\txx) - \nabla \phi(\txx^*)\|_\infty 
 \leq \smo \|\txx -\txx^*\|_\infty 
 \leq \smo \eps
= 1 / 3\,.
\end{align*}
On the other hand, by optimality of $\tyy^*$, 
it is a maximizer of $\underset{\yy\in B(G)}{\max}\, \langle \yy, \txx^*\rangle$.
By the first item of Lemma~\ref{lem:mincut_characterization},
there exists a permutation $\pi_1,\dots,\pi_n$ of $V$ such that
$\ty_u^* = G(\{\pi_1,\dots,\pi_u\}) - G(\{\pi_1,\dots,\pi_{u-1}\})$ for $u\in V$.
As $G$ takes integral values, %
we have $\ty_u^*\in \Delta \cdot \mathbb{Z}$ for all $u\in V$, and since $|\hy_u - \ty_u^*| < 1 / 2$, we can exactly recover $\tyy^*$ by rounding
each entry of $\hyy$ to the closest integer.

Our next goal is to compute a $G$-decomposition of $\tyy^*$, which we will do by computing an exact primal solution and then
again applying the first item of Lemma~\ref{lem:mincut_characterization}.
Given $\tyy^*$, we can easily recover the primal optimum $\txx^* = \nabla\phi(-\tyy^*)$.
In order to recover a decomposition $\tyy^* = \sum\limits_{i=1}^r \tyy^{*i}$, we use the well-known fact~\cite{edmonds2003submodular} that
\begin{align*}
\underset{\yy\in B(G)}{\max}\, \langle \yy, \xx\rangle
=
\underset{\yy^i\in B(G_i)}{\max}\, \sum\limits_{i=1}^r \langle \yy^i, \xx\rangle\,,
\end{align*}
so
for any $i\in[r]$, 
$\tyy^{*i}$ necessarily maximizes
\[ \underset{\yy^{i}\in B(G_i)}{\max}\, \langle \yy^{i}, \txx^*\rangle\,. \]
Therefore, by the first item of Lemma~\ref{lem:mincut_characterization}, $\tyy^{*i}$ can be recovered by sorting the entries of $\txx^*$ in decreasing order, such that
$\tx_{\pi_1}^* \geq \tx_{\pi_2}^*\geq \dots\geq \tx_{\pi_n}^*$
for some permutation $\pi$ of $V$, and then setting
\begin{align}
\ty_u^{*i} = %
G_i(\{\pi_1,\dots,\pi_u\}) - G_i(\{\pi_1,\dots,\pi_{u-1}\}) %
\label{eq:greedy_decomposition}
\end{align}
for all $u\in V$.
Note that $\tyy^{*i}$'s are in $\mathbb{Z}^n$.

The runtime is dominated by the computation of the decomposition in
(\ref{eq:greedy_decomposition}), which involves computing prefix cuts for each $G_i$
and by Lemma~\ref{lem:prefix_cuts}
takes time $O\left(\sum\limits_{i=1}^r |V_i|^2\right)$.
Therefore, the total runtime is $O\left(n + \sum\limits_{i=1}^r |V_i|^2\right)$. 
\end{proof}

\begin{lemma}[Computing all prefix cut values]
\label{lem:prefix_cuts}
Given a graph $G(V,E,c)$ with $V = \{1,2,\dots,n\}$, we can compute the values 
$c^+([u])$ for all $u\in[n]$ in time $O(n^2)$.
\end{lemma}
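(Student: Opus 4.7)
The plan is to compute all $n$ prefix cut values via a single left-to-right sweep over the vertices, maintaining the running cut value incrementally. We initialize with $c^+([0]) = 0$, which is the cut value of the empty set. Then, for $u = 1, \ldots, n$ in order, we derive $c^+([u])$ from $c^+([u-1])$ by accounting for the effect of moving vertex $u$ from the ``outside'' of the cut to the ``inside''.

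The key recurrence we would use is
\[
c^+([u]) = c^+([u-1]) + \sum_{j > u} c(u, j) - \sum_{i < u} c(i, u).
\]
The first sum captures edges $(u, j)$ which now cross the cut from $u \in [u]$ to $j \notin [u]$, and the second sum removes edges $(i, u)$ which previously crossed (from $i \in [u-1]$ to $u \notin [u-1]$) but have now become internal to $[u]$. Correctness of the recurrence follows from a routine case split on an arbitrary directed edge $(a,b)$: its contribution to the difference $c^+([u]) - c^+([u-1])$ is nonzero only when $a = u$ with $b > u$, or $b = u$ with $a < u$, and these two cases are precisely what the two sums enumerate. The remaining cases $\{a,b\} \cap \{u\} = \emptyset$ or both endpoints $\leq u-1$ or both $\geq u$ leave the edge either entirely inside or entirely outside both cuts.

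Evaluating the two sums for a fixed $u$ takes $O(n)$ time by iterating over the at most $n-1$ possible partners of $u$ (treating non-edges as zero capacity), so the total running time is $O(n \cdot n) = O(n^2)$, matching the claim. There is no real obstacle here; the statement is essentially the standard observation that a cut value can be maintained incrementally as vertices are moved across the cut one at a time. The only subtlety worth flagging is that since the graph is directed, the two boundary contributions---outgoing edges to the outside and incoming edges from the inside---must be tracked with opposite signs, and one should be careful not to double-count or sign-flip.
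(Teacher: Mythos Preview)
Your proposal is correct and follows essentially the same approach as the paper: the paper's proof uses the identical recurrence $c^+([u]) = c^+([u-1]) + \sum_{v=u+1}^n c_{uv} - \sum_{v=1}^{u-1} c_{vu}$ and the same $O(n)$-per-step, $O(n^2)$-total accounting. Your write-up is in fact slightly more detailed than the paper's, as you explicitly justify the recurrence via the edge case split.
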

\begin{proof}
We note that $c^+(\emptyset) = 0$ and 
for any $u \geq 1$ we have 
\begin{align}
c^+([u]) = c^+([u-1]) + \sum\limits_{v=u+1}^n c_{uv} - \sum\limits_{v=1}^{u-1} c_{vu}\,. 
\label{eq:prefix_cut}
\end{align}
Therefore $c^+([u])$ can be computed in $O(n)$ given $c^+([u-1])$. As we apply (\ref{eq:prefix_cut}) $n$ times,
the total runtime is $O(n^2)$.
\end{proof}

\subsection{Proof of Lemma~\ref{lem:progress_lemma}}
\label{proof:lem_progress_lemma}

We first prove the following lemma, which helps us bound the range of parameters for parametric min $s,t$-cut.
\begin{lemma}
Consider a graph $G(V\cup\{s,t\},E,c\geq \zerov)$ %
and a function $\phi(\xx) = \sum\limits_{u\in V} \phi_u(x_u)$ that satisfies Assumption~\ref{defn:conditions}.
Additionally, let $G(S) = c^+(S\cup \{s\})$ for all $S\subseteq V$ be the cut function associated with the graph.
For any $\lambda\in\mathbb{R}$, we set 
$S(\lambda)$ to be the smallest set that minimizes
\[ \underset{S\subseteq V}{\mathrm{min}}\, G(S) + \sum\limits_{u\in S} \phi_u'(-\lambda)\,. \]
Let $\rho = \underset{u\in V}{\max}\, |\phi_u'(0)|$ and 
$G_{\max} = \underset{S\subseteq V}{\max}\, G(S)$.
Then,
$S(\lambda_{\min}) = \emptyset$ and $S(\lambda_{\max}) = V$, where
$\lambda_{\min} = -2\frac{\rho + G_{\max}}{\str}$ and $\lambda_{\max} = 2\frac{\rho + G_{\max}}{\str}$.
\label{lem:lambda_range}
\end{lemma}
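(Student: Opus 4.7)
The plan is to use strong convexity of each $\phi_u$ together with the bound $|\phi_u'(0)|\leq \rho$ to show that $\phi_u'(-\lambda_{\min})$ is sufficiently large positive and $\phi_u'(-\lambda_{\max})$ is sufficiently large negative that the parametric penalty dominates any possible swing of $G$, forcing $\emptyset$ and $V$ respectively to be the unique minimizers.

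First, I would record the following one-sided consequences of $\str$-strong convexity: since $\phi_u''\geq \str$ everywhere, for every $x\geq 0$ we have $\phi_u'(x)\geq \phi_u'(0)+\str x\geq -\rho+\str x$, and for every $x\leq 0$ we have $\phi_u'(x)\leq \phi_u'(0)+\str x\leq \rho+\str x$. Plugging in $x=-\lambda_{\min}=2(\rho+G_{\max})/\str\geq 0$ yields $\phi_u'(-\lambda_{\min})\geq \rho+2G_{\max}$, and plugging in $x=-\lambda_{\max}=-2(\rho+G_{\max})/\str\leq 0$ yields $\phi_u'(-\lambda_{\max})\leq -\rho-2G_{\max}$.

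Next, I would compare objective values. Because $c\geq 0$, the cut function $G$ satisfies $0\leq G(S)\leq G_{\max}$ for every $S\subseteq V$. For any nonempty $S\subseteq V$,
\begin{align*}
\Bigl(G(S)+\sum_{u\in S}\phi_u'(-\lambda_{\min})\Bigr)-G(\emptyset)
\;\geq\; -G_{\max}+|S|(\rho+2G_{\max})\;\geq\; \rho+G_{\max}\;>\;0,
\end{align*}
so $\emptyset$ strictly beats every nonempty $S$, and hence is the unique (and therefore smallest) minimizer at $\lambda_{\min}$. Symmetrically, for any $S\subsetneq V$,
\begin{align*}
\Bigl(G(V)+\sum_{u\in V}\phi_u'(-\lambda_{\max})\Bigr)-\Bigl(G(S)+\sum_{u\in S}\phi_u'(-\lambda_{\max})\Bigr)
\;\leq\; G_{\max}-|V\setminus S|(\rho+2G_{\max})\;\leq\; -\rho-G_{\max}\;<\;0,
\end{align*}
so $V$ strictly beats every proper subset, giving $S(\lambda_{\max})=V$.

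There is no real obstacle here; the only thing to be careful about is the direction of the monotonicity of $\phi_u'$ when crossing $0$, and the sign convention $-\lambda$ in the penalty (which is why $\lambda_{\min}$ carries a minus sign and $\lambda_{\max}$ a plus sign). The choice of the factor $2$ in the definition of $\lambda_{\min},\lambda_{\max}$ is exactly what is needed so that the penalty contribution from a single element ($\rho+2G_{\max}$) already absorbs the worst-case swing $G_{\max}$ of the cut function and still leaves a strictly positive (resp. negative) margin of $\rho+G_{\max}$.
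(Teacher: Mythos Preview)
Your proof is correct and follows essentially the same approach as the paper's: both use $\str$-strong convexity to bound $\phi_u'(-\lambda_{\min})$ from below and $\phi_u'(-\lambda_{\max})$ from above, then compare objective values to show that $\emptyset$ (respectively $V$) strictly beats every other set. The paper's $\lambda_{\max}$ case is written slightly differently (it adds and subtracts $\sum_{u\in V}\phi_u'(-\lambda_{\max})$ before bounding), but the underlying computation is identical to yours.
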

\begin{proof}
We first note that by the $\str$-strong convexity of the $\phi_u$'s,
and since $-\lambda_{\min} > 0 > -\lambda_{\max}$, we have that
\[ \phi_u'(-\lambda_{\min}) \geq \phi_u'(0) + \str |\lambda_{\min}| \,.\]
and
\[ \phi_u'(-\lambda_{\max}) \leq \phi_u'(0) - \str |\lambda_{\max}| \,.\]
Therefore for any $\emptyset \neq S\subseteq V$ we have
\begin{align*}
 G(S) + \sum\limits_{u\in S} \phi_u'(-\lambda_{\min})
& \geq G(S) + \sum\limits_{u\in S} (\phi_u'(0) + \str |\lambda_{\min}|)\\
& \geq G(S) - |S| \rho + |S| \str |\lambda_{\min}|\\
& = G(S) - |S| \rho + 2|S| \str \frac{\rho + G_{\max}}{\str}\\
& > G(S) + G_{\max}\\
& \geq G(\emptyset)\,,
\end{align*}
where we used the fact that $G(S) \geq 0$ and $G(\emptyset) \leq G_{\max}$,
so $S(\lambda_{\min}) = \emptyset$. Similarly, for any $S\subset V$ we have
\begin{align*}
 G(S) + \sum\limits_{u\in S} \phi_u'(-\lambda_{\max})
& =  G(S) + \sum\limits_{u\in V} \phi_u'(-\lambda_{\max}) - \sum_{u\in V\backslash S} \phi_u'(-\lambda_{\max})\\
& \geq  G(S) + \sum\limits_{u\in V} \phi_u'(-\lambda_{\max}) - \sum_{u\in V\backslash S} \left(\phi_u'(0) - \str |\lambda_{\max}|\right)\\
& \geq  G(S) + \sum\limits_{u\in V} \phi_u'(-\lambda_{\max}) + |V\backslash S| \left(\str |\lambda_{\max}| - \rho\right)\\
& =  G(S) + \sum\limits_{u\in V} \phi_u'(-\lambda_{\max}) + |V\backslash S| \left(2\str \frac{\rho + G_{\max}}{\str} - \rho\right)\\
& > G(S) + \sum\limits_{u\in V} \phi_u'(-\lambda_{\max}) + G_{\max}\\
& \geq G(V) + \sum\limits_{u\in V} \phi_u'(-\lambda_{\max}) \,,
\end{align*}
where we used the fact that $G(S) \geq 0$ and $G(V) \leq G_{\max}$,
so $S(\lambda_{\max}) = V$.
\end{proof}

We are now ready for the proof.
\begin{proof}[Proof of Lemma~\ref{lem:progress_lemma}]
We first shift the polytope $B(F)$ so that $\yy$ is translated to $\zerov$.
Specifically, for all $S\subseteq V$,
we let $\hF(S) = F(S) - \yy(S)$ 
and $\hF_i(S) = F_i(S) - \yy^i(S)$ for all $i\in[r]$.
As we are just subtracting a linear function, $\hF$ and the $\hF_i$'s are still submodular functions,
and $B(\hF) = B(F) - \yy$, $B(\hF_i) = B(F_i) - \yy^i$ for all $i\in[r]$.
Note that $\yy^i\in B(F_i)$ implies that the $\hF_i$'s (and thus also $\hF$) are non-negative, since
\[ \hF_i(S) = F_i(S) - \yy(S) \geq 0\,, \]
and additionally 
$\hF_i(\emptyset) = F_i(\emptyset) = 0$ and
$\hF_i(V_i) = F_i(V_i) - \yy^i(V_i) = 0$ for all $i\in[r]$
(also implying $\hF(\emptyset) = \hF(V) = 0$).

We run the algorithm from Lemma~\ref{lem:approx-cut} on the 
$\hF_i$'s to obtain directed graphs $G_i(V,E,c^i\geq 0)$
whose
($V_i$-restricted)
cut functions $G_i(S) = c^{i+}(S)$ $\C$-approximate $\hF_i(S)$, where $\C=\underset{i\in[r]}{\max}\, \{|V_i|^2/4+|V_i|\}$.
More specifically,
\begin{align}
\frac{1}{\C} \hF_i(S) \leq G_i(S) \leq \hF_i(S) \text{ for all $S\subseteq V_i$}, \ G_i(V_i) = \hF_i(V_i)
\end{align}
and
\begin{align}
\frac{1}{\C} \left(B(F_i) - \yy^i\right) = \frac{1}{\C} B(\hF_i) \subseteq B(G_i) \subseteq B(\hF_i) = B(F_i) - \yy^i\,,
\label{eq:polytope_approx}
\end{align}
We also define the graph $G(V,E,c\geq 0)$, where $c = \sum\limits_{i=1}^r c^i$ and has cut function
$G(S) = \sum\limits_{i=1}^r G_i(S)$ for all $S\subseteq V$. Then, 
\begin{align}
\frac{1}{\C} \hF(S) \leq G(S) \leq \hF(S) \text{ for all $S\subseteq V$}, \ G(V) = \hF(V)
\end{align}
and
\begin{align}
\frac{1}{\C} \left(B(F) - \yy\right) \subseteq B(G) \subseteq B(F) - \yy\,.
\label{eq:polytope_approx_sum}
\end{align}

We absorb the linear term that we subtracted from $F$ into
the parametric function. Concretely, we define $\phi$ as 
$\phi(\xx) = \psi(\xx) + \langle \yy, \xx\rangle$ for all $\xx\in\mathbb{R}^n$.
It is easy to see that $\phi(\xx)$ is coordinate-wise separable, as
$\phi(\xx) = \sum\limits_{u\in V}\phi_u(x_u)$
where
$\phi_u(x_u) = \psi_u(x_u) + w_u x_u$ for all $u\in V$
and that it satisfies Assumption~\ref{defn:conditions}, since
$\phi_u''(x_u) = \psi_u''(x_u)$ and 
\[ |\phi_u'(0)| = |\psi_u'(0) + w_u| \leq |\psi_u'(0)| + F(\{u\}) \leq |\psi_u'(0)| + \fmax = n^{O(1)}\,. \]
Additionally, the Fenchel dual of $\phi_u$ is a shifted version of $\psi_u$, i.e.
\[ \phi_u^*(z) = \underset{y\in\mathbb{R}}{\max}\, zy - \phi(y) = 
\underset{w\in\mathbb{R}}{\max}\, zy - \psi(y) - w_u y = 
\underset{w\in\mathbb{R}}{\max}\, (z-w_u) y - \psi(y) = 
\psi_u^*(z-w_u)\,.
\]
We will now run the algorithm from Lemma~\ref{lem:cut2dual} on graphs $G_i$
and parametric function $\phi$ to obtain a dual solution vector $\tyy$.
We note that, as $F_i$'s and $\yy^i$'s take integer values, $G_i$'s take integer values too.
This algorithm takes a $\frac{1}{3\smo}$-approximate parametric min $s,t$-cut as input, which we first compute using
Theorem~\ref{t:aprx-pmc}, with range of parameters $[\lambda_{\min}, \lambda_{\max}]$ given by Lemma~\ref{lem:lambda_range}. 
We have 
\begin{align*}
\lambda_{\max} - \lambda_{\min} 
& = 
4 \frac{\underset{u\in V}{\max}\, |\phi_u'(0)| + \underset{S\subseteq V}{\max}\, G(S)}{\str}\\
&\leq 4 \frac{\underset{u\in V}{\max}\, |\phi_u'(0)| + \hF_{\max}}{\str}\\
&\leq 4 \frac{\underset{u\in V}{\max}\, |\phi_u'(0)| + F_{\max} + \|\yy\|_1}{\str}\\
&\leq 4 \frac{\underset{u\in V}{\max}\, |\phi_u'(0)| + (2n+1)\fmax}{\str}\\
& = n^{O(1)}\,,
\end{align*}
where we used Lemma~\ref{lem:y_ub}
and the fact that the
quantities $|\phi_u'(0)|, \fmax, \frac{1}{\str}$ are bounded by $n^{O(1)}$ (Assumption~\ref{defn:conditions}).

Based on Theorem~\ref{t:aprx-pmc}, the time to obtain the $\frac{1}{3\smo}$-approximate parametric min $s,t$-cut
will be 
\[ O\left(\tmf(n, |E'|\log n) \log \frac{\lambda_{\max}-\lambda_{\min}}{1 / 3\smo}\log n\right) = 
\tO{\tmf\left(n, n + \sum\limits_{i=1}^r |V_i|^2\right)}
\,. \]

So, by applying Lemma~\ref{lem:cut2dual}, we obtained a dual solution $\tyy = \sum\limits_{i=1}^r \tyy^i$
for which
$\tyy^i \in B(G_i)$ and 
\begin{align}
\tyy 
= \underset{\tyy\in B(G)}{\mathrm{argmin}}\, \phi^*(-\tyy)
= \underset{\tyy\in B(G)}{\mathrm{argmin}}\, \psi^*(-\yy-\tyy)\,.
\label{eq:inner_guarantee}
\end{align}
\ifx 0
\begin{align}
\phi^*(-\tyy) & \leq \underset{\tyy^*\in B(G)}{\min}\, \phi^*(-\tyy^*) + 
\underset{\zz\in B(G)}{\max}\, \|\nabla\phi^*(-\zz)\|_1\cdot
\Delta r + \frac{n\Delta^2 r^2}{2\str}\\
\Rightarrow \psi^*(-\yy -\tyy) & \leq \underset{\tyy^*\in B(G)}{\min}\, \psi^*(-\yy-\tyy^*) + \underset{\zz\in B(G)}{\max}\, \|\nabla\psi^*(-\yy-\zz)\|_1 \cdot \Delta r + \frac{n\Delta^2 r^2}{2\str}\notag\\
& \leq \underset{\tyy^*\in B(G)}{\min}\, \psi^*(-\yy-\tyy^*) + \underset{\zz\in B(F)}{\max}\, \|\nabla\psi^*(-\zz)\|_1
\cdot \Delta r + \frac{n\Delta^2 r^2}{2\str}
\notag\\
& \leq \underset{\tyy^*\in B(G)}{\min}\, \psi^*(-\yy-\tyy^*) + \left(\frac{2n\fmax}{\str} + \|\nabla\psi^*(\zerov)\|_1\right)
\Delta r + \frac{n\Delta^2 r^2}{2\str}\notag\\
& \leq \underset{\tyy^*\in B(G)}{\min}\, \psi^*(-\yy-\tyy^*) + \left(\frac{\fmax}{\str} + \frac{\|\nabla\psi^*(\zerov)\|_1}{2 n}\right)
\eps
+ \frac{\eps^2}{8\str n}
\,,
\label{eq:inner_guarantee}
\end{align}
\fi

For all $u\in V$ and $i\in[r]$, we 
set $w_u'^i = w_u^i + \ty_u^i$
and $w_u' = \sum\limits_{i=1}^r w_u'^i$. %
Note that these quantities are still integral.
We now prove the two parts of the lemma statement (feasibility
and optimality) separately.

\noindent {\bf Feasibility. } 
For any $i\in [r]$ %
and $S\subseteq V_i$, 
we have that 
\begin{align*}
\yy'^i(S) 
 = \yy^i(S) + \tyy^i(S) %
 \leq \yy^i(S) + G_i(S) 
 \leq \yy^i(S) + \hF_i(S) 
 = \yy^i(S) + F_i(S) - \yy^i(S) 
 = F^i(S)\,,
\end{align*}
where the second inequality follows from %
the fact that $\tyy^{*i}\in G_i$.
Similarly, we have 
\begin{align*}
\yy'^i(V_i) 
& = \yy^i(V_i) + \tyy^i(V_i) 
 = \yy^i(V_i) + G_i(V_i) 
 = \yy^i(V_i) + \hF_i(V_i) 
 = F_i(V_i) \,.
\end{align*}
So we conclude that $\yy'^i\in B(F_i)$ for all $i\in[r]$. %

\noindent {\bf Optimality. }
Let's set $h(z) := \psi^*(-z)$ for all $z\in\mathbb{R}^n$ for notational convenience, so that our goal is to prove that
\begin{align*}
h(\yy') - h(\yy^*) \leq \left(1 - \frac{1}{\C}\right) \left(h(\yy) - h(\yy^*)\right) \,. %
\end{align*}
We will prove a slightly different statement
where $\yy'$ is replaced by a solution on the path from $\yy$ to $\yy^*$,
which is enough because $\yy'$ is 
optimal in $\yy + B(G)$.
Concretely, 
we set $\oyy = \frac{1}{\C} \left(\yy^*-\yy\right)$ and instead will prove 
\begin{align*}
h(\yy + \oyy) - h(\yy^*) \leq \left(1 - \frac{1}{\C}\right) \left(h(\yy) - h(\yy^*)\right)\,.
\end{align*}
Now, $h$ is a convex function, so applying convexity twice we have
\begin{equation}
\begin{aligned}
h(\yy) & \geq h(\yy + \oyy) + \langle \nabla h(\yy + \oyy), - \oyy\rangle \\
& = h(\yy + \oyy) - \frac{1}{\C} \langle \nabla h(\yy + \oyy), \yy^* - \yy\rangle
\label{eq:convex1}
\end{aligned}
\end{equation}
and
\begin{equation}
\begin{aligned}
h(\yy^*) 
&\geq h(\yy + \oyy) + \langle \nabla h(\yy +  \oyy), \yy^* - \yy - \oyy\rangle\\
& = h(\yy + \oyy) + \frac{\C-1}{\C}\langle \nabla h(\yy +  \oyy), \yy^* - \yy\rangle\,.
\label{eq:convex2}
\end{aligned}
\end{equation}
We divide (\ref{eq:convex2}) by $\C-1$ and then sum it with (\ref{eq:convex1}), getting
\begin{align}
& h(\yy) \notag
+ \frac{1}{\C-1} h(\yy^*) 
\geq h(\yy + \oyy)+ \frac{1}{\C-1} h(\yy + \oyy)\,.
\end{align}
Equivalently,
\begin{align}
\frac{\C}{\C-1} \left(h(\yy + \oyy) - h(\yy^*)\right)\notag
\leq 
 h(\yy) 
- h(\yy^*) \,.
\end{align}
So by rearranging,
\begin{align}
  h(\yy + \oyy) - h(\yy^*)
\leq \left(1 - \frac{1}{\C}\right)\left(
 h(\yy) 
- h(\yy^*) \right)\,.
\end{align}
Thus we can equivalently write that
\begin{align}
\psi^*(-\yy - \oyy) - \psi^*(-\yy^*)
\leq \left(1 - \frac{1}{\C}\right)\left(
 \psi^*(-\yy) 
- \psi^*(-\yy^*) \right)\,.
\label{eq:convex12}
\end{align}
Now, since by (\ref{eq:polytope_approx_sum}) we have $\frac{1}{\C} (B(F) - \yy) \subseteq B(G)$ and $\yy^* \in B(F)$,
we have $\oyy = \frac{1}{\C} (\yy^* - \yy) \in B(G)$. Combinging the fact that $\tyy$ is a %
minimizer of $\underset{\tyy^*\in B(G)}{\min}\, \psi^*(-\yy-\tyy^*)$ with
(\ref{eq:inner_guarantee}) and the fact that $\oyy \in B(G)$, we have
\begin{align*}
\psi^*(-\yy')
= \psi^*(-\yy -\tyy)
\leq
\psi^*(-\yy -\oyy) \,. %
\end{align*}
Combining this with (\ref{eq:convex12}), we obtain the desired claim:
\begin{align}
& \psi^*(-\yy') - \psi^*(-\yy^*)
\leq \left(1 - \frac{1}{\C}\right)\left(
 \psi^*(-\yy) 
- \psi^*(-\yy^*) \right)\,.
\label{eq:progress_intermediate}
\end{align}
The running time to compute graphs $G_i$ is $O\left(\sum\limits_{i=1}^r |V_i|^2 \opto_i\right)$ 
and the time to run the algorithm from Lemma~\ref{lem:cut2dual} is $\tO{n + \sum\limits_{i=1}^r |V_i|^2}$,
so the total running time is 
\[ \tO{\sum\limits_{i=1}^r |V_i|^2 \opto_i + \tmf\left(n, n + \sum\limits_{i=1}^r |V_i|^2\right) }\,. \]
\end{proof}

\begin{algorithm}[h!]
\caption{Finding all minimum cuts}
\begin{algorithmic}[1]
\vspace{0.2cm}
\STATE {\bfseries function }{\textsc{\textsc{FindMinCuts}$(G(V,E,c),\phi, \eps)$}}
\Indent
\STATE $V'=V\cup\{s,t\}$, $E' = E\cup\underset{u\in V}{\bigcup} \{(u,t)\}\cup\underset{u\in V}{\bigcup} \{(s,u)\}$
\STATE {Define parametric capacities
\begin{align*}
c_\lambda(u,v) = \begin{cases}
	\max\{0,\phi_u'(-\lambda)\} & \text{if $u\in V, v=t$}\\
	\max\{0,-\phi_u'(-\lambda)\} & \text{if $u=s, v\in V$}\\
c_{uv} & \text{otherwise}
\end{cases}
\end{align*}
}
\STATE Set $ (\Lambda,\tau) = $\textsc{ApxParametricMinCut}$(G'(V',E'),c_{\lambda}, \lambda_{\min} = -n^{O(1)}, \lambda_{\max} = n^{O(1)})$ %
	\STATE Set $\ty_u = -\phi_u'^*(-\tau(u))$ for all $u\in V$ %
\RETURN $\tyy$
\EndIndent
\end{algorithmic}
\label{alg:findmincuts}
\end{algorithm}

\bibliographystyle{abbrv}
\bibliography{main}

\end{document}